\newcommand\blfootnote[1]{

  \renewcommand\thefootnote{}\footnote{#1}%
  \addtocounter{footnote}{-1}%

}
\newcommand{\rev}{\text{Rev}}
\newcommand{\prob}{\text{Pr}}
\newcommand{\nind}{\noindent}
\newtheorem{assumption}{Assumption}
	\newcommand{\rej}{\mathcal R}
	\newcommand{\acc}{\mathcal A}
	\newcommand{\Rev}{\text{Rev}}
	\newcommand{\sophprob}{\mu}
	\newcommand{\ca}{Property~\ref{prop:a}}
	\newcommand{\cb}{Property~\ref{prop:b}}
	\newcommand{\h}{\mathfrak{h}}
\begin{document}

\title{Repeated Sales with Heterogeneous Buyer Sophistication}

\author{
Rishi Patel \inst{1} 
\and Emmanouil Pountourakis \inst{1} 
\and Sam Taggart \inst{2} 
}

\institute{
Drexel University\\
\email{\{rmp333,ep556\}@drexel.edu}
\and
Oberlin College\\
\email{sam.taggart@oberlin.edu}
}

\maketitle

\begin{abstract}

This paper considers behavior-based price discrimination in the repeated sale of a non-durable good to a single long-lived buyer, by a seller without commitment power. When the buyer fully anticipates the seller's adaptive pricing, the predictions of prior work are bleak: by strategically reducing demand, the buyer can keep prices --- and thus the seller's revenue --- low. In practice, however, buyers vary in their awareness of data tracking, ability to optimize dynamically, and patience. In this work, we therefore study repeated sales with a mixed population of forward-looking ``sophisticated'' buyers and myopic ``naive'' buyers. Buyer heterogeneity introduces additional subtleties to the learning problem, as sophisticated buyers can now feign naivete to potentially secure lower prices. We investigate the impact of these new dynamics on the seller's ability to learn about the buyer and exploit this learning for revenue.

We obtain conclusions that differ dramatically with the time horizon of the interactions. To understand short time horizons, we analyze a two-period model, and find that the strategic demand reduction observed with fully sophisticated buyers is robust to the introduction of naive types. In fact, despite the inability of naive buyers to game the pricing algorithm, their introduction can further harm the seller's revenue, due to more intense demand reduction overall. We prove this phenomenon occurs quite generally. To understand long time horizons, we consider an infinite-horizon model with time discounting. We find that the extreme demand reduction predicted by previous work does not survive the introduction of naive buyers. Instead, we observe equilibria where the seller learns meaningfully despite the sophisticated buyers' demand reduction. We prove that for a natural family of such equilibria, the seller's revenue is not just high, but approximates the revenue attainable with commitment power, even when the fraction of naive types is vanishingly small.

\end{abstract}
\blfootnote{Emmanouil Pountourakis was partially supported by  NSF CCF 2218813. Samuel Taggart was partially  supported by NSF CCF 2218814.}
\pagenumbering{arabic}

\section{Introduction}
\label{sec:intro}

In markets replete with consumer data, {\em behavior-based price discrimination} is commonplace. Users with accounts on online retail platforms have every purchase recorded, along with many non-purchases, in the form of shopping cart data and view tracking. Browser cookies and side-markets for user information further allow for a degree of personalization that carries between platforms. An especially salient consequence of these tracking capabilities is price personalization. This manifests both as overt price discrimination \citep[e.g.][]{streitfeld2000web}, and more obliquely, e.g.\ as coupons and special offers \citep{woolley1998got}. Similar concerns over ratcheted reserve prices also arise in thin, long-lived markets such as auctions for less-demanded online advertising keywords.

The repeated sale of a non-durable good cleanly captures the issues above.
In many applications of personalized pricing, buyers' values are essentially static over time.
A seller who posts prices may try to learn this static value over multiple interactions, e.g.\ by raising a price that has been accepted or lowering a price that was ignored.
When the buyer behaves myopically, a patient seller can learn the buyer's value and extract it as revenue each round \citep{KL03}.

A strategic customer who is aware they are being tracked may change their purchasing behavior to affect future prices.
In particular, such a buyer may engage in {\em strategic demand reduction}, rejecting early prices to ensure lower prices later.
Any pricing algorithm with learning built in can be exploited this way, harming the seller's revenue. A seller faced with strategic buyers may consequently commit not to learn, or may try to learn slowly enough to discourage demand reduction \citep[e.g.][]{ARS13,mohri2014optimal}. 
Conversely, a seller who is unable to commit to a pricing strategy can paradoxically pay for their ability to learn. 
Such a seller adapts their pricing based on the buyer's behavior while the buyer similarly adapts to deceive the seller.
The resulting equilibria are well-studied, and the common prediction is significant revenue loss, well below what the seller would obtain with a static price \citep{ht88,s93,ilpt17,dps19}.
Intuitively, a strategic buyer can play the seller at present off against the seller in the future, ``bargaining'' for lower prices.

This paper departs from previous work by considering a buyer population with heterogeneous levels of strategic sophistication. 
That is, we assume some buyers are {\em strategically sophisticated}, and plan their purchasing behavior with future prices and utility in mind, while other buyers are {\em strategically naive} (i.e.\ myopic), responding to the current prices with no regard for their future rounds of interaction with the seller.
This heterogeneity could stem from a variety of causes: some buyers may be more aware of data-tracking technology than others; buyers may vary in their ability to optimize, as is the case in ad markets that pit small businesses against large companies with tailored bidding software; or it may just be that buyers have varying patience levels (while still all transacting with the same seller multiple times). 
These practical scenarios lie in between the high-revenue, all-naive world and the low-revenue world of exclusively strategic buyers.

We assume the seller lacks commitment power, and is unable to directly observe the buyer's level of sophistication.
Under these assumptions, a heterogeneous population  exacerbates the seller's difficulties with inference.
Strategic demand reduction becomes even more convincing: a buyer who rejects a price could now be naive, and could truly have a low value. 
On the other hand, because naive buyers won't strategize, they make more willing customers.
Depending on the balance of buyer types in the population, the seller may be differently swayed to tune their prices to one group versus the other.
Predicting equilibrium outcomes involves teasing out these new dynamics.

\subsection{Model and Contributions}

The canonical model for behavior-based price discrimination without commitment is due to \citet{ht88}, which we adjust to introduce heterogeneous buyer sophistication.
At the beginning of the game, a buyer arrives with private, persistent value $v$ and sophistication level $\theta\in\{N,S\}$ drawn jointly from a known distribution.
Each period $t$, the seller offers a price $p_t$, which the buyer either accepts or rejects. 
Naive buyers always accept if and only if $v\geq p_t$.
Sophisticated buyers play a perfect Bayesian equilibrium (PBE) with the seller, to be defined formally in Section~\ref{sec:prelims}. 
Informally, PBE requires both the seller and the sophisticated buyers to best respond sequentially to each others' strategies, combined with beliefs obtained from Bayesian updating.
In other words, the sophisticated buyers maximize their total utility across the game, while the seller's prices in each round must maximize revenue against the {\em mixture} of naive buyers and sophisticated buyers.

We analyze equilibria in two variants of this model: one with a short time horizon and few interactions, and the other with a long horizon and many.
For short time horizons, we consider a two-period model.
For long horizons, we consider an infinite-horizon model, with utilities geometrically discounted by patience parameter $\delta$.
We focus on the high-patience regime ($\delta\geq 1/2$).
In each model, we compare to prior work with fully-sophisticated buyers.
We find that robustness of existing results is quite different between the two models, which we detail below.

\paragraph{Results: Short Time Horizon}

In the two-period model, we assume the buyer's value $v$ and sophistication $\theta\in\{N,S\}$ are drawn independently, with value distribution $F$ and probability of sophistication $\sophprob$.
Our analysis proceeds in two stages.
We begin by fully solving the linear-demand case where $v\sim U[0,1]$, and analyze the equilibrium structure and revenue as the sophistication level $\sophprob$ varies.
The first observation is that equilibrium undergoes a phase transition.
At low values of population sophistication ($\sophprob\leq \overline \mu\approx .63$), the seller targets naive buyers with a descending price schdule and obtains high revenue.
In this ``naive-focused'' regime, the seller's revenue decreases with the introduction of sophisticated buyers, who engage in demand reduction.
At high sophistication levels ($\sophprob\geq \overline \mu$), the seller switches to an ascending price schedule, which targets sophisticated buyers and yields lower revenue.
This ``sophisticated-focused'' regime resembles the fully-sophisticated equilibrium derived by \citet{dps19}, and shows their analysis to be robust.
In fact, these low-revenue predictions are more than robust: our second finding is that adding naive buyers can further decrease the seller's expected revenue below the already-low fully-sophisticated baseline.

We then show that the two main observations from the linear-demand case hold much more generally: for all value distributions satisfying a mild regularity condition, we prove that there exists a neighborhood about full sophistication with ascending prices and low revenue.
Moreover, in this region, the seller's revenue is strictly increasing in the sophistication level $\mu$, despite the tendency of sophisticated buyers to game the seller's pricing algorithm.
We conclude the two-period analysis with several smaller results that confirm that this revenue nonmonotonicity stems from intensified demand reduction, paired with the lack of seller commitment.

\paragraph{Results: Long Time Horizon}
Next, we consider the infinite-horizon, geometrically-discounted model.
With full sophistication and $\delta\geq 1/2$, \citet{dps19} observe the presence of an equilibrium with extreme demand reduction and no price search: when the value distribution is supported at zero, the equilibrium predicts that the seller offers the item for free each day.
Despite the extreme nature of this ``no-learning'' prediction, \citet{ilpt17} further show this equilibrium to uniquely survive a set of ad hoc refinements.
We begin our analysis of the heterogeneous setting with two straightforward observations.
First, the no-learning equilibrium does not survive the introduction of naive buyers.
Second, it is possible to construct other natural equilibria in the presence of naive buyers, which we illustrate for a discretely distributed example.

Motivated by these observations, we go on to prove a broad revenue guarantee for equilibria with heterogeneous buyers:
any equilibrium satisfying two mild properties --- properties which hold for our example --- guarantees the seller revenue which approximates that attainable {\em with commitment power}. 
For any $\delta$ bounded away from $1$, our theorem implies a constant-approximation, significantly improving on the zero-revenue prediction for $\delta\geq 1/2$ with full sophistication.
Notably, the revenue approximation does not depend on the probability the buyer is naive.
In fact, even without naive buyers, both the motivating example and revenue result survive, though the equilibrium fails the refinements of \citet{ilpt17}.
This suggests a competing prediction to the no-learning equilibrium of \citet{dps19} which is both practically-motivated and less dire.
The results for this setting hold even when the buyer's sophistication is correlated with their value.

\subsection{Related Work}

%
This paper extends a long line of work in economics and adjacent areas.
The model of repeated pricing with no commitment studied in this paper originates with \citet{ht88} and \citet{s93}, and later continues with \citet{dps19}, \citet{ilpt17}.
The prevailing conclusion from these papers is that with long time horizons, the seller cannot hope for more than trivial revenue.
This observation is closely related to the literature on durable goods monopoly, initiated by \citet{coase1972durability}, and which notably includes \citet{stokey1981rational}, \citet{gul1986foundations}, \citet{ausubel1989reputation,ausubel1992durable}, \citet{bagnoli1989durable}, and \citet{sobel1983multistage}.
Much of this work focuses on {\em Coasian dynamics,} where the seller competes with themselves in future time periods, resulting in low prices and low revenue. 
Our revenue result for long time horizons suggests that the main intuition from both lines of work is not robust in the practical situation where the buyer population is heterogeneous.

Our short-horizon results, meanwhile, stand in contrast to the work on pricing mechanisms for strategic buyers, including \citet{ARS13}, \citet{drutsa2017horizon,drutsa2018weakly,drutsa2020reserve}, \citet{mohri2014optimal}, and \citet{zhiyanov2020bisection} in computer science, and e.g.\ \citet{board2016revenue} in economics.
The universal starting point for this work is the observation that strategic consumers damage the seller's bottom line.
Our two-period results offer a more nuanced view: with buyer heterogeneity, we show that additional sophisticated buyers can actually improve the seller's revenue.

Many other papers explore related themes in similar models.
\citet{CTW12}, \citet{ali2020voluntary}, \citet{cummings2016strange}, and \citet{li2020transparency} all consider privacy, transparency, or consumer control over their information.
The phenomena we consider are also related to the {\em ratchet effect} \citep{freixas1985planning, baron1984regulation, laffont1988dynamics, weitzman1980ratchet}, where a regulator's choice to update their targets based on past performance introduces dynamic incentive issues for a firm.
Finally, the marketing literature has devoted much attention to the way behavior-based price discrimination affects inter-firm competition in duopolistic models. See e.g.\ \citet{pazgal2008behavior,bimpikis2021data,zhang2011perils,li2016behavior}, as well as the survey of \citet{villas2004price}.

Mechanism design with heterogeneous behavioral types is common to three lines of literature. 
The first is the literature on mechanism design for agents with private, heterogeneous levels of patience.
Notable contributions include \citet{mierendorff2016optimal}, \citet{burkett2021intertemporal}, \citet{pai2013optimal}, and \citet{deb2015dynamic}.
    This work assumes the seller has commitment power.
    Our seller lacks commitment, but otherwise our model can be thought of under this umbrella: a naive buyer is equivalent to a buyer with extremely low patience (and therefore myopic behavior), and a sophisticated buyer is far-looking and patient.
    An interesting future direction is understanding how broadly our results extend into models with less dichotomous patience levels.

Two recent papers explicitly consider Bayesian mechanism design with a combination of forward-looking and myopic buyers, and observe increasing revenue as the number of sophisticated buyers grows \citep{ADMS18,agrawal2019dynamic}.
Two details of their setting renders this observation unsurprising: first, the seller has commmitment power.
Second, they consider a setting where the buyer's value is drawn fresh each period.
Under these circumstances, standard results \citep[e.g.][]{ashlagi2016sequential} show how to extract nearly full surplus from forward-looking buyers using the fact that the sum of their future values is concentrated.
As observed in the literature on pricing to strategic buyers, a persistent buyer value makes similar results impossible, and instead introduces dynamic incentive problems.

Our work is loosely connected to the literature on reputation \citep{kreps1982reputation,kreps1982rational,milgrom1982predation,abreu2007bargaining,watson1996reputation,cramton1992strategic}. 
This work differs from ours in that the high level goal is to enforce particular equilibria of repeated games.
Behavioral types are introduced as a way to give a strategic player a way to bluff credibly as they bargain.
Despite the differing high-level goal, we observe a similar phenomenon.
When our seller loses revenue with the introduction of naive types, it is precisely because these types allow the buyer greater credibility as they bluff low values.
\section{Notation and Preliminaries}
\label{sec:prelims}

\paragraph{Game Description} The timing of the game is as follows.
At time $0$, a buyer is drawn from the population.
A buyer is described by a private value $v$ and a private sophistication level $\theta\in\{S,N\}$ (with $S$ denoting ``sophisticated'' and $N$ ``naive'').
The value $v$ and sophistication $\theta$ are drawn jointly from a prior which is common knowledge.
We think of first drawing $\theta$ with $\mu=\text{Pr}[\theta=S]$, then drawing $v$ from distribution $F^\theta$, for conditional distributions $F^N$ and $F^S$.
At each time $t$, the seller selects a price $p_t$, and the buyer makes a decision $D_t\in\{0,1\}$, indicating whether they wish to purchase the item or not.
The game ends after $T$ rounds: for our purposes; we consider $T=2$ for the short-horizon results and $T=\infty$ for the long-horizon results.
Both players discount their utilities by a factor $\delta\in [0,1]$.
Given a sequence of prices and decisions, utilities are the following:
\begin{align*}
    U_{\text{seller}}(\{p_t\}_{t=1}^T,\{D_t\}_{t=1}^T)=\sum\nolimits_t \delta^{t-1}p_tD_t&& U_{\text{buyer}}(\{p_t\}_{t=1}^T,\{D_t\}_{t=1}^T)=\sum\nolimits_t \delta^{t-1}(v-p_t)D_t.
\end{align*}
Note that the value $v$ is not redrawn between rounds.

\paragraph{Naive Buyers.} When the buyer is naive, they behave as a price taker. 
That is, for a buyer with $\theta = N$, $D_t=\mathbb I(v\geq p_t)$ for $t\in \{0,1\}$.
As naive buyers are nonstrategic, all subsequent discussion of equilibrium should be understood to apply only to sophisticated buyers.

\paragraph{Histories}
A history of play for a buyer describes all private information and past actions that are observable to that player.
For our game, we have:
\begin{itemize}
    \item A round $t$ buyer history $h_t^B$ consists of the buyer's value $v$, the current-round price $p_t$, and all previous prices and decisions, $(p_1,D_1,\ldots,p_{t-1},D_{t-1})$.
    \item A round $t$ seller history $h_t^S$ consists of all previous prices and decisions, $(p_1,D_1,\ldots,p_{t-1},D_{t-1})$.
\end{itemize}

\paragraph{Equilibrium}
Our solution concept is perfect Bayesian equilibrium (PBE).
A PBE consists of strategies $\boldsymbol\sigma$ and beliefs for each period.
For player $\rho\in \{S,B\}$, and time $t$, the strategy $\sigma_t^\rho$ maps the history for $\rho$ in round $t$ to an action in that round.
For the seller, this is a price, and for a sophisticated buyer, this is an accept/reject decision.
Beliefs map histories of play to distributions over private information.
For our purposes, it suffices to consider only the seller's beliefs, as the buyer observes all private information.
In particular, the seller's beliefs in each round will be a distribution over $(v,\theta)$ pairs.
Perfect Bayesian equilibrium imposes two requirements:
\begin{itemize}
    \item Bayesian updating: the seller's beliefs are derived from Bayes' rule whenever possible, given the strategies.
    Beliefs for off-path histories may be arbitrary.
     \item Sequential rationality: Given a history $h_t^\rho$ for player $\rho$, the strategy $\sigma_t^\rho(h_t^\rho)$ maximizes $\rho$'s expected utility given the other player's strategies and $\rho$'s beliefs at $h_t^\rho$.
\end{itemize}

\paragraph{Threshold Equilibria} We impose an additional refinement that selects simple PBE.
We assume that sophisticated buyers' strategies are monotone, in the sense that higher values are more likely to buy in a given round.
In particular, this implies that the buyer's strategy can be summarized by a {\em threshold}: given a round-$j$ buyer history $h_j^B$, there is some $t_j(h_j^B)$ such that $\sigma_j^B(h_j^B)=1$ if and only if $v\geq t_j(h_j^B)$.
Previous work has focused on threshold equilibria as well \citep{dps19,ilpt17}.

\subsection{Short-Horizon Setting: Notational Simplifications}

The short-horizon analysis focuses on the special case of $T=2$ and $\delta=1$, and we further assume for that analysis that the value $v$ and sophistication are drawn independently, so $F^N=F^S=F$.
The seller's initial beliefs are therefore given by $\mu$ and $F$.
Under these assumptions, we can make some additional simplifications and define some further helpful notation.

In general, the seller's beliefs after round $1$ would consist of a correlated joint posterior over $\mu$ and $v$.
With two rounds, however, the seller's second-round decision is a simple monopoly pricing problem, as both naive and sophisticated buyers will act as price takers.
Consequently, it suffices to collapse the second-period beliefs over $\mu$ and $v$ to a posterior belief over $v$.
Moreover, price-taking behavior is trivially a threshold strategy, with $t_2(h_2^B)=p_2$, so we omit subscripts when discussing $t_1$, as it is the only nontrivial threshold.
Given these observations, the following second-period notation is helpful: for truncation point $x$, let $F_{\leq x}(v)=\min(F(v)/F(x),1)$ and $F_{\geq x}(v)=\max((F(v)-F(x))/(1-F(x)),0)$ denote the CDFs of the truncated distributions. Let $\mu$, $p_1$, and $t=t(p_1)$ be the population sophistication, first-round price, and first-round threshold. If the seller sees a rejection, their posterior over $v$ is given by
\begin{equation}
\label{eq:frej}
    F^\rej(v;p_1,t)=\mu_\rej F_{\leq t}(v)+(1-\mu_\rej) F_{\leq p_1}(v),
\end{equation}
where
\begin{align*}
    \mu_\rej=\tfrac{\mu F(t)}{\mu F(t)+(1-\mu)F(p_1)}&&1-\mu_\rej=\tfrac{(1-\mu) F(p_1)}{\mu F(t)+(1-\mu)F(p_1)}.
\end{align*}
Similarly, the seller's posterior on a round-$1$ accept is given by
\begin{equation}
\label{eq:facc}
    F^\acc(v;p_1,t)=\mu_\acc F_{\geq t}(v)+(1-\mu_\acc) F_{\geq p_1}(v),
\end{equation}
where
\begin{align*}
    \mu_\acc=\tfrac{\mu (1-F(t))}{\mu (1-F(t))+(1-\mu)(1-F(p_1))}&&1-\mu_\acc=\tfrac{(1-\mu) (1-F(p_1))}{\mu (1-F(t))+(1-\mu)(1-F(p_1))}.
\end{align*}
The formulas $F^\rej$, $F^\acc$, $\mu^\rej$, and $\mu^\acc$ depend $\mu$, $t$, and $p_1$, which at various points in our arguments will be fixed or varied as appropriate.
We will consequently write these as functions of subsets of these parameters depending on our need.
Context will make dependencies clear.

In round $2$, the seller maximizes their revenue function $R^D(p)=p(1-F^D(p))$ for $D\in\{\rej,\acc\}$.
Further notation will aid in discussing revenue maximization. 
Let $R(p)=p(1-F(p))$ denote the unupdated revenue curve. 
We will use $p^*$ to denote the revenue-maximizing {\em monopoly price}, and $R^*=R(p^*)$ its revenue.
Similarly, we can let $R_{\leq x}(p)=p(1-F_{\leq x}(p))$ and $R_{\geq x}p(1-F_{\geq x}(p))$ denote the revenue curves of truncated distributions, let $p_{\leq x}^*$ and $p_{\geq x}^*$ denote their optimizers, and $R_{\leq x}^*$ and $R_{\geq x}^*$ denote their optimal values.
\section{Short-Horizon Warmup: Equilibrium With Linear Demand}
\label{sec:uniform}

In this section, we give a detailed analysis of equilibrium for the linear-demand case where $v\sim U[0,1]$, for all values of the sophistication parameter $\mu$. 
In particular, we characterize the prices and the buyer's threshold strategy {\em on the equilibrium path}. 
For equilibrium to be well-defined, it is also necessary to define strategies at histories that are off-path --- in this case, off-path choices of first-round price $p_1$.
Due to the extensive casework involved, we defer presentation of the full strategy functions to Appendix~\ref{app:uniform}. 
In that appendix, we also give more detailed analysis of the equilibrium conditions.
In Section~\ref{sec:general}, we will generalize many of our insights far beyond the linear-demand case.

\begin{figure}
     \centering
     \begin{subfigure}[b]{0.45\textwidth}
         \centering
         \includegraphics[scale=0.6]{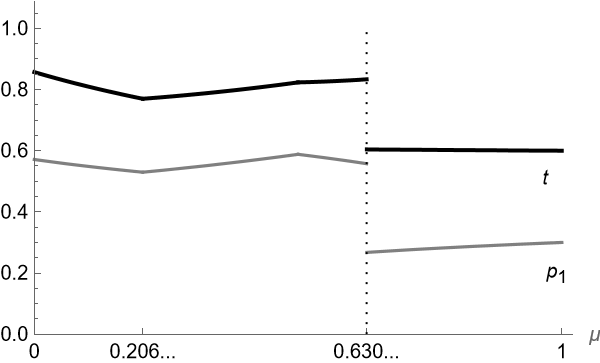}
         \caption{First-round price $p_1$ (gray) and first-round threshold $t$ (black)\\
         \\}
         \label{fig:plot1a}
     \end{subfigure}
     \hfill
     \begin{subfigure}[b]{0.45\textwidth}
         \centering
         \includegraphics[scale=0.6]{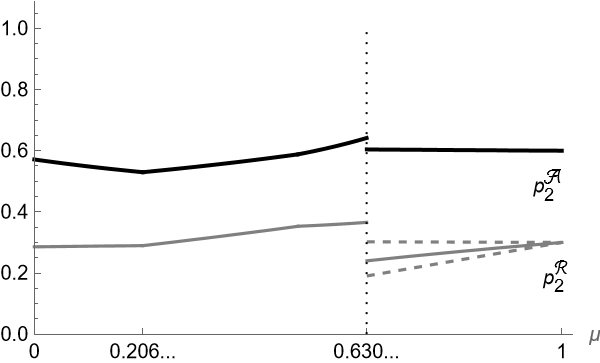}
         \caption{Second round prices on accept (black) and reject (gray). For $\mu\geq 0.630...$, dotted gray lines indicate the support of $p_2^\rej$, and solid gray line the expected value.}
         \label{fig:plot1b}
     \end{subfigure}
     \caption{On-path prices and threshold for the linear-demand case.}
     \label{fig:plot1}
\end{figure}

Figure~\ref{fig:plot1} shows the structure of equilibrium on the path of play. 
Figure~\ref{fig:plot1a} gives the seller's optimal choice of $p_1$ as a function of the population sophistication $\mu$, along with the buyer's threshold response for that choice of $p_1$.
In round $2$, the seller offers a price $p_2^\acc$ on seeing a round-$1$ accept, and offers a possibly randomized price $p_2^\rej$ on seeing a round-$1$ reject.
Figure~\ref{fig:plot1b} shows the value of $p_2^\acc$ as a function of $\mu$, as well as the two values in the support of the seller's mixed strategy conditioned on round-$1$ rejection.
The precise probabilities, along with the functional forms for all the prices, can be found in Appendix~\ref{app:uniform}.
Note the absence of a second-round buyer threshold: the buyer acts as a price-taker in round $2$, i.e.\ the second-round threshold is always $p_2$.
The presented price paths are only for the on-path choice of $p_1$, which maximizes the seller's total revenue.
Were the seller to choose a different, off-path first-round price, $t$, $p_2^\acc$, and $p_2^\rej$ would change in response.

For all values of $\mu$, we note two common structural details. 
First, note that with probability $1$, $p_2^\acc>p_2^\rej$.
The round $2$ distribution conditioned on accept is a mixture of naive buyers with value in $[p_1,1]$ and sophisticated buyers with value in $[t,1]$.
Conditioned on reject, the distribution mixes naive buyers with value in $[0,p_1)$ and sophisticated buyers with value in $[0,t)$.
The second-round prices are optimal for these conditional distributions (or tied for optimal, in the case of $p_2^\rej$), and the stronger distribution (accept) in this case has a higher monopoly price.
Second, note that $t> p_1$, i.e.\ the sophisticated buyers always employ nontrivial demand reduction. 
The threshold value solves the buyer indifference equation $t-p_1+(t-p_2^\acc)^+=t-\mathbb E[p_2^\rej]$; the marginal type gets equal utility from accepting (lefthand side) and rejecting (righthand side).
If $p_2^\acc< t$, then $p_1<t$ follows from the fact that $p_2^\acc>p_2^\rej$. 
If $p_2^\acc\geq t$, then $p_1=\mathbb E[p_2^\rej]$. 
Since all realizations of $p_2^\rej$ must be optimal for the second-round distribution conditioned on reject, and this latter distribution is supported on $[0,t]$, it must be that $p_1=\mathbb E[p_2^\rej]< t$.


\subsection{Equilibrium Structure}
\label{sec:structure}

Equilibrium can be broken into two regimes, based on the level of population sophistication $\mu$. 
For low values of $\mu$, the population comprises mostly naive buyers. 
The seller focuses their pricing on these buyers, and the price schedule remains similar to the all-naive case of $\mu=0$.
We refer to this as the {\em naive-focused} regime.
For high values of $\mu$, the seller must cater to the sophisticated buyers.
The price structure now resembles the $\mu=1$ case, but $p_2^\rej$ is randomized for reasons we discuss below.
This is the {\em sophisticated-focused} regime: the first main result of the next section is that such a neighborhood always exists about full sophistication, and hence the conclusions of \citet{dps19} are robust to buyer heterogeneity under short horizons.

\paragraph{Naive-Focused Regime} 
For $\mu<0.630$, the seller chooses prices adapted to the large number of naive buyers.
The price $p_1$ starts high, and on an accept, stays high; for $\mu\leq 1/2$, $p_1=p_2^\acc$.
On seeing a reject, it is likely that a naive buyer with value in $[0,p_1]$ is responsible, and consequently $p_2^\rej<p_1$.
The sophisticated buyers get away with significant demand reduction, with $t>p_2^\acc$.
When $\mu\in[0.5,0.630...]$, the large number of sophisticated buyers with $v\geq t$ causes $p_2^\acc$ to rise even above $p_1$, but otherwise the structure of equilibrium is as just described.

\begin{figure}
     \centering
     \begin{subfigure}[b]{0.3\textwidth}
         \centering
         \includegraphics[scale=0.5]{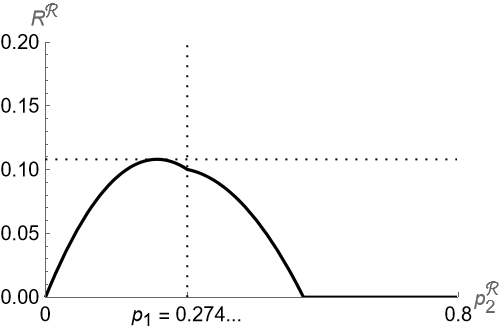}
         \caption{$t$ too low}
         \label{fig:plot2a}
     \end{subfigure}
     \hfill
     \begin{subfigure}[b]{0.3\textwidth}
         \centering
         \includegraphics[scale=0.5]{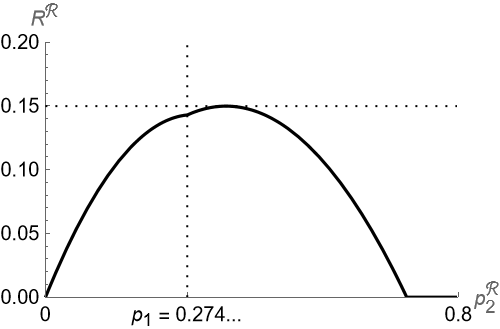}
         \caption{$t$ too high}
         \label{fig:plot2b}
     \end{subfigure}
     \hfill
     \begin{subfigure}[b]{0.3\textwidth}
         \centering
         \includegraphics[scale=0.5]{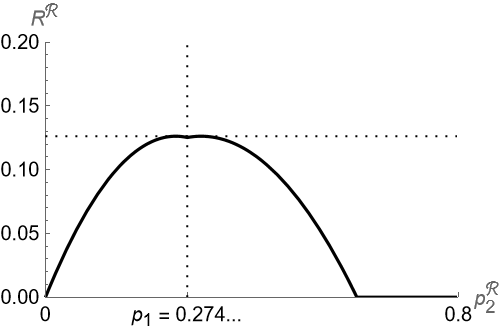}
         \caption{$t$ admits mixed $p_2^\rej$}
         \label{fig:plot2c}
     \end{subfigure}
     \caption{Illustration of the seller's pricing problem after first-round reject for $\mu=0.7$. The only way to satisfy the threshold equation $p_1=\mathbb E[p_2^\rej]$ with a sequentially rational price is to balance the two optima of $R^\rej$. Vertical dotted line: $p_1$. Horizontal dotted line: optimal second-round revenue.}
     \label{fig:plot2}
\end{figure}

\paragraph{Sophisticated-Focused Regime}
At $\mu\approx0.630$, all prices jump discontinuously.
The equilibrium now resembles the $\mu=1$ equilibrium derived in \citet{dps19}, with the additional complication that $p_2^\rej$ is now randomized.
Prices now start low, and because an accept likely comes from a sophisticated buyer with $v\in[t,1]$, we have $p_2^\acc=t$.
The sophisticated buyers' indifference equation becomes $t-\mathbb E[p_2^\rej]=t-p_1$, and so $\mathbb E[p_2^\rej]=p_1$; a reject does not significantly change the price between the first and second rounds, except to induce the seller to randomize in round $2$.

Two factors together render randomization of $p_2^\rej$ necessary. 
First, for $\mu<1$, it cannot be that $p_2^\rej$ is deterministically $p_1$, no matter how $t$ is chosen.
In more detail, when $t$ is significantly higher than $p_1$, relatively few buyers have values below $p_1$.
The best price in this case is the monopoly price for $U[p_1,t]$, which for $t\geq p_1$ is $t/2>p_1$.
When $t$ is low and close to $p_1$, the best price lies in $[0,p_1)$.
In between the high and low $t$ cases, there are local maxima in both $[0,p_1)$ and $(p_1,t]$, both of which yield better revenue than $p_1$.
The second issue is that the threshold equation $t-\mathbb E[p_2^\rej]=t-p_1$ implies $\mathbb E[p_2^\rej]=p_1$.
Randomization then resolves these two issues by choosing $t$ such that the peaks in $[0,p_1)$ and $(p_1,t]$ yield the same revenue, allowing the seller to randomize between then while satisfying sequential rationality.
Figure~\ref{fig:plot2} illustrates the seller's second-round pricing problem conditioned on a reject.
Note that the indifference induced by this randomization is different from what is typical in the construction of mixed-strategy equilibria: rather than inducing the indifference necessary for the buyer to in turn play a mixed strategy of their own, the seller's randomization induces an indifferent buyer type necessary for the threshold equilibrium refinement.

\subsection{Revenue Analysis}
\label{sec:revuniform}

Figure~\ref{fig:plot3} illustrates the seller's equilibrium revenue. 
At the all-naive $\mu=0$ extreme, the seller can adaptively price without dynamic incentive issues, and obtain revenue $4/7\approx0.571$.
They can therefore beat the benchmark revenue of $0.5$, achieved by offering the monopoly price $p^*=0.5$ both rounds.
At the all-sophisticated $\mu=1$ extreme, the \citet{dps19} PBE revenue $0.45$, lower than the monopoly pricing benchmark.
In between the two extremes, the seller's equilibrium revenue is given by:
\begin{equation*}
    \Rev(\mu)=
    \begin{cases}
        \tfrac{(2+\mu)^2}{7+10\mu+3\mu^3}&0\leq \mu < 0.206...\\
        \tfrac{9+2\mu-5\mu^2-2\mu^3}{(4+\mu-\mu^2)^2}&0.206...\leq \mu <0.5\\
        \tfrac{-7-2\mu+5\mu^2+2\mu^3}{-12-8\mu+6\mu^2+5\mu^3+\mu^4}&0.5\leq \mu < 0.630...\\
        \tfrac{(1+2\sqrt\mu)^2}{4+8\sqrt\mu+7\mu+2\mu^{3/2}-\mu^2}&0.630...\leq \mu \leq 1,
    \end{cases} 
\end{equation*}
with the precise values of the truncated decimals given in Appendix~\ref{app:uniform}.
We can better understand the revenue based on the two equilibrium regimes described in the previous section.

\begin{figure}
    \centering
    \includegraphics[scale=0.6]{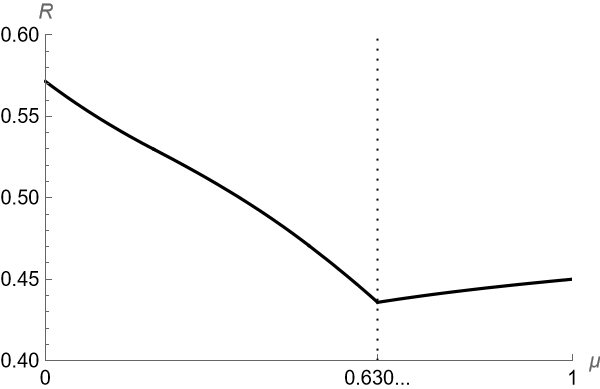}
    \caption{Plot of the seller's equilibrium revenue.}
    \label{fig:plot3}
\end{figure}

\paragraph{Naive-Focused Regime} 
For $\mu<0.630...$, the seller's revenue is decreasing in the population sophistication level.
This loss can be attributed to two sources.
First, sophisticated buyers engage in strategic demand reduction, whereas naive buyers are price-takers.
Consequently, the average revenue from a sophisticated buyer is generally less than the average revenue from a naive buyer; increasing $\mu$ replaces the latter type with the former type.
This is {\em population-driven revenue loss}.
A second source of loss comes from the way the equilibrium structure shifts as $\mu$ increases.
The seller must gradually adapt their prices to handle the new sophisticated buyers, and this reduces their ability to exploit the naive buyers with adaptive pricing.
This is {\em price-driven revenue loss}.

\paragraph{Sophisticated-Focused Regime}
For $\mu\geq 0.630...$, the seller's revenue is {\em increasing} in the population sophistication level.
Again, we may consider population-driven and price-driven changes to the revenue.
As in the naive-focused regime, the seller faces a population-driven revenue loss, as sophisticated buyers continue to generate less revenue than naive buyers.
In the sophisticated-focused regime, however, the seller's price schedule is structured around the large number of sophisticated buyers.
In this case, adding sophisticated buyers results in a {\em price-driven revenue gain}, as the presence of naive buyers distorts the price structure in a way that reduces the revenue from sophisticated buyers.
The figure shows that the price-driven effects are significant enough to overshadow the population-driven effects.
Below, we show that the revenue increase with added sophistication is a general phenomenon.
The second main result of the next section is that this phenomenon persists for every distribution satisfying some standard regularity conditions.
\section{Short Horizon: General Analysis}
\label{sec:general}

We now generalize two of the main conclusions of the linear-demand analysis to a broader family of value distributions.
The first main result of the section is that the price structure of the full-sophistication equilibrium is robust in a neighborhood about $\sophprob=1$.
That is, for a region of sufficiently high $\mu$, equilibrium has the same ascending-price structure observed in the sophisticated-focused regime of the previous section.
In this region, the seller's revenue is no greater than the low revenue of the full-sophisticated equilibrium.
In fact, the second main result of the section implies that it is strictly lower: we show that in this same neighborhood, the seller's revenue is increasing in the population sophistication level $\mu$.
Thus, the nonmonotonicity of revenue observed in Section~\ref{sec:revuniform} is a general phenomenon.

For the rest of this section, we maintain the assumptions that the value distribution $F$ is atomless, fully supported on $[0,1]$, and satisfies the following concavity condition {\em strictly}. We omit these assumptions in our theorem statements for brevity.
\begin{assumption}\label{ass:regular}
The value distribution $F$ has a price-posting revenue curve $R(p)=p(1-F(p))$ which is strictly concave (i.e.\ $R''(p)<0$) on $[0,1]$.
\end{assumption}
It is possible to relax our distributional assumptions and preserve many of our main conclusions, albeit at the cost of expositional clarity.

The structure of this section is as follows.
Sections~\ref{sec:2basics} and \ref{sec:2t} contain basic definitions and useful facts.
Section~\ref{sec:2basics} focuses on the structure of prices in equilibrium.
Section~\ref{sec:2t} characterizes the relationship between these prices and buyer's threshold in more detail.
In Section~\ref{sec:2robust}, we state and prove the main results.


\subsection{Basic Structural Results}
\label{sec:2basics}

This section gives basic preliminary results, and defines terminology needed to discuss the structure of equilibrium.
In equilibrium, the seller picks a first-round price $p_1$ which maximizes their revenue, given the buyer's response.
Even for the same $\mu$, different choices of $p_1$ will induce different price structures which resemble the two types of equilibrium observed for the linear-demand case.
The following mild abuse of terminology will allow us to distinguish between the seller's options:

\begin{definition}
\label{def:continuation}
    A {\em continuation} is a quadruple $(p_1,p_2^\rej,p_2^\acc,t)$, with $p_2^\rej$ and $p_2^\acc$ possibly random, satisfying the following three conditions:
    \begin{enumerate}
        \item\label{item:ropt} For all $p$ in the support of $p_2^\rej$, $p\in\arg\max p(1-F^\rej(p;p_1,t))$.
        \item\label{item:aopt} For all $p$ in the support of $p_2^\acc$, $p\in\arg\max p(1-F^\acc(p;p_1,t))$.
        \item\label{item:threshold} $t$ satisfies the indifference equation $t-p_1+\mathbb E[(t-p_2^\acc)^+]=\mathbb E[(t-p_2^\rej)^+]$.
    \end{enumerate}
    A continuation with deterministic $p_2^\acc$ is {\em naive-focused} if $p_2^\acc<t$, and {\em sophisticated-focused} if $p_2^\acc\geq t$.
    An equilibrium is naive-focused or sophisticated-focused if the on-path choice of $(p_1,p_2^\rej,p_2^\acc,t)$ is respectively naive-focused or sophisticated-focused.
\end{definition}

Note that for a fixed $p_1$, there could in principle be multiple choices of $(p_2^\rej,p_2^\acc,t)$ that form a valid continuation. 
Similarly, we can think about starting with any subset of $\{p_1,p_2^\rej,p_2^\acc,t\}$, and filling in the remaining quantities to form a continuation. Formally:

\begin{definition}
    Given choices for a subset $S\subseteq\{p_1,p_2^\rej,p_2^\acc,t\}$, we say a choice of the other parameters $\{p_1,p_2^\rej,p_2^\acc,t\}\setminus S$ {\em implements} the choices for $S$ if together those parameters form a continuation.
    An implementation is {\em naive-focused} if the continuation formed is naive-focused, and {\em sophisticated-focused} if the continuation is sophisticated-focused.
\end{definition}

Equilibrium requires every first-round price $p_1$ to have a valid continuation.
For the sake of completeness, we show prove the following in Appendix~\ref{app:exists}.
\begin{lemma}\label{lem:p1implement}
Every first-round price $p_1$ has a valid implementation.
\end{lemma}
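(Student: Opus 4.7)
Fix $p_1 \in [0, 1]$. The plan is to construct $(p_2^\rej, p_2^\acc, t)$, with possibly mixed second-round prices, satisfying the three conditions of Definition~\ref{def:continuation}, by reducing to a one-dimensional existence argument in the threshold $t$. I would allow $t$ to range over all of $\mathbb R$: $t \leq 0$ corresponds to the sophisticated buyer always accepting, and $t \geq 1$ to always rejecting, so these extremes are valid threshold strategies. For each $t$, the posteriors $F^\rej(\cdot; p_1, t)$ and $F^\acc(\cdot; p_1, t)$ defined by Eqs.~(\ref{eq:frej}) and (\ref{eq:facc}) are well-defined under the natural extensions $F(t) = 0$ for $t \leq 0$ and $F(t) = 1$ for $t \geq 1$, and they vary continuously in $t$ since $F$ is atomless; by Berge's maximum theorem the argmax correspondences $\mathcal P^\rej(t), \mathcal P^\acc(t)$ are nonempty, compact, and upper hemicontinuous.

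Let $G(t) = \{\, t - p_1 + \mathbb E_{\alpha^\acc}[(t-p)^+] - \mathbb E_{\alpha^\rej}[(t-p)^+] : \alpha^\rej \in \Delta(\mathcal P^\rej(t)),\ \alpha^\acc \in \Delta(\mathcal P^\acc(t)) \,\} = [\underline g(t), \overline g(t)]$, where the interval form follows from linearity in the mixing probabilities. Standard maximum-theorem arguments give that $\overline g$ is upper semi-continuous and $\underline g$ is lower semi-continuous. I would next pin down the limit behavior outside the support of $F$: for $t \leq 0$ no sophisticated type rejects, so $F^\rej = F_{\leq p_1}$ is fixed and every $(t-p)^+$ vanishes (prices are nonnegative), giving $g(t) \equiv t - p_1$ and thus $g(t) \to -\infty$ as $t \to -\infty$; for $t \geq 1$ no sophisticated type accepts, so $F^\rej$ and $F^\acc$ are both fixed and independent of $t$, their optimal prices $p_2^\rej, p_2^\acc$ are constants, and for $t$ exceeding both of them $g(t) = t - p_1 + p_2^\rej - p_2^\acc$ grows linearly with slope one, so $g(t) \to +\infty$.

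A connectedness argument now yields $t^*$ with $0 \in G(t^*)$: the sets $A = \{t : \overline g(t) < 0\}$ and $B = \{t : \underline g(t) > 0\}$ are open (by the semi-continuities), nonempty (by the limit behavior), and disjoint (since $\underline g \leq \overline g$); if they covered $\mathbb R$ they would contradict its connectedness, so some $t^*$ satisfies $\underline g(t^*) \leq 0 \leq \overline g(t^*)$, i.e., $0 \in G(t^*)$. A pair of mixed strategies realizing this zero, together with $t^*$, yields the sought continuation. The main obstacle is the treatment of interior discontinuities of $g$: when a truncated or mixed revenue curve has multiple global maximizers, $\mathcal P^\rej(t)$ or $\mathcal P^\acc(t)$ jumps, and without mixing $g$ could hop over zero. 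Upper hemicontinuity ensures the argmax set at a jump point includes all one-sided limits, and mixing over these co-optima lets $G(t^*)$ form a closed interval that bridges any such gap---exactly the randomization phenomenon forced upon $p_2^\rej$ in the sophisticated-focused regime of the linear-demand analysis in Section~\ref{sec:uniform}.
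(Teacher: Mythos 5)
Your proposal is correct and follows essentially the same route as the paper's proof in Appendix~\ref{app:exists}: both track the threshold type's accept-minus-reject utility as an upper hemicontinuous correspondence of $t$ built from the argmax price correspondences, establish a sign change between small and large $t$, and use randomization over co-optimal second-round prices to bridge any jump at the crossing point. Your version merely packages the crossing step via semicontinuous envelopes $\underline g,\overline g$ and connectedness of $\mathbb R$ where the paper takes an infimum and exhibits the two sandwich configurations directly.
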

The proof generalizes the discussion of mixed strategies for the naive-focused regime of Section~\ref{sec:uniform}. If the seller's optimal second-round prices changed continuously with the threshold $t$, then the utility of a sophisticated buyer with that threshold type would also change continuously, and we would be guaranteed that this buyer's accept and reject utilities would be equal for some $t$.
When the second-round prices can jump, such a crossing point may not exist.
However, mixing between the prices on either side of that jump produces buyer indifference.
This immediately implies that equilibria exist, which we also prove in Appendix~\ref{app:exists}.

\begin{theorem}\label{thm:exists}
For any distribution $F$ and sophistication level $\mu$, there exists a (possibly mixed) perfect Bayesian equilibrium for the two-round game.
\end{theorem}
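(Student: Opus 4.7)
By Lemma~\ref{lem:p1implement}, for every first-round price $p_1 \in [0,1]$ the set $\mathcal{C}(p_1)$ of valid continuations $(p_1, p_2^\rej, p_2^\acc, t)$ is nonempty. The plan is to let $V(p_1) := \sup_{c \in \mathcal{C}(p_1)} \Rev(p_1, c)$ denote the best total two-round revenue achievable at $p_1$, show that $V$ attains its maximum on the compact $[0,1]$, and then assemble a PBE from a maximizer $p_1^*$, a realizing continuation $c^*$, and an arbitrary continuation selection $p_1 \mapsto c(p_1)$ off the equilibrium path. Prices above $1$ can be ruled out because they yield zero first-round acceptance while the seller can always secure positive revenue at, e.g., any monopoly price, so restricting to $[0,1]$ is without loss.

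\textbf{Topology and closed graph.} Embed continuations in the compact space $X := [0,1] \times \mathcal{P}([0,1])^2 \times [0,1]$, where $\mathcal{P}([0,1])$ carries the weak-$\ast$ topology (compact by Prokhorov). Because $F$ is atomless, the second-round objectives $p \mapsto p(1-F^\rej(p;p_1,t))$ and $p \mapsto p(1-F^\acc(p;p_1,t))$ are jointly continuous in $(p_1,t,p)$, so their argmax correspondences over the compact $[0,1]$ are upper hemi-continuous in $(p_1,t)$; this transfers to the correspondences of measures supported on those argmax sets, since the characterization $\int R^\rej\,dp_2^\rej = \max R^\rej$ (and the analogue for accept) is jointly continuous in the measure and the parameters. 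The indifference equation in item~(3) of Definition~\ref{def:continuation} is likewise continuous in the full tuple, as mixed prices enter only via expectations of continuous integrands. Combining these, $\mathcal{C}:[0,1]\rightrightarrows X$ is compact-valued with closed graph, and the revenue $\Rev(p_1,c)$ --- which aggregates the first-round term $p_1(\mu(1-F(t))+(1-\mu)(1-F(p_1)))$ with the expected second-round revenues weighted by acceptance/rejection probabilities --- is jointly continuous in $(p_1,c)$.

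\textbf{From Berge to PBE.} The upper-semicontinuous form of Berge's maximum theorem now yields $V$ upper semi-continuous on $[0,1]$; since USC functions attain their suprema on compact sets, there is some $p_1^* \in \arg\max V$, and compactness of $\mathcal{C}(p_1^*)$ yields $c^* \in \mathcal{C}(p_1^*)$ with $\Rev(p_1^*, c^*) = V(p_1^*)$. Define the seller to play $p_1^*$, set $c(p_1^*) = c^*$, and for each off-path $p_1$ fix any $c(p_1) \in \mathcal{C}(p_1)$; after any realized $p_1$ the remaining strategies follow $c(p_1)$, with beliefs derived via Bayes' rule from Equations~(\ref{eq:frej}) and~(\ref{eq:facc}). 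Items~(1) and~(2) of Definition~\ref{def:continuation} give the seller's second-round sequential rationality against these beliefs; item~(3) gives the buyer's threshold best response; and $V(p_1^*) \geq V(p_1) \geq \Rev(p_1, c(p_1))$ for every $p_1$ rules out profitable seller first-round deviations.

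\textbf{Main obstacle.} The analytically delicate step --- constructing a continuation for every $p_1$ in the face of discontinuous jumps in the seller's second-round argmax, via mixed $p_2^\rej$ --- is already the content of Lemma~\ref{lem:p1implement}. The remaining challenge for Theorem~\ref{thm:exists} is to handle those mixtures in a topology where a maximum-theorem argument runs uniformly in $p_1$; this is why the proof works with weak-$\ast$ convergence on $\mathcal{P}([0,1])$ rather than in any finite-dimensional parameterization, and once that framework is set up the rest is routine.
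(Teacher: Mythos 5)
Your proof is correct and shares the paper's skeleton: invoke Lemma~\ref{lem:p1implement} to attach a continuation to every first-round price, then argue that a revenue-optimal $p_1$ exists and assemble the PBE from that choice together with arbitrary continuations at off-path prices. Where you genuinely diverge is in the final existence step. The paper simply asserts that the revenue of the \emph{specific} continuation constructed in Lemma~\ref{lem:p1implement} varies continuously with $p_1$; you instead work with the value function $V(p_1)=\sup_{c\in\mathcal C(p_1)}\Rev(p_1,c)$, establish that the continuation correspondence has closed graph in a weak-$*$ topology on mixed second-round prices, and extract a maximizer via the upper-semicontinuous form of Berge's theorem. Your version is the more defensible of the two: a selection from an upper hemicontinuous correspondence need not be continuous (the mixing probabilities and the crossing point $t^*$ in the paper's construction could in principle jump as $p_1$ varies), whereas upper semicontinuity of the value function is exactly what the optimization needs, and your chain $V(p_1^*)\geq V(p_1)\geq \Rev(p_1,c(p_1))$ correctly rules out first-round deviations no matter which continuation is selected off path. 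The one point you gloss over that the paper treats explicitly is buyer-side sequential rationality away from the threshold type: item~(3) of Definition~\ref{def:continuation} only pins down indifference of the type $v=t$, and you still need the single-crossing observation that $v\mapsto v-p_1+\mathbb E[(v-p_2^\acc)^+]-\mathbb E[(v-p_2^\rej)^+]$ is nondecreasing (its slope is $1+\prob[p_2^\acc<v]-\prob[p_2^\rej<v]\geq 0$), so that all types above $t$ prefer to accept and all types below prefer to reject. That is a one-line fix rather than a structural gap, but it should be stated.
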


We next give basic structural properties of equilibrium prices.
Conditions~\ref{item:ropt}, \ref{item:aopt}, and \ref{item:threshold} in Definition~\ref{def:continuation} together imply strong conditions on the price structure, which we summarize here and prove in Appendix~\ref{app:ordering}.

\begin{lemma}\label{lem:ordering}
    Let $F$ have monopoly reserve $p^*$. Then for any continuation $(p_1, p_2^\rej,p_2^\acc,t)$ with $p_1>0$, the following inequalities hold for every realization of $p_2^\rej$ and $p_2^\acc$:
    \begin{enumerate}
        \item\label{item:p2rlimit} $p_2^\rej \leq t$
        \item\label{item:p2alimit} $p_2^\acc \geq p_1$
        \item\label{item:p2rmonop} $p_2^\rej \leq p^*$
        \item\label{item:p2amonop} $p_2^\acc \geq p^*$
        \item\label{item:threshorder} $p_1\leq t$
    \end{enumerate}
    We may consequently write the buyer's indifference equation as: $t-p_1+\mathbb E[(t-p_2^\acc)^+]=t-\mathbb E[p_2^\rej]$ .
\end{lemma}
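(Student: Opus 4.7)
The plan is to prove the five inequalities in the order (3), (4), (5), (1), (2), after which the restated indifference equation at the end of the lemma follows immediately from (1). Items (3) and (4) are structural facts about the posterior revenue curves that hold for any choice of $p_1$ and $t$, while (5), (1), and (2) combine these bounds with the indifference condition to pin down the price ordering.

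For items (3) and (4), the key is to leverage the strict concavity of $R$ from Assumption~\ref{ass:regular}. A direct computation shows $R_{\leq x}(p)=R(p)/F(x)-p(1-F(x))/F(x)$ for $p\leq x$ and zero for $p>x$, so $R_{\leq x}'(p^*)=-(1-F(x))/F(x)<0$; combined with strict concavity this forces the argmax of $R_{\leq x}$ to lie strictly below $p^*$. Analogously, $R_{\geq x}(p)$ equals $p$ on $[0,x]$ and $R(p)/(1-F(x))$ on $[x,1]$, so its argmax is $\max(x,p^*)\geq p^*$. The rejection revenue $R^\rej=\mu_\rej R_{\leq t}+(1-\mu_\rej)R_{\leq p_1}$ is a convex combination of two curves that are strictly decreasing wherever $p\geq p^*$; checking the three intervals $[0,\min(t,p_1)]$, $(\min(t,p_1),\max(t,p_1)]$, and $(\max(t,p_1),1]$ shows that $R^\rej$ is strictly decreasing on $[p^*,1]$ wherever it is positive, giving $p_2^\rej\leq p^*$. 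Likewise $R^\acc=\mu_\acc R_{\geq t}+(1-\mu_\acc)R_{\geq p_1}$ is nondecreasing on $[0,p^*]$, giving $p_2^\acc\geq p^*$.

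For (5), suppose for contradiction that $p_1>t$. Applying (3) and (4) pointwise in the randomness of second-round prices, every realization satisfies $p_2^\rej\leq p^*\leq p_2^\acc$, and hence $(t-p_2^\rej)^+\geq(t-p^*)^+\geq(t-p_2^\acc)^+$. Taking expectations yields $\mathbb{E}[(t-p_2^\rej)^+]\geq\mathbb{E}[(t-p_2^\acc)^+]$. But rearranging the indifference equation in condition~\ref{item:threshold} gives $t-p_1=\mathbb{E}[(t-p_2^\rej)^+]-\mathbb{E}[(t-p_2^\acc)^+]\geq 0$, contradicting $p_1>t$. For (1), once $p_1\leq t$, the posterior $F^\rej$ is supported on $[0,t]$; since $p_1>0$ and $F$ is fully supported, $R^\rej$ is positive somewhere in $(0,t]$ but identically zero on $(t,1]$, forcing $p_2^\rej\leq t$. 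For (2), the posterior $F^\acc$ is supported on $[p_1,1]$ (using $p_1\leq t$), so on $[0,p_1)$ we have $R^\acc(p)=p<p_1=R^\acc(p_1)$, ruling out any argmax below $p_1$. With (1) in hand, $(t-p_2^\rej)^+=t-p_2^\rej$, which yields the reformulated indifference equation at the end of the lemma.

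The main obstacle is the derivative analysis for (3) and (4): one must handle the two truncation endpoints ($t$ and $p_1$) without presupposing an ordering between them, and verify the monotonicity of the piecewise-defined mixtures $R^\rej$ and $R^\acc$ across all three regions determined by $\min(t,p_1)$ and $\max(t,p_1)$. Once those monopoly-price bounds are in place, (5) follows from a short contradiction via the indifference equation, and (1) and (2) come essentially for free.
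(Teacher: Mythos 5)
Your proposal is correct and follows essentially the same architecture as the paper's proof: establish the monopoly-price bounds $p_2^\rej\leq p^*\leq p_2^\acc$ first, deduce $p_1\leq t$ from the indifference equation via the pointwise comparison $(t-p_2^\acc)^+\leq(t-p_2^\rej)^+$, and then obtain $p_2^\rej\leq t$ and $p_2^\acc\geq p_1$ from support arguments on the posteriors. The only (minor) divergence is in the monopoly-price bounds, where you use the derivative sign $R_{\leq x}'(p^*)=-(1-F(x))/F(x)<0$ together with strict concavity, whereas the paper compares $R_{\leq x}(p^*)$ to $R_{\leq x}(p)$ directly via a ratio argument that needs only uniqueness of the maximizer of $R$; both are valid under Assumption~\ref{ass:regular}.
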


In the case where the continuation is sophisticated-focused, we may say even more about the second-round prices.
Of special note is the reject price $p_2^\rej$.
As we observed in the linear-demand case, the only way to make the threshold buyer type indifferent may be to randomize $p_2^\rej$.
Our second lemma below (Lemma~\ref{lem:plph}), below shows this to always be case under Assumption~\ref{ass:regular}.

\begin{lemma}\label{lem:plph}
	Let $(p_1,p_2^\rej,p_2^\acc,t)$ be a sophisticated-focused continuation. Then $p_2^\rej$ is supported on exactly two values, $p_L$ and $p_H$, satisfying:
	\begin{itemize}
		\item $p_H$ is the monopoly price for the truncated distribution $F_{\leq t}$.
		\item $p_L$ maximizes $\mu_\rej R_{\leq t}(p)+(1-\mu_\rej)R_{\leq p_1}(p)$ on $[0,p_1]$.
	\end{itemize}

Furthermore, if $\mu=1$ then $p^\rej_2=p_L=p_H$ is deterministic. If $\mu<1$, then $p^\rej_2$ is randomized.  
\end{lemma}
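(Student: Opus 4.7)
The plan is to analyze the reject revenue function
\[
R^\rej(p) \;=\; \mu_\rej R_{\leq t}(p) + (1-\mu_\rej) R_{\leq p_1}(p)
\]
on its relevant domain $[0,t]$ (justified by Lemma~\ref{lem:ordering}~\ref{item:p2rlimit}), and show that its global maximizers come from at most two candidate points straddling $p_1$. Under the sophisticated-focused assumption $p_2^\acc \geq t$, combined with $p_2^\rej \leq t$, the indifference condition in Definition~\ref{def:continuation}~\ref{item:threshold} reduces to $\mathbb{E}[p_2^\rej] = p_1$, and this equation together with the structural analysis of $R^\rej$ will pin everything down.

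First I would show that $R^\rej$ is piecewise strictly concave with a potential kink at $p_1$. Writing each truncated curve as $R_{\leq x}(p) = p(1-1/F(x)) + R(p)/F(x)$ and invoking Assumption~\ref{ass:regular} gives strict concavity of both $R_{\leq t}$ and $R_{\leq p_1}$ on their supports; on $[p_1,t]$ the second term vanishes so $R^\rej = \mu_\rej R_{\leq t}$. Sequential rationality (Condition~\ref{item:ropt}) forces every support point of $p_2^\rej$ to be a global maximizer of $R^\rej$, and strict concavity on each piece caps the number of support points at two: a candidate $p_L$ on $[0,p_1]$ and a candidate $p_H$ on $[p_1,t]$. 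A direct computation at the kink then gives $R_{\leq p_1}'(p_1^-) = -p_1 f(p_1)/F(p_1) < 0$ while the right derivative of $R_{\leq p_1}$ at $p_1$ is zero (since $R_{\leq p_1}$ is identically zero on $[p_1,t]$), yielding $(R^\rej)'(p_1^-) - (R^\rej)'(p_1^+) = -(1-\mu_\rej)\, p_1 f(p_1)/F(p_1)$, strictly negative whenever $\mu<1$. Such a downward kink prevents $p_1$ from being a maximizer of $R^\rej$ when $\mu<1$: a local max would require the left derivative to weakly exceed the right, but here it is strictly smaller.

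Combining these facts with the reduced indifference equation finishes the proof. When $\mu=1$, $\mu_\rej=1$, so $R^\rej = R_{\leq t}$ is strictly concave on $[0,t]$ with a unique global maximizer; the constraint $\mathbb{E}[p_2^\rej]=p_1$ then forces that maximizer to equal $p_1$, giving $p_L = p_H = p_1$ deterministically. When $\mu<1$, a deterministic $p_2^\rej$ would have to equal $p_1$ and contradict the kink analysis; hence $p_2^\rej$ randomizes, and the two-point cap makes its support exactly $\{p_L,p_H\}$. The expectation constraint together with $p_L \leq p_1 \leq p_H$ then forces $p_L < p_1 < p_H$, which in turn places $p_H$ as an interior maximizer of $\mu_\rej R_{\leq t}$ on $(p_1,t]$; an interior maximizer of a strictly concave curve restricted to a subinterval agrees with the unconstrained maximizer, so $p_H$ is the monopoly price for $F_{\leq t}$, while $p_L$ is characterized by its defining optimization on $[0,p_1]$.

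The hard part will be the kink computation: once $p_1$ is ruled out as a maximizer of $R^\rej$ for $\mu<1$, the rest unfolds by elimination and counting of support points. The strict negativity of $R_{\leq p_1}'(p_1^-)$, which relies on the atomless-density hypothesis built into Assumption~\ref{ass:regular}, is precisely what creates the derivative asymmetry and forces randomization. A secondary subtlety is verifying that the mixed expectation $\mathbb{E}[p_2^\rej]=p_1$ together with the two-point support really does place a support point strictly above $p_1$, which is how the identification of $p_H$ with the unconstrained monopoly of $F_{\leq t}$ gets justified.
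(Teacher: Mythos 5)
Your proposal is correct and follows essentially the same route as the paper's proof: piecewise strict concavity of $R^\rej$ on $[0,p_1]$ and $[p_1,t]$ via the truncation lemma, the downward-kink derivative comparison at $p_1$ to rule out a deterministic $p_2^\rej$ when $\mu<1$, and the two-candidate support count combined with $\mathbb{E}[p_2^\rej]=p_1$. Your explicit justification that $p_H$ coincides with the unconstrained monopoly price $p^*_{\leq t}$ (via interiority in $(p_1,t)$) is a small but welcome extra step that the paper asserts without comment.
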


The proof can be found in Appendix~\ref{app:mixture}.
We can also characterize the accept price $p_2^\acc$.
The following lemma is proved in Appendix~\ref{app:a2det}.

\begin{lemma}
\label{lem:p2adet}
    For any continuation $(p_1,p_2^\rej,p_2^\acc,t)$, the accept price $p_2^\acc$ is deterministic.
    If the continuation is sophisticated-focused, then $p_2^\acc=\max(t,p^*)$.
\end{lemma}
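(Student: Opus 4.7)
The plan is to prove that $R^\acc(p) = p\,(1 - F^\acc(p; p_1, t))$ is strictly concave on $[p_1,1]$; once this is shown, uniqueness of the maximizer immediately forces $p_2^\acc$ to be deterministic. Using formula~(\ref{eq:facc}), I would split $R^\acc$ at $t$. On $[p_1,t]$, the truncation $F_{\geq t}(p)$ vanishes, giving
\[
    R^\acc(p) \;=\; \mu_\acc\,p + \tfrac{1-\mu_\acc}{1-F(p_1)}\,R(p),
\]
which is strictly concave by Assumption~\ref{ass:regular} (adding a linear term does not affect concavity). On $[t,1]$, both truncated survival functions take their unclipped form, and a short calculation yields
\[
    R^\acc(p) \;=\; \Bigl(\tfrac{\mu_\acc}{1-F(t)} + \tfrac{1-\mu_\acc}{1-F(p_1)}\Bigr)\,R(p),
\]
a positive multiple of $R$, hence strictly concave.

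Next I would verify that concavity survives the kink at $p = t$. A direct computation of one-sided derivatives, together with the identity $R'(t) = 1 - F(t) - t\,f(t)$, gives
\[
    R^{\acc\prime}(t^-) - R^{\acc\prime}(t^+) \;=\; \mu_\acc\Bigl(1 - \tfrac{R'(t)}{1-F(t)}\Bigr) \;=\; \mu_\acc \cdot \tfrac{t\,f(t)}{1-F(t)} \;\geq\; 0.
\]
So the derivative is weakly decreasing through the kink, and strictly decreasing on each piece; taken together this makes the one-sided derivative strictly decreasing on all of $[p_1,1]$, so $R^\acc$ is strictly concave on this whole interval. Continuity on the compact domain then yields a unique maximizer, proving that $p_2^\acc$ is deterministic.

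For the sophisticated-focused claim, the hypothesis $p_2^\acc \geq t$ locates the unique maximizer inside $[t,1]$. On that sub-interval $R^\acc$ is a positive scalar multiple of $R$, so $p_2^\acc$ coincides with the maximizer of $R$ over $[t,1]$. By strict concavity of $R$ with global peak at $p^*$, this maximizer equals $p^*$ when $p^* \geq t$ and equals the boundary $t$ when $p^* < t$; in both cases $p_2^\acc = \max(t, p^*)$.

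The main obstacle is the kink computation. Piecewise concavity is immediate, but strict global concavity requires that the derivative jump at $t$ go in the right direction (downward, not upward); an upward jump would allow $R^\acc$ to admit two distinct local maxima, one on each side of $t$, so that $p_2^\acc$ could in principle be randomized. The identity $R'(t)/(1-F(t)) = 1 - t f(t)/(1-F(t)) \leq 1$, which simply restates the definition of $R$, is exactly what rules this out.
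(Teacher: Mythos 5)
Your proof is correct and follows essentially the same route as the paper's: decompose $R^\acc$ via the truncated revenue curves, use Assumption~\ref{ass:regular} to get strict concavity above $p_1$ and hence a unique maximizer, then note that on $[t,1]$ the curve $R^\acc$ is a positive scalar multiple of $R$, forcing $p_2^\acc=\max(t,p^*)$ in the sophisticated-focused case. Your explicit one-sided derivative check at the kink $p=t$ is the same fact the paper uses implicitly (each $R_{\geq x}$ is individually concave through its own kink), so there is no substantive difference.
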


\subsection{Sophisticated-Focused Thresholds}
\label{sec:2t}

Before stating and proving the main results, we present three lemmas at the core of the technical approach.
Both main results require an understanding of sophisticated-focused continuations and their revenue.
Rather than analyzing the revenue and price structure in terms of the seller's choice of first-round price $p_1$, the analysis will fix the buyer's threshold $t$ and let the corresponding first-round price $p_1$ vary with $\mu$.
The following lemma, proved in Appendix~\ref{app:ttop} makes this possible.

\begin{lemma}\label{lem:ttop}
    Given $t$ and $\mu$, $(1-\mu)R'(t)+(1-F(t))\mu\geq 0$ if and only if $t$ has a sophisticated-focused implementation.
    This implementation is unique.
\end{lemma}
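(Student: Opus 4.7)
The guiding observation is that the inequality $(1-\mu)R'(t)+(1-F(t))\mu \geq 0$ is precisely the condition for the seller's accept-side revenue $R^\acc$ to be maximized at or above $t$, i.e.\ for $p_2^\acc \geq t$. Once this accept-side condition is pinned down, the remaining reject-side construction of $(p_1, p_2^\rej)$ is always feasible and uniquely determined under Assumption~\ref{ass:regular}.

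\textbf{Necessity.} Suppose $(p_1, p_2^\rej, p_2^\acc, t)$ is a sophisticated-focused continuation, so $p_2^\acc \geq t$. On $[p_1, t]$ we have $R^\acc(p) = p\bigl[\mu_\acc + (1-\mu_\acc)(1-F(p))/(1-F(p_1))\bigr]$, which under Assumption~\ref{ass:regular} is strictly concave: its second derivative is $(1-\mu_\acc)R''(p)/(1-F(p_1))<0$. For the optimizer $p_2^\acc \in \arg\max R^\acc$ to lie at or beyond $t$, the left-derivative of $R^\acc$ at $t$ must be nonnegative, and direct computation gives this derivative as $\mu_\acc + (1-\mu_\acc)R'(t)/(1-F(p_1))$. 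Multiplying by $1-F(p_1)>0$ and substituting $\mu_\acc = \mu(1-F(t))/D$, $1-\mu_\acc = (1-\mu)(1-F(p_1))/D$ with $D = \mu(1-F(t)) + (1-\mu)(1-F(p_1))$, the common factor $1-F(p_1)$ cancels and the inequality collapses exactly to $(1-\mu)R'(t) + (1-F(t))\mu \geq 0$.

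\textbf{Sufficiency and uniqueness.} Assume the condition holds. I construct the implementation explicitly: set $p_2^\acc = \max(t, p^*)$ and $p_H = p_{\leq t}^*$, the monopoly price of $F_{\leq t}$. Assumption~\ref{ass:regular} yields $(p^2 f(p))' = -pR''(p) > 0$, so $p^2 f(p)$ is strictly increasing, and the equal-height equation $p_L^2 f(p_L) = \mu (p_{\leq t}^*)^2 f(p_{\leq t}^*)$---equivalent to $R^\rej(p_L) = R^\rej(p_H)$ after applying the interior first-order conditions on both peaks---uniquely determines $p_L \in [0, p_{\leq t}^*]$. The interior FOC $F(p_L) + p_L f(p_L) = \mu F(t) + (1-\mu) F(p_1)$ then uniquely fixes $p_1$, and the mixing probability is forced by $p_1 = \mathbb{E}[p_2^\rej]$. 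Each step is uniquely determined, yielding uniqueness of the implementation.

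\textbf{The main technical hurdle} is verifying $p_1 \in [p_L, p_{\leq t}^*]$. The lower bound $p_1 \geq p_L$ reduces (via the FOC, the equal-height condition, and the identity $(p_{\leq t}^*)^2 f(p_{\leq t}^*) = p_{\leq t}^*(F(t) - F(p_{\leq t}^*))$ coming from the FOC for $p_{\leq t}^*$) to $p_{\leq t}^*(F(t) - F(p_{\leq t}^*)) \geq p_L(F(t) - F(p_L))$, which holds because $p(F(t)-F(p))$ is a scaling of $R_{\leq t}$ and hence maximized at $p_{\leq t}^*$. The upper bound $p_1 \leq p_{\leq t}^*$ reduces analogously to
\[
\int_{p_L}^{p_{\leq t}^*} f(x)\,dx \;\geq\; p_L f(p_L)\,\frac{p_{\leq t}^* - p_L}{p_{\leq t}^*}.
\]
Monotonicity of $x^2 f(x)$ gives $f(x) \geq p_L^2 f(p_L)/x^2$ on $[p_L, p_{\leq t}^*]$, and integrating produces exactly the bound above. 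With $p_1 \in [p_L, p_{\leq t}^*]$ established, concavity of $R^\rej$ separately on $[0,p_1]$ and $[p_1, t]$ certifies that $p_L$ and $p_H$ are the two global maxima of $R^\rej$, while the assumed condition (applied as in the necessity argument) certifies that $p_2^\acc = \max(t,p^*) \geq t$ is the global maximum of $R^\acc$, completing the construction.
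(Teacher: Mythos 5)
Your proof is correct, and while the necessity half coincides with the paper's, the sufficiency/uniqueness half takes a genuinely different route. For necessity, the paper's Lemma~\ref{lem:sophp2a} (Appendix~\ref{app:ttop}) performs the same computation you do: the left derivative of $R^\rej$'s counterpart $R^\acc$ at $t$ equals $(1-\mu)R'(t)+\mu(1-F(t))$ up to a positive factor, and concavity of $R^\acc$ above $p_1$ converts its sign into the location of the accept-side optimum. For sufficiency and uniqueness, the paper (Lemma~\ref{lem:raderiv}) fixes $t$ and varies $p_1$: it shows the height of the right peak $R^\rej(p^*_{\leq t})$ is continuous and strictly decreasing in $p_1$ while the height of the left peak $R^\rej(p_L)$ is continuous and strictly increasing, with the difference changing sign between $p_1=0$ and $p_1=p^*_{\leq t}$, so a unique crossing exists by the intermediate value theorem plus strict monotonicity. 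You instead solve the system explicitly: the interior first-order conditions turn the equal-height requirement into $p_L^2f(p_L)=\mu\,(p^*_{\leq t})^2f(p^*_{\leq t})$, which determines $p_L$ because $(p^2f(p))'=-pR''(p)>0$ under Assumption~\ref{ass:regular}, and the first-order condition then determines $p_1$. I checked both of your reductions --- the lower bound $p_1\geq p_L$ via optimality of $p^*_{\leq t}$ for $R_{\leq t}$, and the upper bound $p_1\leq p^*_{\leq t}$ via the integral inequality $\int_{p_L}^{p^*_{\leq t}}f(x)\,dx\geq p_Lf(p_L)(p^*_{\leq t}-p_L)/p^*_{\leq t}$ --- and they are correct. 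Your route buys a semi-explicit characterization of $(p_L,p_1)$ and makes the role of $R''<0$ completely transparent; the paper's route avoids ever needing the first-order conditions to hold with equality, relying only on continuity and monotonicity of the two peak heights in $p_1$. Two small points to tighten: in the uniqueness direction you should cite Lemma~\ref{lem:plph} and Lemma~\ref{lem:p2adet} to justify that \emph{any} sophisticated-focused implementation has $p_2^\acc=\max(t,p^*)$, has $p_2^\rej$ supported on exactly the two peaks with equal heights and $p_1=\mathbb E[p_2^\rej]$, and has $p_L$ interior (so the FOC you invert actually holds), rather than asserting that each step is forced; and the case $\mu=1$, where your FOC no longer involves $p_1$, needs the one-line separate treatment the paper gives ($p_1=p_2^\rej=p^*_{\leq t}$ deterministically).
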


Fixing $\mu$, call a $t$ satisfying $(1-\mu)R'(t)+(1-F(t))\mu\geq 0$ {\em sophisticated-focused}.
Lemma~\ref{lem:ttop} has three important consequences for such $t$.
First, note that the derivative of $(1-\mu)R'(t)+(1-F(t))\mu$ with respect to $\mu$ is 
\begin{equation*}
    -R'(t)+(1-F(t))=-(1-F(t)-tf(t))+(1-F(t))\geq 0.
\end{equation*}
Hence, if $(1-\mu)R'(t)+(1-F(t))\mu\geq 0$ holds, it also holds for all $\mu'>\mu$.
In other words, if a threshold $t$ is sophisticated-focused at $\mu$, it is also sophisticated-focused for all larger $\mu$.
Similarly, the derivative of $(1-\mu)R'(t)+(1-F(t))\mu$ with respect to $t$ is negative, since both $R'(t)$ and $1-F(t)$ are decreasing with respect to $t$, so if $t$ is sophisticated-focused for some $\mu$, so too is every $t'<t$.
Finally, note that Lemma~\ref{lem:ttop} implies that as long as $(1-\mu)R'(t)+(1-F(t))\mu\geq 0$, we can define the function $p_1(t,\mu)$ to be the unique $p_1$ implementing $t$.
The following lemma shows that $p_1(t,\mu)$ is well-behaved.

\begin{lemma}\label{lem:p1monotone}
    For $t$ and $\mu$ satisfying $(1-\mu)R'(t)+(1-F(t))\mu\geq 0$, the first-round price $p_1(t,\mu)$ is differentiable and strictly increasing in $\mu$.
\end{lemma}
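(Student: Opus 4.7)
The plan is to reduce the characterization of $p_1(t,\mu)$ to a clean pair of equations, apply the implicit function theorem, and then verify strict monotonicity via a revenue-comparison argument.

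\textbf{Setup and key identity.} By Lemma~\ref{lem:p2adet}, $p_2^\acc=\max(t,p^*)\geq t$ in any sophisticated-focused continuation, so $(t-p_2^\acc)^+=0$ and the indifference equation of Definition~\ref{def:continuation} collapses to $p_1=\mathbb E[p_2^\rej]$. By Lemma~\ref{lem:plph}, $p_2^\rej$ mixes over $\{p_L,p_H\}$, where $p_H$ is the monopoly reserve of $F_{\leq t}$ (a function of $t$ alone) and $p_L$ is the interior maximizer on $[0,p_1]$ of the relevant piece of $R^\rej$. The FOC for $p_L$ reads
\[
G(p_L,p_1;t,\mu)\;:=\;\mu F(t)+(1-\mu)F(p_1)-F(p_L)-p_Lf(p_L)\;=\;0,
\]
and equality of the two maxima of $R^\rej$ (required so that both $p_L,p_H\in\arg\max R^\rej$) reduces, after substituting $G=0$ and $p_Hf(p_H)=F(t)-F(p_H)$, to the clean identity
\[
p_L^2\,f(p_L)\;=\;\mu\,p_H^2\,f(p_H). \qquad (\star)
\]

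\textbf{Differentiability.} Under Assumption~\ref{ass:regular}, $\tfrac{d}{dp}[p^2f(p)]=2pf(p)+p^2f'(p)=-pR''(p)>0$, so $p\mapsto p^2 f(p)$ is smooth and strictly increasing. Equation $(\star)$ therefore defines $p_L(t,\mu)$ as a smooth, strictly increasing function of $\mu$. For $\mu<1$ we have $\partial G/\partial p_1=(1-\mu)f(p_1)>0$, so the implicit function theorem applied to $G=0$ (with $p_L$ now a known smooth function of $\mu$) yields differentiability of $p_1(t,\mu)$; the $\mu=1$ boundary, where $p_L=p_1=p_H$, is handled by a one-sided Taylor expansion of $(\star)$ and $G$, using that $R\in C^2$ gives $f\in C^1$.

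\textbf{Sign of $dp_1/d\mu$.} Differentiating $G=0$ implicitly in $\mu$, using $\tfrac{d}{dp_L}[F(p_L)+p_Lf(p_L)]=-R''(p_L)$, and substituting the formula for $dp_L/d\mu$ obtained from $(\star)$ yields
\[
(1-\mu)f(p_1)\,\tfrac{dp_1}{d\mu}\;=\;\tfrac{p_H^2 f(p_H)-p_L\bigl(F(t)-F(p_1)\bigr)}{p_L}.
\]
Since the prefactor $(1-\mu)f(p_1)$ is positive, it suffices to sign the numerator. Using $p_Hf(p_H)=F(t)-F(p_H)$ together with $p_L\leq p_1$, the claim reduces to
\[
p_L\bigl(F(t)-F(p_L)\bigr)\;<\;p_H\bigl(F(t)-F(p_H)\bigr),
\]
which follows from $p_H$ being the \emph{unique} maximizer of $p(F(t)-F(p))$ on $[0,t]$ (by strict concavity of $R$) together with $p_L<p_H$ whenever $\mu<1$ (as enforced by $(\star)$). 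Hence $dp_1/d\mu>0$.

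\textbf{Main obstacle.} The nonroutine step is the simplification to $(\star)$: substituting $G=0$ into the equal-revenue condition exactly cancels the $\mu_\rej$-weighted cross-terms and factors the result into a clean separation of the $p_L$- and $p_H$-dependencies. After that, differentiability is a routine IFT application, and strict monotonicity boils down to the elegant observation that $p_H$ beats $p_L$ strictly on the truncated monopoly revenue curve.
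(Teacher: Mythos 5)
Your proof is correct, and it takes a recognizably different route from the paper's. Both arguments start from the same two equilibrium conditions --- the first-order condition characterizing the low reject price $p_L$ and the equal-revenue condition $R^\rej(p_L)=R^\rej(p_H)$ --- but the paper substitutes one into the other and inverts to obtain the closed form $p_1=F^{-1}\bigl(\tfrac{\mu}{1-\mu}\tfrac{F(t)(R_{\leq t}(p^*_{\leq t})-R_{\leq t}(p_L))}{p_L}+F(p_L)\bigr)$, then reads monotonicity off the explicit $\tfrac{\mu}{1-\mu}$ factor and the positivity of $R_{\leq t}(p^*_{\leq t})-R_{\leq t}(p_L)$, leaving the $\mu$-dependence of $p_L$ itself implicit. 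You instead isolate the identity $p_L^2f(p_L)=\mu\,p_H^2f(p_H)$, whose pleasant consequence is that $p_L$ is pinned down by $(t,\mu)$ alone, independently of $p_1$; its smoothness and monotonicity follow from $\tfrac{d}{dp}[p^2f(p)]=-pR''(p)>0$, and $p_1$ is then recovered from the first-order condition via the implicit function theorem. What your decomposition buys is fully explicit comparative statics: the contribution of $dp_L/d\mu$ is tracked and cancels into the single inequality $p_L(F(t)-F(p_L))<p_H(F(t)-F(p_H))$, which is exactly strict optimality of $p_H=p^*_{\leq t}$ for $R_{\leq t}$ together with $p_L<p_H$ --- a step the paper's shorter derivation glosses over, since its closed form still contains $p_L(\mu)$ on the right-hand side. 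Two small points to tighten: the first-order condition for $p_L$ presumes $p_L\in(0,p_1)$, which does hold for $\mu<1$ because Lemma~\ref{lem:plph} forces genuine randomization with $\mathbb E[p_2^\rej]=p_1$, but deserves a sentence; and the $\mu=1$ endpoint, where $\partial G/\partial p_1$ vanishes, is handled only in outline --- though strict monotonicity on the closed interval already follows from the open interval plus continuity, so this matches the level of care in the paper's own proof.
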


We prove Lemma~\ref{lem:p1monotone} in Appendix~\ref{app:p1monotone}.
Finally, we show that any $p_1$ that has a sophisticated-focused implementation cannot have a naive-focused implementation as well.
Hence, for the first-round prices $p_1$ supporting sophisticated-focused continuations, their corresponding threshold is unique: there is a bijection between such prices and thresholds.
We prove this result in Appendix~\ref{app:p1nonaive}.

\begin{lemma}\label{lem:p1nonaive}
    Any $p_1$ with a sophisticated-focused implementation does not admit a naive-focused implementation.
\end{lemma}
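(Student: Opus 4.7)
}

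My plan is to argue by contradiction. Suppose $p_1$ admits both a sophisticated-focused implementation $(p_1,p_2^{\rej,S},p_2^{\acc,S},t_S)$ and a naive-focused implementation $(p_1,p_2^{\rej,N},p_2^{\acc,N},t_N)$. The first step is to order the two thresholds via the sign of $\phi(t,\mu):=(1-\mu)R'(t)+\mu(1-F(t))$ from Lemma~\ref{lem:ttop}. I would compute the left derivative of the accept-revenue curve at $t$ directly from equation \eqref{eq:facc}, which after simplification yields
\[
\lim_{p\uparrow t}\frac{dR^\acc}{dp}(p)\;=\;\frac{1-\mu_\acc}{(1-\mu)(1-F(p_1))}\,\phi(t,\mu).
\]
Since sophisticated-focused means $p_2^\acc\ge t$ (so $R^\acc$ is not decreasing at $t^-$) while naive-focused means $p_2^\acc<t$ (so $R^\acc$ is strictly decreasing at $t^-$), this identity gives $\phi(t_S,\mu)\ge 0\ge\phi(t_N,\mu)$ with the second inequality strict. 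Under Assumption~\ref{ass:regular}, $\phi(\cdot,\mu)$ is strictly decreasing (its $t$-derivative is $(1-\mu)R''(t)-\mu f(t)<0$), so $t_S<t_N$.

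The second step is to combine the two indifference equations. Using Lemma~\ref{lem:p2adet}, sophisticated-focused gives $p_2^{\acc,S}=\max(t_S,p^*)\ge t_S$, so Definition~\ref{def:continuation}(3) collapses to $p_1=\mathbb E[p_2^{\rej,S}]$. In the naive-focused case, $p_2^{\acc,N}<t_N$ and the indifference equation becomes $p_1=t_N-p_2^{\acc,N}+\mathbb E[p_2^{\rej,N}]$. Subtracting, I obtain the key identity
\[
\mathbb E[p_2^{\rej,S}]-\mathbb E[p_2^{\rej,N}]\;=\;t_N-p_2^{\acc,N}\;>\;0,
\]
so the sophisticated-focused expected reject price would have to strictly exceed the naive-focused one.

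The main obstacle, which is the third step, is to show that this inequality is impossible, and this is where I would spend the bulk of the technical work. The strategy is to prove a comparative statics fact: with $p_1$ and $\mu$ fixed, as $t$ increases the reject distribution $F^\rej(\cdot;p_1,t)$ shifts stochastically upward (a direct calculation shows $\partial F^\rej/\partial t<0$ pointwise, because both $\mu_\rej$ and $F_{\le t}$ move in the required direction), and consequently the mean of any selection $p_2^\rej$ from $\arg\max R^\rej(\cdot;p_1,t)$ is nondecreasing in $t$. Combined with $t_S<t_N$, this gives $\mathbb E[p_2^{\rej,S}]\le \mathbb E[p_2^{\rej,N}]$, contradicting the identity above. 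The delicate part is handling the potential mixing from Lemma~\ref{lem:plph}: on the sophisticated-focused side the indifference constraint forces $p_2^{\rej,S}$ to be a specific mixture of the two peaks $p_L^S\le p_1\le p_H^S=p_{\le t_S}^*$, while on the naive-focused side $p_2^{\rej,N}$ is typically deterministic but may itself be a tied mixture; the monotonicity argument needs to apply uniformly across these selection regimes, which likely requires a careful envelope-type argument using strict concavity of $R$ from Assumption~\ref{ass:regular} and the ordering statements in Lemma~\ref{lem:ordering}.
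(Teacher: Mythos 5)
Your steps (1) and (2) are correct and track the paper's proof closely: the paper also orders the thresholds ($t^s<t^n$, via Lemma~\ref{lem:ttop}/Lemma~\ref{lem:sophp2a}, which is exactly your sign computation for $\phi$), and it also closes the argument by feeding the naive-focused indifference equation back in, which is an algebraic rearrangement of your identity $\mathbb E[p_2^{\rej,S}]-\mathbb E[p_2^{\rej,N}]=t_N-p_2^{\acc,N}>0$. The problem is step (3), which you correctly identify as the crux but do not actually prove, and the one concrete principle you invoke there is not valid. First-order stochastic dominance of $F^\rej(\cdot;p_1,t)$ in $t$ (which does hold here --- your pointwise computation is right) does \emph{not} in general imply that maximizers of $p(1-F^\rej(p))$, let alone the mean of an arbitrary measurable selection from the argmax correspondence, are nondecreasing; monopoly prices are not monotone under stochastic dominance, so the word ``consequently'' in your third step is a non sequitur. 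This is precisely where the paper's proof spends all of its effort: the chain of inequalities (\ref{eq:1})--(\ref{eq:7}) shows directly that $R^{\rej,n}(p)\leq R^{\rej,n}(p^*_{\leq t^s})$ for every $p\leq p^*_{\leq t^s}$, so that every realization of $p_2^{\rej,n}$ is at least $p^*_{\leq t^s}$, which strictly exceeds $p_1=\mathbb E[p_2^{\rej,s}]$ by Lemma~\ref{lem:plph}; the contradiction with the naive threshold equation then follows.

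The gap is fillable along the comparative-statics lines you sketch, but only by exploiting the explicit two-piece structure of $R^\rej$ rather than stochastic dominance. Writing $R^\rej(p;t)=p\bigl(1-F(p)/(\mu F(t)+(1-\mu)F(p_1))\bigr)$ on $[0,p_1]$ and $R^\rej(p;t)=\mu_\rej R_{\leq t}(p)$ on $[p_1,t]$, one checks that both local maximizers $p_L$ and $p_H=p^*_{\leq t}$ are nondecreasing in $t$, and (by an envelope computation) that $\tfrac{\partial}{\partial t}R^\rej(p_H;t)\geq \tfrac{\partial}{\partial t}R^\rej(p_L;t)$, so that starting from the tie $R^\rej(p_L;t_S)=R^\rej(p_H;t_S)$ forced by the sophisticated-focused mixture, any $t_N>t_S$ has its unique global maximum at $p^*_{\leq t_N}\geq p^*_{\leq t_S}>p_1$. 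That argument, or the paper's direct inequality chain, is the missing content; as submitted, your proof establishes the framing but not the theorem.
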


\subsection{Main Results}
\label{sec:2robust}

We now state our main conclusions for the two-round analysis.
First, for sufficiently high $\mu$, the structure of equilibrium matches that of the sophisticated-focused regime described in the linear-demand analysis.

\begin{theorem}\label{thm:robust}
    For every distribution $F$, there exists some $\mu_F<1$ such that for all $\mu>\mu_F$, equilibrium is sophisticated-focused.
\end{theorem}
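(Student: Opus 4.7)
The plan is to establish that, for $\mu$ close to $1$, any naive-focused continuation yields revenue at most approximately $R^*$, while a specific sophisticated-focused continuation yields revenue strictly exceeding $R^*$. Combined with Lemma~\ref{lem:p1nonaive}, this forces the seller's equilibrium first-round price to admit (only) a sophisticated-focused implementation.

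For the naive-focused bound, I would first invoke the contrapositive of Lemma~\ref{lem:ttop}: any naive-focused threshold $t$ satisfies $(1-F(t)) \leq (-R'(t))(1-\mu)/\mu$. Since $-R'(t) = tf(t) - (1-F(t))$ is uniformly bounded on $[0,1]$ under Assumption~\ref{ass:regular}, this forces $1-F(t) = O(1-\mu)$ uniformly across naive-focused continuations. I would then decompose the continuation's revenue into three pieces. The round-$1$ term $p_1[\mu(1-F(t)) + (1-\mu)(1-F(p_1))]$ and the round-$2$-accept term $p_2^\acc[\mu(1-F(t)) + (1-\mu)(1-F(p_2^\acc))]$ are each $O(1-\mu)$, using $p_1, p_2^\acc \leq 1$. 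The round-$2$-reject term $p_2^\rej[\mu(F(t)-F(p_2^\rej)) + (1-\mu)(F(p_1)-F(p_2^\rej))^+]$ is bounded by $p_2^\rej(1-F(p_2^\rej)) = R(p_2^\rej) \leq R^*$, using $F(t), F(p_1) \leq 1$ on each summand. Thus every naive-focused continuation has revenue at most $R^* + O(1-\mu)$, a uniform bound.

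For the sophisticated-focused side, I would exhibit the continuation with threshold $t = p^*$. This is sophisticated-focused for every $\mu$ since $(1-\mu)R'(p^*) + \mu(1-F(p^*)) = \mu(1-F(p^*)) \geq 0$ using $R'(p^*) = 0$. By Lemma~\ref{lem:p1monotone}, the implementing first-round price $p_1(p^*,\mu)$ varies continuously in $\mu$. At $\mu=1$, the indifference equation reduces to $p_1 = \mathbb{E}[p_2^\rej]$ (the $(t-p_2^\acc)^+$ term vanishes because $p_2^\acc = \max(t, p^*) = p^* = t$ by Lemma~\ref{lem:p2adet}), and Lemma~\ref{lem:plph} gives $p_2^\rej = p^*_{\leq p^*}$ deterministically, yielding $p_1(p^*,1) = p^*_{\leq p^*}$. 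A direct calculation shows the $\mu=1$ revenue equals $R^* + R(p^*_{\leq p^*})$: the round-$2$-accept contribution alone is $p^*(1-F(p^*)) = R^*$, and the round-$1$ and round-$2$-reject contributions combine to $p^*_{\leq p^*}(1-F(p^*_{\leq p^*})) = R(p^*_{\leq p^*})$. Assumption~\ref{ass:regular} ensures $p^*_{\leq p^*} \in (0, p^*)$, so $R(p^*_{\leq p^*}) > 0$ and the total strictly exceeds $R^*$. By continuity in $\mu$ of all relevant quantities, the revenue from this continuation remains $\geq R^* + R(p^*_{\leq p^*})/2$ throughout a neighborhood of $\mu = 1$.

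Combining these bounds: for $\mu$ sufficiently close to $1$, the above sophisticated-focused continuation strictly outperforms every naive-focused continuation. Hence the seller's equilibrium $p_1$ must admit a sophisticated-focused implementation, and Lemma~\ref{lem:p1nonaive} then rules out any naive-focused implementation for this $p_1$. I expect the main technical obstacle to be verifying uniform continuity of the revenue from the constructed sophisticated-focused continuation in $\mu$, particularly because $p_2^\rej$ is a two-point mixture at $\mu < 1$ that collapses to a single atom at $\mu = 1$; the mixing weight is pinned down only indirectly through the indifference equation $\mathbb{E}[p_2^\rej] = p_1(p^*,\mu)$, so care is needed to control the expected round-$2$-reject revenue through Lemma~\ref{lem:plph}'s explicit characterization of $p_L$ and $p_H$.
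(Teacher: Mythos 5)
Your proof is correct and follows essentially the same route as the paper's: use Lemma~\ref{lem:ttop} to force naive-focused thresholds to have $F(t)$ near $1$ (hence revenue at most $R^*+O(1-\mu)$), and compare against a sophisticated-focused continuation whose revenue stays near the $\mu=1$ value $R^*+R(p^*_{\leq p^*})>R^*$ by the continuity/monotonicity of Lemma~\ref{lem:trev}. The only cosmetic difference is that you instantiate the witness threshold as $t=p^*$ directly, whereas the paper uses the $\mu=1$ equilibrium threshold $t^*$ and an intermediate $\epsilon_F/6$ cutoff; your worry about the collapsing two-point mixture is already absorbed by Lemma~\ref{lem:trev}'s reamortization.
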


Note that the structural results of Section~\ref{sec:2basics} pin down the price structure in sophisticated-focused equilibrium.
Hence, for $\mu>\mu_F$, we have $p_1=\mathbb E[p_2^\rej]$ and $p_2^\acc=\max(t,p^*)$, by Lemmas~\ref{lem:plph} and \ref{lem:p2adet}, respectively.
The second main result of the two-round analysis is that in the sophisticated-focused region, adding sophisticated buyers actually improves the seller's revenue.

\begin{theorem}\label{thm:2rev}
    Let $\mu_F$ be as in Theorem~\ref{thm:robust}.
    For all $\mu>\mu_F$, the seller's equilibrium revenue is strictly increasing for $\mu\in(\mu_F,1]$. That is, for any $\mu_F<\mu<\mu'\leq 1$, the seller's revenue is strictly greater in any PBE for $\mu'$ than for $\mu$.
\end{theorem}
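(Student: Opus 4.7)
The plan is to reparameterize equilibria by the buyer's threshold $t$ rather than the seller's first-round price $p_1$, derive a closed form for the seller's revenue as a function $\Pi(t,\mu)$, and conclude via a revealed-preference argument after showing $\Pi(t,\mu)$ is strictly increasing in $\mu$ at every fixed sophisticated-focused $t$. By Theorem~\ref{thm:robust}, equilibrium for $\mu > \mu_F$ is sophisticated-focused; Lemmas~\ref{lem:ttop} and \ref{lem:p1nonaive} together imply each such equilibrium corresponds to a unique threshold $t^*(\mu)$ with $p_1 = p_1(t^*(\mu),\mu)$, so $\Rev(\mu) = \Pi(t^*(\mu),\mu)$. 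Because a sophisticated-focused $t$ at $\mu$ remains sophisticated-focused at every $\mu' > \mu$ (noted after Lemma~\ref{lem:ttop}), the seller's optimization gives $\Rev(\mu') \geq \Pi(t^*(\mu), \mu')$, and strict monotonicity in $\mu$ then closes the chain.

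The bulk of the work is obtaining the closed form. Round 1 revenue is $p_1 \cdot [(1-\mu)(1-F(p_1)) + \mu(1-F(t))]$ by inspection. For the round 2 accept path, Lemma~\ref{lem:p2adet} gives $p_2^\acc = \max(t,p^*)$ deterministically; because $p_2^\acc \geq t \geq p_1$, both naive and sophisticated buyers accept round 2 exactly when $v \geq p_2^\acc$, the round 1 acceptance probability cancels the denominators in \eqref{eq:facc}, and the contribution collapses to $R(\max(t,p^*))$. For the reject path, Lemma~\ref{lem:plph} identifies $p_H$ as the monopoly price of $F_{\leq t}$, and since $p_L$ and $p_H$ both maximize $R^\rej$ the expected reject revenue equals $\Pr(\mathrm{reject}) \cdot R^\rej(p_H)$; the indifference condition combined with Lemma~\ref{lem:plph} forces $p_H > p_1$ for $\mu < 1$, so only sophisticated buyers with $v \in [p_H, t)$ contribute, and this term simplifies to $\mu \cdot p_H (F(t) - F(p_H))$.

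With the closed form in hand, differentiating at fixed $t$ and regrouping using $R'(p_1) = (1-F(p_1)) - p_1 f(p_1)$ gives
\[
\frac{\partial \Pi}{\partial \mu}\Big|_t = \frac{dp_1}{d\mu}\bigl[(1-\mu)R'(p_1) + \mu(1-F(t))\bigr] + p_1\bigl[F(p_1) - F(t)\bigr] + p_H\bigl[F(t) - F(p_H)\bigr].
\]
The first term is strictly positive: $\frac{dp_1}{d\mu} > 0$ by Lemma~\ref{lem:p1monotone}, and strict concavity of $R$ combined with $p_1 < t$ makes the bracket strictly exceed the sophisticated-focused quantity $(1-\mu)R'(t) + \mu(1-F(t)) \geq 0$. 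The remaining two terms rearrange to $p_H(F(t) - F(p_H)) - p_1(F(t) - F(p_1))$, which is nonnegative because $p_H$ maximizes $p \mapsto p(F(t) - F(p))$ on $[0,t]$ and $p_1$ is a feasible candidate, with strict inequality whenever $p_H > p_1$. Integrating $\frac{\partial \Pi}{\partial \mu}|_t > 0$ over $[\mu,\mu']$ yields $\Pi(t^*(\mu), \mu') > \Pi(t^*(\mu), \mu)$, completing the argument.

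The main obstacle is the closed-form derivation in the second paragraph. Collapsing the two-type mixture posteriors in \eqref{eq:frej} and \eqref{eq:facc} requires recognizing on the accept side that $p_2^\acc$ dominates both type-specific cutoffs, and on the reject side that the seller's randomization is designed precisely so that $R^\rej(p_L)=R^\rej(p_H)$, so the expected reject revenue can be evaluated at $p_H$ where only sophisticated buyers contribute. Assembling these reductions cleanly requires combining several of the structural lemmas with the buyer indifference equation; once they are in place the derivative calculation and the revealed-preference step are both short.
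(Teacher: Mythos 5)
Your proposal is correct and follows essentially the same route as the paper: the paper's Lemma~\ref{lem:trev} derives the identical closed form $\Rev = R(p_1) + \mu\bigl(p^*_{\leq t}(F(t)-F(p^*_{\leq t})) - p_1(F(t)-F(p_1))\bigr) + R(p_2^\acc)$, differentiates in $\mu$ at fixed $t$, and then closes the theorem by the same revealed-preference chain $\rev(\mu)=\rev(\mu,t)<\rev(\mu',t)\leq \rev(\mu')$. The only (cosmetic) difference is in how positivity of the derivative is argued: you group the $dp_1/d\mu$ terms and invoke the sophisticated-focused condition $(1-\mu)R'(t)+\mu(1-F(t))\geq 0$ together with optimality of $p^*_{\leq t}$ for the truncated revenue, whereas the paper lower-bounds the whole derivative by $(1-\mu)R'(p_1)$ and uses $p_1<p^*$; both are valid.
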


The main insight driving both results is that we may fix any sophisticated threshold $t$ and analyze its revenue as $\mu$ varies.
In more detail, note that Lemma~\ref{lem:ttop} implies that whenever a threshold $t$ has a sophisticated-focused implementation, this implementation is unique, and $t$ continues to have a sophisticated-focused implementation for any higher $\mu$.
We therefore fix $t$ and vary $\mu$, analyzing the revenue of this implementation as its prices vary with $\mu$.

\begin{lemma}\label{lem:trev}
    Fix any threshold $t$, and let $\mu$ be such that $t$ is implementable in a sophisticated-focused continuation.
    The revenue of this continuation is continuous and strictly increasing in $[\mu,1]$.
 \end{lemma}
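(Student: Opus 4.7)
My plan is to fix $t$ and study $S(\mu)$, the revenue of the unique sophisticated-focused continuation implementing $t$ at $\mu$ (uniqueness by Lemma~\ref{lem:ttop}). Continuity of $S$ follows from continuity in $\mu$ of each determining quantity: $p_1(t,\mu)$ is differentiable by Lemma~\ref{lem:p1monotone}; $p_2^\acc=\max(t,p^*)$ and $p_H$ (the monopoly price of $F_{\leq t}$) are independent of $\mu$ by Lemmas~\ref{lem:p2adet} and~\ref{lem:plph}; and $p_L$, the unique (by strict concavity of $R$, and hence of $R_{\leq t}$ and $R_{\leq p_1}$) maximizer on $[0,p_1]$ of a concave objective whose coefficients vary continuously with $\mu$ and $p_1$, varies continuously as well. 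The mixing weights are then pinned down by $p_1 = q_L p_L + q_H p_H$.

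For strict monotonicity I would decompose
\[
S(\mu) \;=\; p_1\,\alpha \,+\, R(p_2^\acc) \,+\, W,
\]
where $\alpha=\mu(1-F(t))+(1-\mu)(1-F(p_1))$ is the round-1 accept probability and $W$ is the absolute round-2 reject revenue. The middle term is constant in $\mu$. The main technical trick is for $W$: for $\mu<1$, Lemma~\ref{lem:plph} guarantees nondegenerate mixing $p_L\neq p_H$, and combined with $p_L\leq p_1=q_Lp_L+q_Hp_H$ this forces $p_1<p_H$. Hence no naive buyer accepts in the $p_H$ branch, so $W = \mu\,p_H(F(t)-F(p_H))$ with all $\mu$-dependence explicit, yielding $\tfrac{dW}{d\mu}=p_H(F(t)-F(p_H))$ without needing to differentiate through $p_L$ or the mixing weights.

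Differentiating $p_1\alpha$ directly and combining gives
\[
\frac{dS}{d\mu} \;=\; \bigl[p_H(F(t)-F(p_H)) - p_1(F(t)-F(p_1))\bigr] \,+\, \tfrac{dp_1}{d\mu}\bigl[\mu(1-F(t)) + (1-\mu)R'(p_1)\bigr].
\]
The first bracket equals $F(t)\bigl[R_{\leq t}(p_H) - R_{\leq t}(p_1)\bigr]$, which is strictly positive for $\mu<1$ because $R_{\leq t}$ is strictly concave on $[0,t]$ with unique maximizer $p_H$ and $p_1<p_H$. The second bracket is nonnegative: $\tfrac{dp_1}{d\mu}>0$ by Lemma~\ref{lem:p1monotone}, and $R'(p_1)\geq 0$ because Lemma~\ref{lem:ordering} gives $p_L,p_H\leq p^*$, so $p_1\leq p^*$ as a convex combination. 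Thus $\tfrac{dS}{d\mu}>0$ on $[\mu,1)$, and continuity of $S$ at $\mu=1$ upgrades this to strict monotonicity on $[\mu,1]$. The two things to watch carefully are (a) the $p_H>p_1$ argument that enables the clean form for $W$, and (b) verifying $p_1\leq p^*$ so that the second bracket has the right sign; both are small but essential to make the derivative manifestly positive without any further computation.
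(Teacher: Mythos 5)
Your proposal is correct and follows essentially the same route as the paper's proof in Appendix~\ref{app:trev}: the same decomposition into $p_1\alpha + R(p_2^\acc) + W$, the same reamortization of the reject revenue to $\mu\, p^*_{\leq t}(F(t)-F(p^*_{\leq t}))$ using the indifference between $p_L$ and $p_H$ and the fact that $p_H\geq p_1$, and the same sign argument (the bracket is positive because $p^*_{\leq t}$ uniquely maximizes $R_{\leq t}$ and $p_1<p^*_{\leq t}$, while the $\tfrac{dp_1}{d\mu}$ coefficient is nonnegative via Lemma~\ref{lem:p1monotone} and $p_1\leq p^*$ from Lemma~\ref{lem:ordering}). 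The only cosmetic difference is that the paper lower-bounds that coefficient by $(1-\mu)R'(p_1)$ whereas you keep it as $\mu(1-F(t))+(1-\mu)R'(p_1)$; these are equivalent.
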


The challenge in proving Lemma~\ref{lem:trev} is that per capita, naive buyers actually generate more revenue than sophisticated buyers.
The seller's revenue gains are instead driven by shifts in the price structure, towards lower-revenue prices.
Fixing $t$ simplifies the analysis of this price shift: by Lemma~\ref{lem:p2adet}, $p_2^\acc=\max(t,p^*)$ does not change with $\mu$.
Moreover, Lemma~\ref{lem:plph} implies that $p_2^\rej$ is supported on two prices that yield equal revenue for the seller.
One of these prices is $p^*_{\leq t}$, which again does not change with $\mu$ once $t$ is fixed.
Hence, the only only price that requires more detailed analysis is $p_1$.
But $p_1$ is increasing, by Lemma~\ref{lem:p1monotone}.
Accounting for the revenue carefully in terms of these three prices yields the result; the detailed proof can be found in Appendix~\ref{app:trev}.

To prove Theorem~\ref{thm:robust}, then, we show that for high enough $\mu$, the there is a sophisticated-focused threshold that dominates all naive-focused thresholds. 
In more detail, for high-enough $\mu$, the seller can emulate the fully-sophisticated equilibrium (which has a sophisticated-focused continuation) and attain nearly the same revenue.
Meanwhile, Lemma~\ref{lem:ttop} implies that only very high thresholds can be naive-focused.
Intuitively, these thresholds imply that the seller earns very little and revenue gains very little information in period $1$, and are thus low-revenue overall.
Hence, the sophisticated-focused thresholds are preferable.
The full proof can be found in Appendix~\ref{app:robust}
Given Theorem~\ref{thm:robust}, it is straightforward to prove Theorem~\ref{thm:2rev}.

\begin{proof}[Theorem~\ref{thm:2rev}]
    Consider $\mu$ and $\mu'$ satisfying $\mu_F<\mu<\mu'\leq 1$.
    Let $t$ be the on-path threshold for $\mu$.
    By Lemma~\ref{lem:ttop}, $t$ also has a (unique) sophisticated-focused implementation under $\mu'$.
    By Lemma~\ref{lem:trev}, the revenue from this implementation of $t$ under $\mu'$ is higher than that under $\mu$, i.e.\ $\rev(\mu,t)< \rev(\mu',t)$.
    Moreover, under $\mu'$, the first-round price $p_1(\mu',t)$ implementing $t$ has no other implementation, and hence the seller could attain $\rev(\mu',t)$.
    The seller picks the revenue-optimal choice of $p_1$.
    Letting $\rev(\mu)$ and $\rev(\mu')$ denote the PBE revenues, we have:
    \begin{equation*}
        \rev(\mu)=\rev(\mu,t)<\rev(\mu',t)\leq \rev(\mu').
    \end{equation*}
\end{proof}

\subsection{Comparative Analyses}
In Appendix~\ref{app:comparative}, we give three analyses aimed at at more clearly identifying the cause of the nonmonotonicity observed in Theorem~\ref{thm:2rev}.
First, we confirm the intuition that naive buyers are per capita more lucrative for the seller.
Consequently, it cannot be that revenue increases with $\mu$ because of population-driven revenue effects.
Second, we show that for the linear-demand case, the welfare in equilibrium {\em decreases} with the level of sophistication.
Put another way, the introduction of naive buyers causes the Coasian dynamics of the full-sophistication equilibrium to grow even more Coasian as the sophisticated buyers gain further bargaining power.
This also rules out an alternative explanation of the revenue gain, namely that the revenue increases because the overall welfare increases.
Finally, we show that the nonmonotonicity is tied to the seller's lack of commitment.
We analyze the same two-period model in the presence of seller commitment and show that in this alternative setting, adding sophisticated buyers only harms the seller's revenue.

\section{Long Horizon Analysis}

We now turn to long time horizons, where the buyer and seller transact for many time periods.
In particular, we analyze an infinite horizon model with time discounting and high discount factor $\delta$.\footnote{An alternative would be to consider a model with a long but finite horizon, with little or no discounting. We opt for the infinite-horizon setting because of its relative analytical tractability. Even without naive buyers, analyses of finite horizons require additional strong assumptions on the values \citep{ht88} or the discount factor \citep{gavious2017price}.}
This model is well-studied \citep{dps19,ilpt17}, and previous work has focused on one particular equilibrium, termed the {\em no-learning equilibrium}.
In the no-learning equilibrium, which exists for $\delta\geq 1/2$, the seller offers the lower bound of the buyer's support every period.
All buyer types accept this price every period, leading the seller's beliefs to remain static.
The seller's revenue in this equilibrium is trivial: in particular, when the lower bound of the buyer's support is $0$, the seller gives the item away for free every period.
This equilibrium is enforced by punishments for off-path behavior: any deviation price is rejected by all buyers, and failing to reject a such a price leads to prohibitively high prices in the future.
A full account of the equilibrium is included in Appendix~\ref{app:nolearning} for completeness.
Despite the counterintuitive nature of this equilibrium, \citet{ilpt17} show that the no-learning equilibrium is the unique survivor of a set of ad hoc equilibrium refinements.

The no-learning equilibrium is not robust to the introduction of naive buyers.
This is best illustrated when the buyer's value distribution is supported on some interval $[0,H]$.
In this case, the no-learning equilibrium dictates that the seller offer a price of $0$ for the entire game, earning revenue $0$.
However, as long as there is a positive mass of naive buyers with positive value --- say, at least $v$ --- the seller can earn strictly positive revenue by posting $v$ each round.
Hence, posting $0$ every round is no longer sequentially rational, no matter the strategy of the sophisticated buyers.
This discussion proves the following theorem:
\begin{theorem}
	For the infinite-horizon setting with $\sophprob<1$, there is no PBE where the seller offers price $0$ every period.
	In particular, the no-learning strategies in Appendix~\ref{app:nolearning} are not an equilibrium.
\end{theorem}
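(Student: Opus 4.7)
My plan is to prove this by contradiction via a one-shot deviation argument that exploits the seller's sequential rationality requirement in the definition of PBE. Suppose for contradiction that $(\boldsymbol\sigma,\text{beliefs})$ is a PBE in which the seller posts $p_t=0$ along the equilibrium path in every period $t\geq 1$. Since accepting $0$ yields nonnegative utility and is weakly dominant for any buyer type (naive by definition, and sophisticated by sequential rationality, since no future play can be worse than $0$), every buyer accepts at every on-path history, and the seller's equilibrium continuation payoff from any on-path history is $0$.

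I then construct a profitable deviation at the initial node $t=1$. Because $\mu<1$, naive buyers occur with positive probability $1-\mu>0$. I pick $p>0$ with $\Pr[v\geq p\mid \theta=N]>0$; such a $p$ exists whenever the naive-conditional distribution $F^N$ assigns positive mass to $(0,H]$, which is implicit in the setting (otherwise the seller has no hope of positive revenue under any strategy, and the theorem is vacuously uninteresting). If the seller deviates to this $p$ in period $1$, then naive buyers with $v\geq p$ accept, contributing expected revenue at least $(1-\mu)\,p\,(1-F^N(p))>0$ in the first period alone. The continuation revenue from period $2$ onward is, in the worst case, $0$ (no strategy can yield negative revenue, since a seller can always post arbitrarily high prices that all buyers reject).

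Comparing total discounted payoffs, the deviation gives the seller at least $(1-\mu)\,p\,(1-F^N(p))>0$, strictly more than the equilibrium payoff of $0$. This contradicts sequential rationality of the seller's period-$1$ strategy and establishes that no PBE can prescribe price $0$ in every period. The second sentence of the theorem follows immediately, since the no-learning strategies in Appendix~\ref{app:nolearning} prescribe price $0$ on the equilibrium path.

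The argument is essentially already given in the paragraph preceding the theorem, so I do not anticipate a genuine obstacle; the only subtlety to handle carefully is verifying that $(1-\mu)\,p\,(1-F^N(p))>0$ requires only that the naive population includes buyers with strictly positive values, which is the implicit setting of interest. The sophisticated buyers' off-path strategy after the deviation is irrelevant to the seller's current-period revenue and cannot push the seller's total payoff below $0$.
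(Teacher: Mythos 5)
Your argument is correct and is essentially the same as the one the paper gives in the discussion immediately preceding the theorem: since naive buyers are price takers, a deviation to a positive price with positive naive acceptance probability earns strictly positive revenue against the zero revenue of the all-zero price path, violating the seller's sequential rationality. The only difference is cosmetic — the paper deviates to a fixed positive price every round while you use a one-shot deviation and bound the continuation by zero — and you correctly identify the same implicit assumption (a positive mass of naive buyers with positive value) that the paper relies on.
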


This section characterizes equilibria in the presence of naive buyers.
The central observation is that there exist natural equilibria that are qualitatively distinct from the full-sophistication no-learning equilibrium.
In Section~\ref{sec:infeq}, we solve for equilibrium in an example, to illustrate the impact of naive buyers.
The equilibrium has three notable features.
First, the structure of equilibrium is simple, and robust to perturbations in parameters.
Second, the seller successfully learns about the sophisticated buyers, and obtains nontrivial revenue.
Finally, the successful learning and nontrivial revenue hold {\em even when the fraction of naive buyers is vanishingly small}.
The equilibrium does not converge to the no-learning equilibrium as the probability of naive buyers goes to zero.

In Section~\ref{sec:infrev}, we provide a more general revenue analysis of infinite-horizon PBE.
We identify two features of the equilibrium from Section~\ref{sec:infeq}, and show that these features are sufficient for the seller to obtain high revenue.
In particular, we show that the seller's revenue approximates the maximum revenue attainable {\em with commitment}.
The approximation ratio depends on the discount factor $\delta$, and for any $\delta$ bounded away from $1$, the approximation ratio is constant.
Notably, the approximation ratio does not depend on the probability that the buyer is naive.

We conclude our analysis of the long horizon setting by observing that the equilibrium of Section~\ref{sec:infeq} persists even with zero naive buyers.
This suggests an alternative equilibrium refinement for the fully-sophisticated regime which eliminates the no-learning equilibrium and instead admits other equilibria with high revenue.
Given that the no-learning equilibrium is non-robust and viewed as counterintuitive, this refinement merits further study.
We elaborate in Section~\ref{sec:infrefine}.

\subsection{Motivating Example}
\label{sec:infeq}

In this section, we demonstrate via an example that the introduction of even a small mass of naive buyers can lead to qualitatively different dynamics in equilibrium, compared to the no-learning equilibrium of the fully-sophisticated regime.
For the example, we consider a discrete distribution: we assume both naive and sophisticated values are uniformly distributed over $V=\{1,10,20\}$.
To simplify the calculations and to emphasize the non-robustness of the all-sophisticated equilibrium, we take the mass $(1-\sophprob)$ of naive buyers to be some small $\epsilon$.
We take discount factor $\delta=2/3$.
The equilibrium structure we obtain will be robust to perturbations of all the parameters above.
Note that even though values are supported on a finite set, we allow the seller to offer any nonnegative price, including those outside $V$.

Before describing the equilibrium, it is helpful to make a few comparisons.
First, note that if the seller had commitment power, the optimal fixed price would be tied between $10$ and $20$, each yielding expected per-round revenue of $20/3$ and total discounted revenue $20/3\cdot1/(1-\delta)=20$.
In other words, the seller would be best served targeting the high-value buyers.
Meanwhile, if the seller had no commitment power, and if the naive buyers were not present, the no-learning equilibrium dictates that the seller to offer $1$ every round, for total revenue $1/(1-\delta)=3$.

In the equilibrium we describe below, the seller targets the high-value sophisticated buyers, albeit less effectively than if they had commitment power.
We give the on-path behavior of the seller and the sophisticated buyers.
That is, for each round and on-path history, we give the seller's optimal price and the buyer's threshold response.
To define the full equilibrium and to verify that agents are best responding, it is necessary to specify strategies for all histories, including those that are off the path of play.
We do so in Appendix~\ref{app:infinite3sup}.

\paragraph{Equilibrium Description.} The on-path equilibrium behavior is simple to describe.
In period $1$, the seller offers $p_1=2$.
Sophisticated buyers with value $20$ accept, those with value $10$ and $1$ reject (i.e.\ the threshold is $20$).
The behavior in subsequent periods depends on the period $1$ decision.
If $p_1$ was accepted, the seller offers $p_2^\acc=10$ perpetually, which is accepted by all remaining buyers.
If $p_1$ was rejected, the seller offers $p_2^\rej=1$ perpetually, which again is accepted by all remaining buyers.
The seller's expected discounted revenue (neglecting the $\epsilon$ mass of naive buyers) is
\begin{equation*}
	\tfrac{1}{3}\cdot(p_1+\tfrac{\delta}{1-\delta}\cdot p_2^\acc)+\tfrac{2}{3}\cdot\tfrac{\delta}{1-\delta}\cdot p_2^\rej=\tfrac{26}{3}=8.66\ldots
\end{equation*}

\paragraph{Incentive Analysis.}
We briefly sketch the incentive analysis, starting with period $2$.
After seeing a period $1$ accept, the seller's beliefs are supported on $\{10,20\}$ for the naive buyers, and $\{20\}$ for sophisticated buyers.
It this continuation game, incentives resemble those of the fully-sophisticated no-learning equilibrium.
The sophisticated buyers reject any price above $10$.
The seller must choose between offering $10$, which is accepted by all remaining types, or offering $20$, which is only accepted by naive buyers with value $20$.
With sufficiently few naive buyers, $p_2^\acc=10$ is optimal.
The continuation following a period $1$ reject has a similar structure: the support of naive buyers is $\{1\}$, and of sophisticated buyers is $\{1,10\}$.
All prices above $1$ are rejected by sophisticated buyers.

Given the period $2$ strategies, we can analyze period $1$.
A sophisticated buyer with value $20$ is indifferent between accepting and rejecting in period $1$.
To see this, note that the utility of such a buyer for accepting is $U^\acc=20-p_1+\tfrac{\delta}{1-\delta}(20-p_2^\acc)=38$ and for rejecting is $U^\rej=\tfrac{\delta}{1-\delta}(20-p_2^\rej)=38$.
Finally, the seller's period $1$ price is chosen to be the highest $p_1$ capable of inducing this buyer response:
Any higher price leads to a strict incentive for value-$20$ buyers to reject.
Lower prices induce the value-$20$ buyers to accept, but are dominated by $p_1=2$.

\paragraph{Off-Path Analysis: Notable Details.}
A full analysis of the strategies which includes off-path histories is given in Appendix~\ref{app:infinite3sup}.
For the time being, we highlight two structural elements that are present in the example and will be shown in the next section to imply a general revenue guarantee.

First, we note that for any history, the seller never prices above of below the support of the naive buyers remaining in the distribution.
To see that the seller never prices above the current upper bound $\h_t^N$ of the naive support, note that, any price $p> \h_t^N$ is rejected by all sophisticated buyers.
Furthermore, accepting such a $p>\h_t^N$ is off-path, so on such an accept, we update the beliefs to full sophistication with value $20$.
The seller then posts a price of $20$ for the rest of the game.
Since any price $p> \h_t^N$ will be rejected, such prices are dominated by the optimal $p\leq \h_t^N$.
Meanwhile, rejecting $p>\h_t^N$ is sequentially rational for the sophisticated buyers.
If they accept, they get utility $v-p< v-\h_t^N$.
Meanwhile, if they reject, all future prices will be at most $\h_t^N$ by the previous argument, and hence their utility will be at least $\tfrac{\delta}{(1-\delta)}(v-\h_t^N)$, which is at least $v-p$ when $\delta\geq 1/2$.
The seller similarly doesn't price below the naive support because all such prices are accepted, which we discuss below.

The second noteworthy structural element of the example equilibrium is that starting from any history, the seller is guaranteed at least $\ell_t^N/(1-\delta)$ in revenue, where $\ell_t^N$ is the lower bound of the naive buyers.
This follows because a price of $\ell_t^N$ is always accepted.
This behavior is sequentially rational, but this least obvious for period $1$, which we argue here.
(Given the full strategy descriptions in the appendix, this is simple to verify at all other histories.)
First, consider the utility of a sophisticated buyer with value $v$ who rejects a price $p_1\leq \ell_1^N=1$.
Such an action is off-path: we choose to update the seller's beliefs to the lowest sophisticated type in this case, i.e.\ $1$, and price at this value for all future rounds.
As a result, the buyer's utility for rejecting is $U^\rej(v)=\tfrac{\delta}{1-\delta}(v-1)=2v-2$.
Meanwhile, accepting leads the seller not to update their beliefs next round, and to commence the on-path price sequence by offering a price of $2$.
The buyer's utility from this price sequence is
\begin{equation*}
	U^\acc(v)=v-p_1+\delta\max(v-2+\tfrac{\delta}{1-\delta}(v-10),\tfrac{\delta}{1-\delta}(v-1)),
\end{equation*}
with the two cases in the max coming from the buyer's choice when faced with the period $2$ price of $2$.
It is straightforward to verify that $U^\acc(v)\geq U^\rej(v)$ for all $v\geq 1$.
This observation has two further consequences.
First, it implies that all prices strictly below $\ell_t^N$ are dominated and will not be offered in equilibrium.
Second, it implies a lower bound of $\ell_t^N/(1-\delta)$ on the seller's revenue after any history, as they could always price at the lower bound of the naive support in perpetuity.
In fact, it is exactly this pricing behavior that leads to the bulk of the seller's revenue from the on-path price sequence, starting in period $2$.

\subsection{Revenue Guarantee}
\label{sec:infrev}

The previous section gave an illustrative equilibrium for the infinite horizon model with heterogeneous buyers.
We observed that the equilibrium satisfied the two properties below:

\begin{property}[Naive-Justified Prices]\label{prop:a}
	After any seller history $h^S_t$ inducing beliefs supported on some naive type, the seller's equilibrium price $p_t(h^S_t)$ satisfies $\ell^N_t\leq p_t\leq \h^N_t$, where $\ell^N_t$ and $\h^N_t$ are the respective lower and upper bounds of the support for the naive buyer's value after $h_t^S$.
\end{property}

\begin{property}[Revenue Above Baseline]\label{prop:b}
	After any seller history $h^S_t$ inducing beliefs supported on some naive type, the seller's total expected discounted revenue starting from $h^S_t$ is at least $\ell^N_t/(1-\delta)$,  where $\ell^N_t$ is the lower bound of the support for the naive buyer's value after $h_t^S$.
\end{property}

Both \ca\ and \cb\ should be thought of as mild requirements.
The example equilibrium of Section~\ref{sec:infeq} demonstrates how these conditions can be satisfied by equilibria with natural structures, obtained via straightforward backward induction.
The main result of this section, informally, is that any threshold PBE satisfying \ca\ and \cb\ has high revenue. 
The result will hold for all continuous distributions, as well as for all discrete distributions with a sufficiently ``dense'' grid, to be formalized shortly.
Moreover, the main result of this section holds even if naive and sophisticated buyers have different value distributions (or equivalently, buyers' values are correlated with their level of sophistication).
To this end, define $F^S$ and $F^N$ to be the distributions of the value $v$ conditioned on the buyer being sophisticated and naive, respectively, and let $F$ denote the unconditional value distribution.
We assume throughout that $F^S$ and $F^N$ have the same support.
As in the rest of the paper, let $\sophprob$ denote the probability the buyer is sophisticated.

As a revenue benchmark, we consider the maximum expected discounted revenue the seller could attain if they had commitment power.
To upper bound this quantity, first consider the event $\mathcal E_N$ where the buyer is naive.
If the seller knew they buyer was naive, they could solve a Markov decision process (MDP) to obtain an optimal dynamic pricing policy.
Let $\rev^N$ denote the revenue of this policy against the naive buyers with distribution $F^N$.
Meanwhile, in the event $\mathcal E_S$ that the buyer was sophisticated, then a straightforward simulation argument shows that the seller's maximum revenue is achieved by posting a static price every round, in particular the monopoly price $p^*_S$ for $F^S$ (\citet{dps19}, Proposition 1).
This yields revenue $\sophprob \max_p p(1-F^S_+(p))/(1-\delta)$, where we write $F_+^S(x)=\prob[v<x~|~\mathcal E_S]$ to account for the fact that we break ties in favor of the seller.
(We otherwise write $F^S(x)=\prob[v\leq x~|~\mathcal E_S]$ without the subscript to denote the standard right-continuous CDF.)
Overall, we derive the following upper bound on the commitment benchmark, obtained by noting that the seller's commitment revenue given the knowledge of the buyer's sophistication is at most that without this knowledge:
\begin{equation}\label{eq:benchmark}
	\sophprob \max_p p(1-F^S_+(p))/(1-\delta)+(1-\sophprob)\rev^N.
\end{equation}

To characterize equilibrium revenue, we first note two simple lower bounds on equilibrium revenue.
First, for any equilibrium satisfying \cb\ with support lower bound $\ell=\inf V$, it holds that the seller's revenue is at least $\ell/(1-\delta)$.
Second, note that naive buyers will behave the same no matter the strategy of the sophisticated buyers.
Hence, the seller could choose to ignore the sophisticated buyers entirely, and instead play the optimal MDP-based pricing strategy for the naive buyers.
Hence:
\begin{lemma}\label{lem:trivialrev}
	For any threshold PBE satisfying \cb\, the seller's expected discounted revenue is at least $\max(\ell/(1-\delta),\rev^N)$.
\end{lemma}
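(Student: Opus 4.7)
The plan is to establish each of the two terms inside the maximum by a short, independent argument, and then combine them. The first bound, $\ell/(1-\delta)$, follows directly from Property~\ref{prop:b} applied at the initial seller history $h^S_1$. At time $1$, the seller's beliefs coincide with the prior, which assigns positive mass $1-\sophprob$ to naive types drawn from $F^N$; since $F^N$ and $F^S$ share the same support by assumption, the lower bound of the naive support at $h^S_1$ equals $\ell^N_1 = \ell = \inf V$. Property~\ref{prop:b} then yields expected discounted revenue at least $\ell^N_1/(1-\delta) = \ell/(1-\delta)$.

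For the second bound, I would use a PBE deviation argument that exploits the fact that naive-buyer behavior is independent of any strategy chosen by the sophisticated buyer. Let $\sigma^{\mathrm{MDP}}$ denote the seller strategy that at every history plays the action prescribed by the optimal MDP against a buyer known to be drawn from $F^N$, extended arbitrarily to histories outside the support of naive price-taking play. By sequential rationality of the seller at $h^S_1$, the PBE expected revenue is weakly greater than the expected revenue obtained by unilaterally switching to $\sigma^{\mathrm{MDP}}$ while the sophisticated buyer continues to follow its PBE strategy. Decomposing the deviation revenue by conditioning on the buyer's type: conditional on the buyer being naive, the buyer's response is price-taking irrespective of any sophisticated-buyer strategy, and so $\sigma^{\mathrm{MDP}}$ attains its definitional conditional revenue $\rev^N$; conditional on the buyer being sophisticated, the contribution to revenue is nonnegative because prices are nonnegative and purchase is voluntary. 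Summing the two contributions yields the $\rev^N$ lower bound, and taking the maximum with the first bound delivers the claim.

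The main step that requires any care is confirming that $\sigma^{\mathrm{MDP}}$ extends to a well-defined seller strategy on the entire history tree, which is immediate by arbitrary extension to histories unreachable under naive play, together with checking that Property~\ref{prop:b} indeed applies at the initial history (which it does, since $\sophprob<1$ ensures the prior supports naive types). I do not foresee a more serious technical obstacle: the lemma is essentially a short bookkeeping consequence of Property~\ref{prop:b} and seller sequential rationality, so all the real work of the section will lie in the subsequent revenue-approximation theorem rather than here.
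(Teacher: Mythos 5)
Your approach is the same as the paper's: the paper gives no separate proof environment for this lemma, deriving it from exactly the two observations you use --- \cb\ applied at the initial history for the $\ell/(1-\delta)$ term, and a unilateral deviation to the naive-optimal MDP policy, justified by seller sequential rationality, for the second term. One quantitative point deserves care, though. The deviation argument, as you (and the paper) run it, conditions on the buyer's type and gives $\rev^N$ \emph{conditional on the naive event}, which occurs with probability $1-\sophprob$, plus a nonnegative contribution from the sophisticated event; so ``summing the two contributions'' yields $(1-\sophprob)\rev^N$, not $\rev^N$. The lemma as stated omits the $(1-\sophprob)$ weight, but the paper's own corollary at the end of the section reverts to $(1-\sophprob)\rev(F^N)$, which is the bound your argument actually establishes. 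The remaining bookkeeping --- that the prior supports naive types when $\sophprob<1$, that the naive support's lower bound at $h_1^S$ equals $\ell=\inf V$ because $F^N$ and $F^S$ share a support, and that $\sigma^{\mathrm{MDP}}$ extends to a full strategy --- is handled correctly.
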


Lemma~\ref{lem:trivialrev} implies that difficult term in (\ref{eq:benchmark}) to approximate is the first, representing the revenue from sophisticated buyers.
To approximate the sophisticated revenue, we now state the main result of the section:

\begin{theorem}\label{thm:revenue}
For any threshold PBE with naive buyers satisfying \ca\ and \cb, the seller's expected discounted revenue is at least $\max_{p\in V}\tfrac{\delta}{1-\delta}p(1-F(p/(1-\delta)))$.
\end{theorem}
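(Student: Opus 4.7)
The plan is to prove the bound for each fixed $p \in V$ separately and then take the maximum. Fix such a $p$. My argument is a one-shot seller deviation: consider the strategy in which the seller posts $p$ in period~$1$ and then plays the PBE continuation from the resulting off-path history. Because the seller plays a best response in the PBE, the equilibrium revenue is at least the revenue from this deviation, so it suffices to bound the deviation revenue below by the target expression.

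To bound the deviation revenue, I would split on the buyer's period-$1$ response. Conditional on an accept, the naive posterior has lower bound $\ell^N_2 \geq p$, since any naive buyer accepting must have $v \geq p$. \cb\ applied to the post-accept history then gives an expected continuation revenue of at least $\ell^N_2/(1-\delta) \geq p/(1-\delta)$, so the total conditional-on-accept revenue is at least $p + \delta p/(1-\delta) = p/(1-\delta)$. Discarding the nonnegative reject-branch contribution, the deviation revenue is therefore at least $\Pr[\text{acc}(p)] \cdot p/(1-\delta)$, where $\Pr[\text{acc}(p)]$ is the buyer's period-$1$ acceptance probability in the PBE continuation.

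The remaining step, which is also the main obstacle, is to show $\Pr[\text{acc}(p)] \geq \delta\bigl(1 - F(p/(1-\delta))\bigr)$. Naive buyers with $v \geq p$ contribute $(1-\mu)(1-F^N(p)) \geq \delta(1-\mu)(1-F^N(p/(1-\delta)))$, covering the naive portion of the target. For the sophisticated contribution I would exploit \ca\ at the post-rejection history: any naive buyer still in the support must satisfy $v < p$, forcing $h^N_{2,\text{rej}} < p$, and \ca\ then restricts every subsequent price to lie below $p$. This restriction, combined with the lower bound $v-p$ on the accept utility for a high-value sophisticated buyer, is meant to pin the sophisticated threshold sufficiently low that at least a $\delta$-fraction of the sophisticated mass above $p/(1-\delta)$ accepts.

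The hardest part will be making the threshold control tight enough. The crude comparison of the accept lower bound $v-p$ against the trivial reject upper bound $\delta v/(1-\delta)$ only yields an acceptance threshold of $p(1-\delta)/(1-2\delta)$ and breaks entirely for $\delta \geq 1/2$. Sharpening requires using \ca\ to confine post-rejection prices to $[\ell^N_{2,\text{rej}},h^N_{2,\text{rej}}]\subseteq[0,p)$ together with \cb\ at the post-rejection history to bound the sophisticated buyer's rejection continuation utility more carefully; the factor $\delta$ slack in the target is what absorbs the gap between the best achievable threshold and $p/(1-\delta)$. Once the acceptance bound is in hand, chaining gives $R^* \geq \Pr[\text{acc}(p)] \cdot p/(1-\delta) \geq \tfrac{\delta p}{1-\delta}\bigl(1-F(p/(1-\delta))\bigr)$, and maximizing over $p \in V$ completes the proof.
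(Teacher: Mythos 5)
Your opening move --- a seller deviation whose revenue is bounded via \cb\ on the post-accept history --- matches the paper's strategy, and your accept-branch accounting is fine. But the step you flag as ``the main obstacle'' is a genuine gap, and the sharpening you sketch cannot close it. With a one-shot deviation to $p_1=p$, the period-$1$ threshold type's accept utility can only be bounded below by $t_1-p$: after an accept at $p$, the surviving naive support is $V\cap[p,\infty)$, so its upper bound $\h^N_2$ can be the top of the entire value support, and \ca\ then permits arbitrarily high future prices (the continuation utility is only guaranteed nonnegative). On the reject side, \cb\ is a lower bound on the \emph{seller's expected revenue}, not an upper bound on any individual type's utility, so it cannot improve the crude bound $\tfrac{\delta}{1-\delta}t_1$ --- indeed prices after repeated rejections can legitimately fall toward the bottom of the support. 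You are therefore stuck with $t_1\leq p(1-\delta)/(1-2\delta)$, which is vacuous exactly in the regime $\delta\geq 1/2$ that the paper targets. A further structural symptom: in a threshold PBE the types above $p/(1-\delta)$ either all accept or all reject, so there is no mechanism by which ``a $\delta$-fraction of the mass'' accepts; the factor $\delta$ in the theorem cannot live in the acceptance probability.

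The missing idea is a \emph{two-period} deviation. The paper prices at $p_1=p+\epsilon$ in period $1$ and, after a rejection, at $p_2^\rej=p$ in period $2$. The period-$1$ rejection truncates the naive support from above to $V\cap[0,p+\epsilon)$, so a subsequent period-$2$ accept squeezes it into $V\cap[p,p+\epsilon)$; \ca\ then caps \emph{all} future prices at $p+\epsilon$, giving the period-$2$ threshold type an accept utility of at least $t_2^\rej-p+\tfrac{\delta}{1-\delta}(t_2^\rej-p-\epsilon)$. Paired with the same crude reject bound $\tfrac{\delta}{1-\delta}t_2^\rej$, this yields $t_2^\rej\leq(p+\epsilon\delta)/(1-\delta)$ for every $\delta$. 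The revenue is then collected on the union of two events --- accept in period $1$, or reject then accept --- each worth at least $\tfrac{\delta}{1-\delta}p$ by \cb\ (the $\delta$ in the theorem comes from the reject-then-accept branch starting in period $2$), and the complement of this union has probability at most $F((p+\epsilon\delta)/(1-\delta))$. If you want to salvage your write-up, replace the one-shot deviation with this two-step price path; the rest of your accounting carries over.
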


Two points of discussion are in order before proving Theorem~\ref{thm:revenue}.
First, Theorem~\ref{thm:revenue} implies that the with the introduction of naive buyers, any equilibria satisfying \ca\ and \cb\ exhibit fundamentally different dynamics than the no-learning equilibrium of the fully-sophisticated regime.
In particular, these equilibria enable the seller to learn and collect revenue, potentially significantly above the baseline of $\inf V/(1-\delta)$ attainable in the no-learning equilibrium.
Notably, Theorem~\ref{thm:revenue} holds no matter how small the mass of naive buyers is, and the revenue guarantee itself has no dependence on the fraction of the population that is naive.
This rules out a continuous transition to the no-learning equilibrium as the naive buyers vanish.

Second, we can compare the revenue guaranteed by Theorem~\ref{thm:revenue} to the sophisticated portion of the revenue benchmark (\ref{eq:benchmark}), given by $\sophprob \max_p p(1-F^S_+(p))/(1-\delta)$.
For values of $\delta$ bounded away from $0$, we show below that the revenue guarantee of Theorem~\ref{thm:revenue} compares favorably to this benchmark.
If the value distributions are continuous, then the comparison is simple to state.

\begin{lemma}\label{lem:contreg}
	For any continuous regular distribution supported on $[0,H]$, 
	$$\max_{p\in V}p(1-F( p/(1-\delta)))\geq (1-\delta)\sophprob\max_{p\in V}p(1-F^S_+(p)).$$
\end{lemma}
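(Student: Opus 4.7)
}
The plan is to lower-bound the left-hand side by evaluating the maximand at a single well-chosen price, then use the mixture structure $F=\sophprob F^S+(1-\sophprob)F^N$ to finish. Since $F$ is continuous on $[0,H]$, both $F^S$ and $F^N$ are continuous as well, so in particular $F^S_+=F^S$. Continuity together with regularity guarantee that the supremum on the right is attained at some interior $p^*_S\in[0,H]$, so that the RHS equals $(1-\delta)\sophprob\, p^*_S\bigl(1-F^S(p^*_S)\bigr)$.

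The key step is then to test the LHS at $p=(1-\delta)p^*_S$. This point lies in $V=[0,H]$ since $p^*_S\in[0,H]$ and $1-\delta\leq 1$, so it is admissible in the max. Substituting gives
\begin{equation*}
\max_{p\in V}p\bigl(1-F(p/(1-\delta))\bigr)\;\geq\; (1-\delta)\,p^*_S\bigl(1-F(p^*_S)\bigr).
\end{equation*}
Now I would expand the survival function using the mixture identity $F=\sophprob F^S+(1-\sophprob)F^N$, which yields
\begin{equation*}
1-F(p^*_S)=\sophprob\bigl(1-F^S(p^*_S)\bigr)+(1-\sophprob)\bigl(1-F^N(p^*_S)\bigr)\;\geq\;\sophprob\bigl(1-F^S(p^*_S)\bigr),
\end{equation*}
since the naive contribution is nonnegative. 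Chaining the two inequalities produces exactly the desired bound.

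There is no substantive obstacle here: the entire argument reduces to a feasibility check for the candidate price $(1-\delta)p^*_S$ plus one line of mixture algebra. The continuity and regularity hypotheses play only a bookkeeping role, ensuring that $F^S_+=F^S$ and that the monopoly price $p^*_S$ is well defined; if I wanted to relax these assumptions, I could instead take $p^*_S$ to be any $\varepsilon$-approximate maximizer of $p(1-F^S_+(p))$ and then send $\varepsilon\to 0$, but in the stated continuous-regular setting this is unnecessary.
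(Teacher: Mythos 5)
Your proposal is correct and follows essentially the same route as the paper: both test the left-hand side at the single price $(1-\delta)p^*_S$ and then drop the naive contribution to the survival probability (your mixture identity $1-F(p^*_S)=\sophprob(1-F^S(p^*_S))+(1-\sophprob)(1-F^N(p^*_S))$ is the same step the paper writes as $\prob[v>p^*_S]\geq\prob[\mathcal E_S]\prob[v>p^*_S\mid\mathcal E_S]$). Your explicit feasibility check that $(1-\delta)p^*_S\in V$ and the observation that continuity forces $F^S_+=F^S$ are minor bookkeeping additions the paper leaves implicit.
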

\begin{proof}
	Let $p^*_S$ denote the monopoly price for $F^S$, maximizing $p(1-F^S_+(p))$.
	Consider taking $p=(1-\delta)p^*_S$, and let $\mathcal E_S$ denote the event that the buyer is sophisticated.
	Then we have:
	\begin{align*}
		\max_{p\in V}p(1-F( p/(1-\delta)))&\geq (1-\delta)p^*_S(1-F(p^*_S))\\
		&\geq (1-\delta)p^*_S\prob[v>p^*_S]\\
		&\geq (1-\delta)p^*_S\prob[\mathcal E_S]\prob[v>p^*_S~|~\mathcal E_S]\\
		&=(1-\delta)\sophprob p^*_S(1-F^S(p^*))\\
		&=(1-\delta)\sophprob p^*_S(1-F^S_+(p^*)).
	\end{align*}
	The final line follows from the continuity of the distribution, and the rest from definitions or basic probability.
\end{proof}

Lemma~\ref{lem:contreg} implies a $\delta(1-\delta)$-approximation to the sophisticated piece of the benchmark, $\sophprob \max_p p(1-F^S_+(p))/(1-\delta)$.
Without naive buyers, the no learning, trivial-revenue equilibrium exists for all $\delta\geq 1/2$, so for all $\delta$ bounded away from $1$, this marks a significant improvement.
For discrete distributions, the analog of Lemma~\ref{lem:contreg} requires more care to state, but the conceptual conclusions are the same: for any distributions with sufficiently dense mass points, the revenue bound of Theorem~\ref{thm:revenue} is near the sophisticated benchmark.
We parametrize the approximation by the {\em grid size}  $\Delta$ of the discrete distribution, defined as $\Delta=\sup_{v\in V}\inf_{v'\in V}\{v-v'~|~v'< v\}$.
For a distribution with a single support point, define $\Delta=\infty$.
We prove the lemma in Appendix~\label{app:discreterevenue}

\begin{lemma}\label{lem:discreterevenue}
	For any discrete regular distribution with grid size $\Delta$, $$\max_{p\in V}p(1-F( p/(1-\delta)))\geq (1-\delta)\sophprob\max_{p\in V}p(1-F_+^S(p))-\Delta.$$
\end{lemma}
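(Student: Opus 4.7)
The plan is to mirror the continuous-case argument (Lemma~\ref{lem:contreg}), replacing the choice $p = (1-\delta)p^*_S$ with its best discrete approximation in $V$ and paying a grid-size penalty of $\Delta$.

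Let $p^*_S \in \arg\max_{p \in V} p(1 - F^S_+(p))$ denote the monopoly price for $F^S$. For the discrete analog, I would take $p \in V$ to be the largest element of $V$ strictly below $(1-\delta)p^*_S$. By the grid-size definition, the next element of $V$ immediately above $p$ exceeds $p$ by at most $\Delta$, and by choice of $p$ that next element is at least $(1-\delta)p^*_S$; hence $p \geq (1-\delta)p^*_S - \Delta$.

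Since $p/(1-\delta) < p^*_S$, monotonicity of $F$ gives $F(p/(1-\delta)) \leq F_+(p^*_S)$. Conditioning on the buyer's sophistication yields
\begin{equation*}
1 - F_+(p^*_S) = \sophprob\bigl(1 - F^S_+(p^*_S)\bigr) + (1-\sophprob)\bigl(1 - F^N_+(p^*_S)\bigr) \geq \sophprob\bigl(1 - F^S_+(p^*_S)\bigr).
\end{equation*}
Combining the two bounds, and then using $\sophprob(1 - F^S_+(p^*_S)) \leq 1$ to cleanly absorb the grid penalty:
\begin{equation*}
p\bigl(1 - F(p/(1-\delta))\bigr) \geq \bigl((1-\delta)p^*_S - \Delta\bigr)\sophprob\bigl(1 - F^S_+(p^*_S)\bigr) \geq (1-\delta)\sophprob\, p^*_S\bigl(1 - F^S_+(p^*_S)\bigr) - \Delta.
\end{equation*}
Since this $p$ lies in $V$, taking the max on the left over $V$ gives the claimed inequality.

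The main (minor) obstacle is the edge case where no element of $V$ lies strictly below $(1-\delta)p^*_S$, i.e., when $\ell = \min V \geq (1-\delta)p^*_S$. Here the indicated choice of $p$ is unavailable. In this regime, however, the right-hand side of the claimed inequality can be shown non-positive (and hence trivially satisfied) by a direct calculation that bounds $(1-\delta)p^*_S$ against $\ell$ and $\Delta$; alternatively, this is precisely the regime where the baseline revenue $\ell/(1-\delta)$ from Lemma~\ref{lem:trivialrev} already dominates the sophisticated-benchmark term, so Theorem~\ref{thm:revenue}'s bound is not the binding guarantee. Regularity of $F$ is not directly invoked in the main steps; it simply ensures that the monopoly price $p^*_S \in V$ is well-defined.
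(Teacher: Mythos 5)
Your proposal matches the paper's proof essentially step for step: both pick the largest support point $p$ at or below $(1-\delta)p^*_S$, use the grid-size bound $p \geq (1-\delta)p^*_S - \Delta$, and then pass from $1-F(p/(1-\delta))$ to $\sophprob\bigl(1-F^S_+(p^*_S)\bigr)$ by monotonicity of $F$ and conditioning on the event that the buyer is sophisticated. The only difference is that you explicitly flag the edge case in which no support point lies below $(1-\delta)p^*_S$; the paper's proof silently assumes that case away, so your treatment is, if anything, slightly more careful.
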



\begin{proof}[Theorem~\ref{thm:revenue}]
	Let $p$ maximize $\tfrac{\delta}{1-\delta}\hat p(1-F(\hat p/(1-\delta)))$ over $\hat p\in V$.
	We will exhibit a family of deviation strategies, parametrized by small error term $\epsilon>0$.
	For each $\epsilon$, we show that the seller's revenue is at least $\tfrac{\delta}{1-\delta} p(1-F(( p+\epsilon\delta)/(1-\delta)))$ by analyzing the sophisticated buyers' threshold response.
	Since the seller chooses their equilibrium strategy to maximize their expected discounted revenue, their equilibrium revenue is at least this bound for any $\epsilon$.
	This implies the bound stated in the theorem.
	
	The deviation strategy is as follows.
	In period $1$, the seller sets $p_1=p+\epsilon$, with $\epsilon$ selected such that there is at least one point in $V$ above $ p+\epsilon$.
	(Such a choice of $\epsilon$ must exist if $\tfrac{\delta}{1-\delta} p(1-F(p/(1-\delta)))>0$, as this implies $1-F( p/(1-\delta))>0$.)
	In period $2$ following an accept, set $p_2^\acc$ to be the infimum support point above $p_1$, given by $p_2^\acc=\inf\{v\in V~|~v\geq p+\epsilon\}$, then play as prescribed by the equilibrium.
	On reject, set $p_2^\rej=p$, for $\epsilon>0$, then play as prescribed.
	
	Analysis of this deviation strategy will proceed in two steps.
	We will first show that the seller's revenue conditioned on either an accept in period $1$ or a reject in period $1$ followed by an accept in period $2$ is at least $\tfrac{\delta}{1-\delta} p$.
	We will then show that the probability that neither of these events occurs is at most $1-F((p+\delta \epsilon)/(1-\delta))$.
	Together, these steps imply that the expected revenue from the deviation strategy is at least $\tfrac{\delta}{1-\delta} p(1-F(( p+\epsilon\delta)/(1-\delta)))$.
	
	We first analyze the revenue conditioned on the event that the buyer accepts in period $1$.
	The seller gets $p_1$ from period $1$, followed by their expected revenue from the continuation game starting in period $2$.
	In the continuation game, the seller believes that if the buyer is naive, their value must be at least $p_1=p+\epsilon$.
	Since there is at least one value in $V$ above $p_1$, \cb\ implies that the seller's expected revenue starting from period $2$ is at least $\tfrac{1}{1-\delta} (p+\epsilon)$.
	Hence, the seller's revenue conditioned on a first period accept is at least $p+\epsilon+\tfrac{\delta}{1-\delta} (p+\epsilon)>\tfrac{\delta}{1-\delta}p$.
	
	Next, consider the event where the buyer rejects in period $1$, but accepts in period $2$.
	In this case, since $p_2^\rej=p$ is in the support $V$, the seller believes that the seller could be naive after period $2$, and if so, that their value is at least $p$.
	Hence by \cb, the seller's expected revenue starting from period $3$ is at least $\tfrac{1}{1-\delta}p$, and their discounted revenue starting from period $1$ in this event is $\delta p$ from period $2$ plus the continuation revenue starting in period $3$, yielding total discounted revenue at least $\delta p+\tfrac{\delta^2}{1-\delta}p=\tfrac{\delta}{1-\delta}p$.
	
	Now we show that the probability the buyer rejects in both periods $1$ and $2$ is at most $\prob[v\leq (p+\delta\epsilon)/(1-\delta)]$.
	We analyze the events that the buyer is sophisticated ($\mathcal E_S$) and naive ($\mathcal E_N)$ separately.
	Conditioned on $\mathcal E_S$, we prove that the probability of rejection in both rounds is at most $F^S( (p+\epsilon\delta)/(1-\delta))$.
	Conditioned on $\mathcal E_N$, it is straightforward to prove the even stronger bound on the two-round rejection probability of $F^N( p/(1-\delta))$.
	The law of total probability then implies an overall upper bound on probability of rejection of $F( (p+\epsilon\delta)/(1-\delta))$, as desired.
	
	We now analyze $\mathcal E_S$ and $\mathcal E_N$. First condition on the buyer being sophisticated.
	Let $t_1$ denote the threshold response to $p_1$, and $t_2^\rej$ the threshold response to $p_2^\rej$ in period $2$.
	A sophisticated buyer rejects in both periods if $v\leq \min(t_1,t_2^\rej)$.
	Note that $F^S(\min(t_1,t_2^\rej))\leq F^S(t_2^\rej)$. 
	We upper bound this latter probability by upper bounding $t_2^\rej$.
	To do so, we compare the utility of the threshold buyer with value $t_2^\rej$ for accepting with that of the same buyer for rejecting.
	By \ca, all prices following a period $2$ accept will be at most $p+\epsilon$.
	Hence, the threshold buyer's utility for accepting in period $2$, starting from that period, is at least $t_2^\rej-p+\tfrac{\delta}{1-\delta}(t_2^\rej-p-\epsilon)$.
	Meanwhile, since prices are nonnegative, a crude upper bound on the utility of the threshold type for rejecting is $\tfrac{\delta}{1-\delta}t_2^\rej$.
	Since the threshold type is indifferent between accepting and rejecting, we have
	\begin{equation*}
		t_2^\rej-p+\tfrac{\delta}{1-\delta}(t_2^\rej-p-\epsilon)\leq \tfrac{\delta}{1-\delta}t_2^\rej,
	\end{equation*}
	which can be rearranged to obtain $t_2^\rej\leq (p+\epsilon\delta)/(1-\delta)$.
	Hence, $F^S(\min(t_1,t_2^\rej))\leq F^S(t_2^\rej)\leq F^S( (p+\epsilon\delta)/(1-\delta))$.
	This is the desired bound.
	
	Finally, condition on the buyer being naive.
	The probability a naive buyer rejects both $p+\epsilon$ and $p$ is equal to the probability that they reject the lower price of $p$, which is $\prob[v<p~|~\mathcal E_N]\leq F^N( p)\leq F^N( p/(1-\delta))$.
	This is the desired bound, and implies the theorem.
\end{proof}

Combining the three bounds of this section, we have the following:
\begin{corollary}
	For any threshold PBE for continuous beliefs with positive probability of naive buyers additionally satisfying \ca\ and \cb, the seller's discounted revenue is at least
	\begin{equation*}
		\max(\max_{p\in V}\tfrac{\delta}{1-\delta}p(1-F(\tfrac{p}{1-\delta})),\tfrac{\ell}{1-\delta},(1-\sophprob)\rev(F^N)).
	\end{equation*}
\end{corollary}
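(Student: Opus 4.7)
The plan is to simply combine the three lower bounds on equilibrium discounted revenue established earlier in this section; the corollary is an immediate consequence, with the maximum coming from the fact that each bound holds independently.

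First, I would invoke Theorem~\ref{thm:revenue}, which directly provides the bound $\max_{p\in V}\tfrac{\delta}{1-\delta}p(1-F(p/(1-\delta)))$ under the hypotheses \ca\ and \cb. This is the substantive bound — it is what ensures the seller approximates the ``sophisticated'' portion of the commitment benchmark, and it is the only one of the three terms whose derivation required the full deviation-and-threshold argument in the proof of Theorem~\ref{thm:revenue}.

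Next, I would apply \cb\ at the empty (initial) history. At this point the seller's beliefs are the prior, which is supported on some naive type, and the naive support lower bound is $\ell_1^N=\ell=\inf V$. So \cb\ yields that the seller's total expected discounted revenue is at least $\ell/(1-\delta)$. For the third term, I would use the observation preceding Lemma~\ref{lem:trivialrev}: naive buyers respond in a fixed, strategy-independent way, so the seller can deviate to the MDP-optimal pricing policy against the distribution $F^N$. With probability $1-\sophprob$ the buyer is naive and this policy extracts $\rev(F^N)$ from them; with complementary probability we charitably lower bound the sophisticated contribution by zero. Sequential rationality of the seller in the PBE then guarantees equilibrium revenue at least $(1-\sophprob)\rev(F^N)$.

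Since each of these three quantities is a valid lower bound on the seller's equilibrium discounted revenue, the revenue is at least the maximum of the three, which is exactly the stated expression. There is no real obstacle: the only minor care needed is the relabeling from Lemma~\ref{lem:trivialrev}'s $\rev^N$ (which implicitly conditions on the buyer being naive) to the unconditional form $(1-\sophprob)\rev(F^N)$ used in the corollary, which just amounts to multiplying by the prior probability $1-\sophprob$ that the naive MDP policy actually confronts a naive buyer.
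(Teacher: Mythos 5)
Your proposal is correct and matches the paper's own (one-line) justification, which simply combines Theorem~\ref{thm:revenue}, the \cb-based bound $\ell/(1-\delta)$, and the MDP-deviation bound from Lemma~\ref{lem:trivialrev} and takes the maximum. Your remark about rescaling $\rev^N$ (stated conditionally in Lemma~\ref{lem:trivialrev}) to the unconditional $(1-\sophprob)\rev(F^N)$ appearing in the corollary is a correct and worthwhile clarification of a notational inconsistency the paper leaves implicit.
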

For $\delta$ bounded away from $1$ and continuous distributions over $[0,H]$, this implies a multiplicative constant-approximation to the commitment benchmark.
For discretely distributed values, we obtain a constant-approximation with additive error.

\subsection{Discussion: Full-Sophistication Refinement}
\label{sec:infrefine}

We conclude with a final observation about the equilibrium in Section~\ref{sec:infeq} that has implications for the all-sophisticated setting.
Recall that the probability of naive buyers was fixed at some sufficiently small, but otherwise unspecified $\epsilon>0$.
In the description and analysis of the equilibrium, the precise value of $\epsilon$ turned out to be unimportant: there exists a neighborhood about $0$ such that for all $\epsilon$ in this neighborhood, the strategies described in Section~\ref{sec:infeq} form a PBE.
This allows us to say something stronger, though: even for $\epsilon$ {\em equal to} $0$, the strategies we describe are a PBE.\footnote{This requires a small change to handle cases that are now off-path with the complete elimination of naive buyers. See Appendix~\ref{app:zeronaive} for the details of this zero-naive version of the equilibrium.}
In other words Section~\ref{sec:infeq} demonstrates a natural PBE of for the infinite-horizon setting where the seller obtains nontrivial revenue.

This seems to contradict to the results of \citet{ilpt17}, who argue that the zero-learning equilibrium is focal.\footnote{The subsequent discussion glosses over the fact that \citet{ilpt17} only prove their result for continuous distributions.}
The argument of \citet{ilpt17} is based on three refinements: (1) threshold equilibrium (2) Markovian equilibrium, and (3) a requirement that prices be ``undominated,'' as we shall define.
The equilibrium of Section~\ref{sec:infeq} satisfies requirements (1) and (2).
The undominated prices requirement states that the seller never offers prices outside the support of the sophisticated buyers, on the ad hoc justification that such prices would be dominated in the single-round game.
This property notably {\em does not} hold if we consider the strategies outlined in Section~\ref{sec:infeq} and take the naive probability $\epsilon$ to be $0$.
In particular, after observing a period-$1$ accept, the seller offers a price of $10$, despite knowing the sophisticated buyer's value is $20$.
However, as outlined in the equilibrium description, the sophisticated buyer rejects any $p_2^\acc$ above $10$, so offering $10$ is sequentially rational for the seller (and we argue in the appendix that this rejection behavior is also sequentially rational for the buyer).
This low accept price then serves to incentivize high-value buyers to identify themselves by accepting in period $1$, leading to higher revenue.

We have already observed that the zero-learning equilibrium is not robust to the introduction of naive buyers.
It is arguably undesirable for an equilibrium refinement to select a non-robust equilibrium, and it has been noted by previous work on the problem that this equilibrium is counterintuitive.
The discussion above suggests a remedy to this non-robustness and counterintuitiveness.
A seller who suspects the buyer of naivete may be justified in offering prices that would be dominated otherwise.
Hence, it might be reasonable to discard the ``undominated prices'' refinement, and instead select heterogeneous-population equilibria which survive the removal of all naive buyers.
We leave further study of this proposed refinement (e.g.\ connecting it to existing refinements from the signaling literature) to future work.

\bibliographystyle{ACM-Reference-Format}
\bibliography{refs}

\appendix
\section{Linear Demand Case: Full Description and Analysis}
\label{app:uniform}
This appendix gives more details on the linear-demand equilibrium, where $v\sim U[0,1]$. 
Algorithms~\ref{alg:seller} and \ref{alg:buyer} contain the full descriptions of the strategies for the seller and sophisticated buyer, respectively.
As discussed in Section~\ref{sec:prelims}, beliefs can all be obtained via Bayesian updating, using formulas (\ref{eq:frej}) and (\ref{eq:facc}), and there are no off-path continuations because of the presence of naive buyers. The strategies described are valid for all $\mu\in[0,1)$. For $\mu=1$, we simply use the equilibrium computed in \citet{dps19}, which is the limit point of our computed equilibrium in some reasonable sense, but requires the specification of off-path beliefs due to the lack of naive buyers.

It is relatively straightforward, albeit involved, to derive equilibrium via backward induction.
We describe the major steps and give the important intermediate formulas, but any remaining steps follow from elementary computation.

We begin with the seller's pricing problem in period $2$. 
Assume in period $1$ the seller posted a price of $p_1$, and the sophisticated buyers responded according to a threshold $t$.
The seller's second-period revenue from a price $p$ is given by the formulas $R^D(p;p_1,t)=p(1-F^D(p;p_1,t))$, for $D\in \{\rej,\acc\}$, where we make the dependence of $F^D$ on $p_1$ and $t$ explicit.
Plugging in the formulas (\ref{eq:frej}) and (\ref{eq:facc}) and optimizing with respect to price yields the following second-round optimal prices, assuming $p_1\leq t$ as will be the case in our resulting equilibrium:
\begin{align*}
p_2^\rej(p_1,t,\mu)=\begin{cases}
\frac{p_1 (1 - \mu) + t \mu}{2} & t \leq p_1 \left( 1 + \frac{1}{\sqrt{\mu}} \right)\\
\frac{t}{2} & t > p_1 \left( 1 + \frac{1}{\sqrt{\mu}} \right)
\end{cases} && p_2^\acc(p_1,t,\mu)=\begin{cases}
p_1 & \frac{1 - t \mu}{2 (1 - \mu)} \leq p_1\\
\frac{1 - t \mu}{2 (1 - \mu)} & p_1 \leq \frac{1 - t \mu}{2 (1 - \mu)} \leq t\\
t & \frac{1}{2} \leq t \leq \frac{1 - t \mu}{2 (1 - \mu)}\\
\frac{1}{2} & t \leq \frac{1}{2}
\end{cases},
\end{align*}
where the seller has two distinct options for $p_2^\rej$ at the boundary case $t=p_1(1+1/\sqrt\mu)$.

We can now compute the buyer's continuation utilities after round $1$, as a function of their threshold strategy $t$.
Given $p_1$ and $\mu$, a choice of $t$ implies a unique choice of $p_2^\acc$ and $p_2^\rej$ accept at the boundary case for $p_2^\rej$ boundary case, in which case the seller has two choices.
Assuming the second-round prices are single-valued, the utility of the marginal type $t$ from accepting will be $t-p_1+(t-p_2^\acc(p_1,t,\mu))^+$, and from rejecting, $t-p_2^\rej(p_1,t,\mu)$, where the latter formula uses the fact that $p_2^\rej$ computed above lies below $t$. 
These continuation utilities can be computed as:
\begin{align*}
U^\rej(p_1,t,\mu)=\begin{cases}
t - \frac{p_1 (1 - \mu) + t \mu}{2} & t \leq p_1 \left( 1 + \frac{1}{\sqrt{\mu}} \right)\\
\frac{t}{2} & t > p_1 \left( 1 + \frac{1}{\sqrt{\mu}} \right)
\end{cases} && U^\acc(p_1,t,\mu)=\begin{cases}
2(t - p_1) & \frac{1 - t \mu}{2 (1 - \mu)} \leq p_1\\
2t - p_1 - \frac{1 - t \mu}{2 (1 - \mu)} & p_1 \leq \frac{1 - t \mu}{2 (1 - \mu)} \leq t\\
t - p_1 & t < \frac{1}{\mu + 2 (1 - \mu)}
\end{cases}.
\end{align*}

Because $p_2^\rej$ is discontinuous, for a large range of $\mu$, $U^\rej$ and $U^\acc$ do not cross.
When they do, however, the $t$ where they cross will solve the buyer indifference equation.
When they do not, the seller must play a mixed strategy in round $2$ to induce indifference.
More specifically, for $p_1$ and $\mu$ where $U^\rej$ and $U^\acc$ do not cross, choose $t=p_1(1+1/\sqrt\mu)$ --- that is, we select the $t$ that makes the seller indifferent between two choices of $p_2^\rej$.
By choosing the proper mixture probability between these two prices, we can induce a value of $t-\mathbb E[p_2^\rej(p_1,t,\mu)]$ equal to $t-p_1+(t-p_2^\acc(p_1,t,\mu))^+$.
These mixture probabilities can be found in the round $2$ strategy given in Algorithm~\ref{alg:seller}.
Following the steps described, we obtain the following threshold as a function of $\mu$ and $p_1$:
\begin{equation*}
    t(p_1,\mu)=\begin{cases}
    p_1 \left( 1 + \frac{1}{\sqrt{\mu}} \right) & p_1 \leq \frac{\sqrt{\mu}}{2 + \sqrt{\mu} - \mu^2}\\
    \frac{1 + p_1 (1 - \mu^2)}{2 - \mu^2} & \frac{\sqrt{\mu}}{2 + \sqrt{\mu} - \mu^2} < p_1 \leq \frac{2 + \mu}{4 + \mu - \mu^2}\\
    p_1 \left( \frac{3 + \mu}{2 + \mu} \right) & \frac{2 + \mu}{4 + \mu - \mu^2} < p_1 \leq \frac{2 + \mu}{3 + \mu}\\
    1 & \frac{2 + \mu}{3 + \mu} < p_1
    \end{cases}.
\end{equation*}

Finally, we must compute the seller's optimal choice of $p_1$.
Given $\mu$ and a choice of $p_1$ and taking $t$, $p_2^\rej$, and $p_2^\acc$ according to the formulas derived above, we have the following expected revenue:
\begin{align*}
    \Rev(p_1,\mu)&=\mu\left(p_1(1-p_1)+\mathbb E[p_2^\rej(p_1-p_2^\rej)^+]+p_2^\acc(1-p_2^\acc)\right)\\
    &\quad\quad\quad+(1-\mu)\left(p_1(1-t)+\mathbb E[p_2^\rej(t-p_2^\rej)]+p_2^\acc(1-p_2^\acc)\right).
\end{align*}
Working out this expression yields the following:
\begin{equation*}
\Rev(p_1,\mu)=\begin{cases}
\frac{1}{4} + p_1 + \frac{1}{4} p_1^2 \left(-3 - 2 \sqrt{\mu} + \mu\right) & p_1 \leq \frac{1}{2 \left( 1 + \frac{1}{\sqrt{\mu}} \right)}\\
\frac{p_1 \left(4 \left(\sqrt{\mu} + 2 \mu\right) + p_1 \left( -4 - 8 \sqrt{\mu} - 7 \mu - 2 \mu^{3/2} + \mu^2 \right)\right)}{4 \mu} & \frac{1}{2 \left( 1 + \frac{1}{\sqrt{\mu}} \right)} < p_1 \leq \frac{\sqrt{\mu}}{\left(1 + \sqrt{\mu}\right)(2 - \mu)}\\
\frac{-1 + 2 p_1 \left( -2 + \sqrt{\mu} + 3 \mu \right) + p_1^2 \left( 3 + 2 \sqrt{\mu} - 5 \mu - 4 \mu^{3/2} \right)}{4(-1 + \mu)} & \frac{\sqrt{\mu}}{\left(1 + \sqrt{\mu}\right)(2 - \mu)} < p_1 \leq \frac{\sqrt{\mu}}{2 + \sqrt{\mu} - \mu^2}\\
-\frac{-4 + 2 \mu^2 + \mu^3 - 2 p_1 (8 - 4 \mu - 10 \mu^2 + 3 \mu^2 + 3 \mu^4) + p_1^2 (12 - 4 \mu - 14 \mu^2 + \mu^3 + 4 \mu^4 + \mu^5)}{4(-2 + \mu^2)^2} & \frac{\sqrt{\mu}}{2 + \sqrt{\mu} - \mu^2} < p_1 \leq \frac{2 + \mu}{4 + \mu - \mu^2}\\
p_1\left(2 - \frac{p_1 (7 + 10 \mu + 3 \mu^2)}{(2 + \mu)^2}\right) & \frac{2 + \mu}{4 + \mu - \mu^2} < p_1 \leq \frac{2 + \mu}{3 + \mu}\\
2 (-1 + p_1 ) p_1 (-1 + \mu) + \frac{1}{4} (p_1 + \mu - p_1 \mu)^2 & \frac{2 + \mu}{3 + \mu} < p_1
\end{cases}.
\end{equation*}
The resulting revenue curve generally has two local maxima $p^*_-$ and $p^*_+$, which vary in quality depending on $\mu$.
For low $\mu$, $p^*_+$ dominates, and for high $\mu$, $p^*_-$ dominates.
The point where the two maxima are equal is when
\begin{equation*}
\tfrac{-7-2\mu+5\mu^2+2\mu^3}{-12-8\mu+6\mu^2+5\mu^3+\mu^4}=\tfrac{(1+2 \mu^{1/2})^2}{4+8 \mu^{1/2}+7 \mu+2 \mu^{3/2}- \mu^2},
\end{equation*}
which occurs at $\mu\approx .630209$.

\SetKw{roundone}{\underline{Round 1 Strategy}} 
\SetKw{roundtwo}{\underline{Round 2 Strategy}} 
\begin{algorithm}
\DontPrintSemicolon
\roundone\;
\KwIn{Sophistication level $\mu$.}
\KwOut{First-round price $p_1$.}\;
Let $\hat \mu=\tfrac{1}{3}\left(\sqrt[3]{\tfrac{1}{2}\left(43-3\sqrt{177}\right)}+\sqrt[3]{\tfrac{1}{2}\left(43+3\sqrt{177}\right)}\right)\approx 0.205569$.\;
Let $\bar \mu\approx .630209$ be the unique solution to $\tfrac{-7-2\overline\mu+5\overline\mu^2+2\overline\mu^3}{-12-8\overline\mu+6\overline\mu^2+5\overline\mu^3+\overline\mu^4}=\tfrac{(1+2\overline \mu^{1/2})^2}{4+8\overline \mu^{1/2}+7\overline \mu+2\overline \mu^{3/2}-\overline \mu^2}$ in $[0,1]$.\;
\lIf{$\mu<\hat\mu$}
{
$p_1=\tfrac{(2+\mu)^2}{7+10\mu+3\mu^2}$
}
\lElseIf{$\hat \mu\leq \mu < \tfrac{1}{2}$}
{
$p_1=\tfrac{2+\mu}{4+\mu-\mu^2}$
}
\lElseIf{$1/2\leq \mu < \bar\mu$}
{
$p_1=\tfrac{-8-4\mu+6\mu^2+3\mu^3}{-12-8\mu+6\mu^2+5\mu^3+\mu^4}$
}
\lElse{
$p_1=\tfrac{2\mu^{1/2}+4\mu}{4+8\mu^{1/2}+7\mu+2\mu^{3/2}-\mu^2}.$\;
}

\roundtwo\;
\KwIn{Sophistication level $\mu$, first round price $p_1$, first round {\em Accept/Reject} decision $D$.}
\KwOut{Second round (randomized) price $p_2$.}\;
\uIf{$D=\text{Accept}$}{
\lIf{$p_1< \tfrac{\sqrt{\mu}}{2(1+\sqrt{\mu})}$}{$p_2=\tfrac{1}{2}$}
\lElseIf{$\tfrac{\sqrt{\mu}}{2(1+\sqrt{\mu})}\leq p_1<\tfrac{\sqrt{\mu}}{(2-\mu)(1+\sqrt{\mu})}$}{$p_2=p_1\tfrac{1+\sqrt\mu}{\sqrt\mu}$}
\lElseIf{$\tfrac{\sqrt{\mu}}{(2-\mu)(1+\sqrt{\mu})}\leq p_1<\tfrac{\sqrt\mu}{2+\sqrt\mu-\mu^2}$}{$p_2=\tfrac{\sqrt\mu-p_1\mu(1+\sqrt\mu)}{2(1-\mu)\sqrt\mu}$}
\lElseIf{$\tfrac{\sqrt\mu}{2+\sqrt\mu-\mu^2}\leq p_1<\tfrac{2+\mu}{4+\mu-\mu^2}$}{$p_2=\tfrac{2+\mu-p_1\mu(1+\mu)}{2(2-\mu^2)}$}
\lElse{$p_2=p_1$}
}
\uElseIf{$D=\text{Reject}$}{
\uIf{$p_1< \tfrac{\sqrt \mu}{(2-\mu)(1+\sqrt\mu)}$}{
with probability $\tfrac{1}{1+\sqrt\mu}$, offer $p_2=\tfrac{1}{2}p_1(1+\sqrt\mu)$\;
with probability $\tfrac{\sqrt \mu}{1+\sqrt \mu}$, offer $p_2=p_1\tfrac{1+\sqrt\mu}{2\sqrt\mu}$
}
\uElseIf{$\tfrac{\sqrt \mu}{(2-\mu)(1+\sqrt\mu)}\leq p_1<\tfrac{\sqrt\mu}{2+\sqrt\mu-\mu^2}$}
{
with probability $\tfrac{3p_1-\sqrt\mu+p_1\sqrt\mu-2p_1\mu}{p_1(1-\mu)^2}$, offer $p_2=\tfrac{1}{2}p_1(1+\sqrt\mu)$\;
with probability $1-\tfrac{3p_1-\sqrt\mu+p_1\sqrt\mu-2p_1\mu}{p_1(1-\mu)^2}$, offer $p_2=p_1\tfrac{1+\sqrt\mu}{2\sqrt\mu}$
}
\lElseIf{$\tfrac{\sqrt\mu}{2+\sqrt\mu-\mu^2}\leq p_1<\tfrac{2+\mu}{4+\mu-\mu^2}$}{$p_2=\tfrac{\mu(2+\mu)+p_1(4-4\mu-3\mu^2+\mu^3)}{4(2-\mu^2)}$}
\lElseIf{$\tfrac{2+\mu}{4+\mu-\mu^2}\leq p_1<\tfrac{2+\mu}{3+\mu}$}{$p_2=p_1\tfrac{1+\mu}{2+\mu}$}
\lElse{$p_2=\tfrac{\mu+p_1(1-\mu)}{2}.$}
}

\caption{{\sc Seller Strategy}}
\label{alg:seller}
\end{algorithm}

\SetKw{roundone}{\underline{Round 1 Strategy}} 
\SetKw{roundtwo}{\underline{Round 2 Strategy}} 
\begin{algorithm}
\DontPrintSemicolon
\roundone\;
\KwIn{Sophistication level $\mu$, first-round price $p_1$.}
\KwOut{First-round threshold $t_1$.}\;

\lIf{$p_1<\tfrac{\sqrt\mu}{2+\sqrt\mu-\mu^2}$}{
$t_1=p_1\tfrac{1+\sqrt\mu}{\sqrt\mu}$
}
\lElseIf{$\tfrac{\sqrt\mu}{2+\sqrt\mu-\mu^2}\leq p_1<\tfrac{2+\mu}{4+\mu-\mu^2}$}
{
$t_1=\tfrac{1+p_1(1-\mu^2)}{2-\mu^2}$
}
\lElseIf{$\tfrac{2+\mu}{4+\mu-\mu^2}\leq p_1<\tfrac{2+\mu}{3+\mu}$}{
$t_1=p_1\tfrac{3+\mu}{2+\mu}$
}
\lElse{$t_1=1$.}\;

\roundtwo\;
\KwIn{Sophistication level $\mu$, prices $p_1,p_2$, first round {\em Accept/Reject} decision $D$.}
\KwOut{Second round threshold $t_2$.}\;
Set $t_2=p_2$.

\caption{{\sc Buyer Threshold Strategy}}
\label{alg:buyer}
\end{algorithm}
\section{Proof of Lemma~\ref{lem:p1implement}}
\label{app:exists}

Fix $p_1$.
Define the seller's accept/reject price correspondences as a function of the threshold $t$ as follows:
\begin{itemize}
    \item $\mathcal P_2^\rej(t)=\arg\max_p R^\rej(p\,;t)$
    \item $\mathcal P_2^\acc(t)=\arg\max_p R^\acc(p\,;t)$,
\end{itemize}
where we make the dependence of the revenue curves on $t$ explicit, and dependence on $p_1$ implicit.
Note that these correspondences are well-defined even when $t$ exceeds $1$.
From the definition of the conditional revenue curves, both correspondences are upper hemicontinuous with respect to $t$.
Further note that for all $t$, and all $p\in\mathcal P_2^\rej(t)\cup\mathcal P_2^\acc(t)$, $p\leq 1$.

We may further define the buyer's {\em implementable utility correspondences} in the following way:
\begin{itemize}
    \item $\mathcal U^\rej(t)=\{(t-p_2^\rej)^+\,|\,p_2^\rej\in P_2^\rej(t)\}$
    \item $\mathcal U^\acc(t)=\{t-p_1+(t-p_2^\acc)^+\,|\,p_2^\acc\in P_2^\acc(t)\}$
\end{itemize}
Observe that $\mathcal U^\rej$ and $\mathcal U^\acc$ are also upper hemicontinuous. Moreover, observe that $\mathcal U^\rej(0)=0$, $\mathcal U^\rej(0)=-p_1$, and that there exists some $\overline t$ sufficiently large (possibly greater than $1$) such that every $u^\rej\in\mathcal U^\rej(\overline t)$ and $u^\acc\in\mathcal U^\acc(\overline t)$, $u^\rej\leq u^\acc$. Denote this latter relationship by $\mathcal U^\rej(\overline t)\leq \mathcal U^\acc(\overline t)$.

Consider $t^*=\inf\{t\,|\,\mathcal U^\rej(t)\leq \mathcal U^\acc(t)\}$. By the upper hemicontinuity of $\mathcal U^\rej$ and $\mathcal U^\acc$, one or both of the following must hold:
\begin{itemize}
    \item there exists $u^\acc\in\mathcal U^\acc(t^*)$ and ${\underline u}^\rej,{\overline u}^\rej\in \mathcal U^\rej(t^*)$ such that ${\underline u}^\rej\leq u^\acc\leq{\overline u}^\rej$.
    \item there exists $u^\rej\in\mathcal U^\rej(t^*)$ and ${\underline u}^\acc,{\overline u}^\acc\in \mathcal U^\acc(t^*)$ such that ${\underline u}^\acc\leq u^\rej\leq{\overline u}^\acc$.
\end{itemize}
We consider the former case: the latter follows from identical reasoning.
Define $p_2^\acc$ to be the price certifying $u^\acc\in\mathcal U^\acc(t^*)$, and ${\underline p}_2^\rej,{\overline p}_2^\rej$ the prices certifying ${\underline u}^\rej,{\overline u}^\rej\in \mathcal U^\rej(t^*)$, respectively.
Consider the continuation 
$(p_1,p_2^\rej,p_2^\acc,t^*)$\footnote{A threshold greater than $1$ has the interpretation that all sophisticated types reject.}, where $p_2^\rej$ is ${\overline p}_2^\rej$ with probability $\alpha$ and ${\underline p}_2^\rej$ with probability $1-\alpha$, for $\alpha=(u^\acc-{\underline u}^\rej)/({\overline u}^\rej-{\underline u}^\rej)$. This continuation then satisfies conditions \ref{item:ropt}, \ref{item:aopt}, and \ref{item:threshold} of Definition~\ref{def:continuation} by construction.

\begin{proof}[Proof of Theorem~\ref{thm:exists}]
For any $p_1$, Lemma~\ref{lem:p1implement} gives an implementation where $t$ satisfies the buyer indifference equation $t-p_1+\mathbb E[(t-p_2^\acc)^+]=t-\mathbb E[p_2^\rej]$. Moreover, for any $v>t$, the lefthand side is weakly greater than the righthand side, i.e.\ $v-p_1+\mathbb E[(v-p_2^\acc)^+]=v-\mathbb E[p_2^\rej]$, and for any $v\leq t$, the lefthand side is weakly less than the righthand side.
Hence, all sophisticated types are incentivised to abide by the threshold of $t$.
The final observation is that by the construction of the continuation in Lemma~\ref{lem:p1implement}, the seller's revenue from the continuation varies continuously with $p_1$. Consequently, there exists a revenue-optimal $p_1$ for the seller to pick.
\end{proof}
\section{Proof of Lemma~\ref{lem:ordering}}
\label{app:ordering}

We prove the main lemma using the lemmas below.

\begin{lemma}
\label{lem:orderingp2rp*}
    Let $F$ have monopoly reserve $p^*$. Then for any continuation $(p_1, p_2^\mathcal{R}, p_2^\mathcal{A},t )$, $p_2^\mathcal{R} \leq p^*$ for every realization of $p_2^\mathcal{R}$ and $p_2^\mathcal{A}$.
\end{lemma}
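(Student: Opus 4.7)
The plan is to show the maximizer of $R^\rej(p) = p(1-F^\rej(p; p_1, t))$ lies in $[0, p^*]$, which via condition~\ref{item:ropt} of Definition~\ref{def:continuation} immediately yields $p_2^\rej \leq p^*$ for every realization. By (\ref{eq:frej}) and linearity, the revenue curve decomposes as a convex combination of truncated revenue curves:
\begin{equation*}
    R^\rej(p) \;=\; \mu_\rej\, R_{\leq t}(p) \;+\; (1-\mu_\rej)\, R_{\leq p_1}(p).
\end{equation*}
It therefore suffices to prove the following one-truncation fact: for any truncation point $x\in(0,1]$ and any $p>p^*$, we have $R_{\leq x}(p)\leq R_{\leq x}(p^*)$; the conclusion then follows by taking the appropriate convex combination.

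To establish the one-truncation fact, write $R_{\leq x}(p) = p(F(x)-F(p))/F(x)$ for $p \leq x$ and $R_{\leq x}(p)=0$ for $p \geq x$. On $(0,x)$ we compute
\begin{equation*}
    R_{\leq x}'(p) \;=\; \frac{F(x) - F(p) - pf(p)}{F(x)}.
\end{equation*}
The key observation is that $p \mapsto F(p)+pf(p)$ is strictly increasing, since its derivative equals $-R''(p)$ and $R''<0$ by Assumption~\ref{ass:regular}. The first-order condition at the unconditional monopoly price yields $F(p^*)+p^*f(p^*)=1 \geq F(x)$, so this strict monotonicity gives $F(p)+pf(p) > F(x)$ for all $p > p^*$, whence $R_{\leq x}'(p) < 0$ on $(p^*, x]$. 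Combined with $R_{\leq x}(p)=0 \leq R_{\leq x}(p^*)$ for $p \geq x$, we obtain $R_{\leq x}(p) \leq R_{\leq x}(p^*)$ for every $p > p^*$, regardless of whether $x$ lies above or below $p^*$.

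Taking the $(\mu_\rej, 1-\mu_\rej)$ convex combination of this inequality applied at $x=t$ and $x=p_1$ yields $R^\rej(p) \leq R^\rej(p^*)$ for all $p > p^*$. Hence every maximizer of $R^\rej$ lies in $[0,p^*]$, and condition~\ref{item:ropt} of Definition~\ref{def:continuation} forces every realization of $p_2^\rej$ into this range. I anticipate no real obstacle beyond a mild bookkeeping issue in degenerate cases where $F(t)=0$ or $F(p_1)=0$ makes one mixture weight or truncation undefined; in such cases the affected component drops out of the posterior entirely and the argument applies verbatim to the surviving term.
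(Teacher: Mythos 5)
Your proof is correct and takes essentially the same route as the paper's: both decompose $R^\rej$ as the convex combination $\mu_\rej R_{\leq t}+(1-\mu_\rej)R_{\leq p_1}$ and reduce the claim to showing that each truncated revenue curve is maximized weakly below $p^*$. The only difference is in how that one-truncation fact is verified --- you differentiate and use the identity $\tfrac{d}{dp}\left[F(p)+pf(p)\right]=-R''(p)$ together with the first-order condition at $p^*$, while the paper compares the ratios $R_{\leq x}(p^*)/R_{\leq x}(p)$ and $R(p^*)/R(p)$ --- and the degenerate ties you flag at the end (both sides equal to zero when $x\leq p^*$) are the only bookkeeping that the paper's strict-inequality casework handles more explicitly.
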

\begin{proof}
    Assume $p_1\leq t$. A symmetric argument will hold for the case where $p_1>t$. We may further assume that $p^*< t$, as the claim is trivial otherwise. We will pick a arbitrary price $p>p^*$, and show that $R^\rej(p)<R^\rej(p^*)$. The revenue of any price conditioned on reject is a convex combination of $R_{\leq t}$ and $R_{\leq p_1}$. We will first show that $R_{\leq t}(p^*)>R_{\leq t}(p)$, and $R_{\leq p_1}(p^*)\geq R_{\leq p_1}(p)$, with equality only if $p^*\geq p_1$.

    We will first argue for $R_{\leq t}$.
    If $p\geq t$, the claim holds trivially. Otherwise, both $p$ and $p^*$ are less than $t$. We can write:
    \begin{equation*}
        \frac{R_{\leq t}(p^*)}{R_{\leq t}(p)}=\frac{p^*(1-F(p^*)/F(t))}{p(1-F(p)/F(t))}=\frac{p^*(F(t)-F(p^*))}{p(F(t)-F(p))}> \frac{p^*(1-F(p^*))}{p(1-F(p))}>1.
    \end{equation*}
    Where the first inequality follows from the fact that $F(p)> F(p^*)$ and the second from the fact that $R(p^*)> R(p)$. If $p^*<p_1$, an identical argument holds. If $p^*\geq p_1$, then $p\geq p_1$, and $R_{\leq p_1}(p^*)= R_{\leq p_1}(p)=0$.

    We now show that $R^\rej(p)<R^\rej(p^*)$. If $p\leq p_1$, then we may write the revenue difference as:
    \begin{equation*}
        \mu_\rej (R_{\leq t}(p^*)-R_{\leq t}(p))+(1-\mu_\rej) (R_{\leq p_1}(p^*)-R_{\leq p_1}(p))> 0.
    \end{equation*}
    If $p^*\leq p_1 < p$, then the revenue difference is:
        \begin{equation*}
        \mu_\rej (R_{\leq t}(p^*)-R_{\leq t}(p))+(1-\mu_\rej) R_{\leq p_1}(p^*)> 0.
    \end{equation*}
    Finally, if $p^*>p_1$, the revenue difference is
        \begin{equation*}
        \mu_\rej (R_{\leq t}(p^*)-R_{\leq t}(p))> 0.
    \end{equation*}
    The lemma therefore holds in each case.
\end{proof}

\begin{lemma}
\label{lem:orderingp*p2a}
    Let $F$ have monopoly reserve $p^*$. Then for any continuation $(p_1, p_2^\mathcal{R}, p_2^\mathcal{A},t )$, $p^* \leq p_2^\mathcal{A}$ for every realization of $p_2^\mathcal{R}$ and $p_2^\mathcal{A}$.
\end{lemma}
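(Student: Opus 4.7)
The plan is to mirror the structure of the proof of Lemma~\ref{lem:orderingp2rp*}, but with inequalities reversed. Fix any continuation $(p_1, p_2^\rej, p_2^\acc, t)$ and any realization of $p_2^\acc$, and pick an arbitrary price $p$ with $0 \leq p < p^*$. I will show that $R^\acc(p) < R^\acc(p^*)$; since $p_2^\acc$ must maximize $R^\acc$ by item~\ref{item:aopt} of Definition~\ref{def:continuation}, this rules out every realization strictly below $p^*$, giving the claim.

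Since $R^\acc(q) = \mu_\acc R_{\geq t}(q) + (1-\mu_\acc) R_{\geq p_1}(q)$ is a convex combination, it suffices to establish $R_{\geq x}(p^*) > R_{\geq x}(p)$ for each $x \in \{t, p_1\}$ (both components strict; no order between $p_1$ and $t$ is needed). The key structural fact is piecewise: $R_{\geq x}(q) = q$ for $q \leq x$ and $R_{\geq x}(q) = R(q)/(1-F(x))$ for $q \geq x$. Strict concavity of $R$ on $[0,1]$ (Assumption~\ref{ass:regular}) makes $R$ strictly increasing on $[0,p^*]$ and strictly decreasing on $[p^*,1]$, and implies the unique maximizer of $R_{\geq x}$ is $\max(x,p^*)$.

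For each fixed $x$, I split on the order of $x$, $p$, $p^*$. If $x > p^*$, then $p < p^* < x$ and both evaluations are on the linear piece, so $R_{\geq x}(p^*) = p^* > p = R_{\geq x}(p)$. If $x \leq p^*$, there are two subcases. When $p \leq x$, $R_{\geq x}(p) = p$, while $R_{\geq x}(p^*) = R(p^*)/(1-F(x)) \geq R(x)/(1-F(x)) = x$ by monopoly-optimality of $p^*$; strict concavity upgrades this to strict when $x < p^*$, and the boundary $x = p^*$ forces $p < p^* = x$ making $R_{\geq x}(p) = p < x = R_{\geq x}(p^*)$. When $x < p < p^*$, both values are of the form $R(\cdot)/(1-F(x))$, and strict concavity of $R$ gives $R(p) < R(p^*)$, hence the strict inequality. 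Combining across $x \in \{t, p_1\}$ yields $R^\acc(p^*) > R^\acc(p)$.

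The main obstacle is just the bookkeeping to secure strictness in every subcase — in particular, ruling out boundary coincidences like $p = x = p^*$ (impossible since $p < p^*$) — so that the strict inequality survives the convex combination regardless of whether $\mu_\acc$ lies in the interior of $[0,1]$. Once that check is complete, no further work is needed beyond the case analysis outlined above.
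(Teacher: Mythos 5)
Your proof is correct and follows essentially the same route as the paper's: fix an arbitrary $p<p^*$, decompose $R^\acc$ as the convex combination $\mu_\acc R_{\geq t}+(1-\mu_\acc)R_{\geq p_1}$, and show each component is strictly larger at $p^*$ than at $p$, so that sequential rationality rules out any realization of $p_2^\acc$ below $p^*$. Your case analysis on the piecewise form of $R_{\geq x}$ is in fact slightly more careful than the paper's (which informally drops the $R_{\geq p_1}(p)$ term when $p<p_1$ rather than using $R_{\geq p_1}(p)=p$), but the underlying argument is the same.
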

\begin{proof}
    Assume $p_1\leq t$. A symmetric argument will hold for the case where $p_1>t$. We may further assume that $p^*>p_1$, as the claim is trivial otherwise. We will pick an arbitrary price $p<p^*$, and show that $R^\acc(p)<R^\acc(p^*)$. The revenue of any price conditioned on accept is a convex combination of $R_{\geq t}$ and $R_{\geq p_1}$. We will first show that $R_{\geq p_1}(p^*)>R_{\geq p_1}(p)$, and $R_{\geq t}(p^*)\geq R_{\geq t}(p)$, with equality only if $p^*\leq t$.

    We will first argue for $R_{\geq p_1}$.
    We can write:
    \begin{equation*}
        \frac{R_{\geq p_1}(p^*)}{R_{\geq p_1}(p)}=\frac{p^*\left(\frac{1-F(p^*)}{1-F(t)}\right)}{p\left(\frac{1-F(p)}{1-F(t)}\right)}= \frac{p^*(1-F(p^*))}{p(1-F(p))}>1.
    \end{equation*}
    Where the inequality follows from the fact that $R(p^*)> R(p)$. If $p^*>t$, an identical argument holds for $R_{\geq t}$. If $p^*\leq t$, then $p\leq t$, and $R_{\geq t}(p^*)= R_{\geq t}(p)=0$.

    We now show that $R^\acc(p)<R^\acc(p^*)$. If $p\geq p_1$, then we may write the revenue difference as:
    \begin{equation*}
        \mu_\acc (R_{\geq t}(p^*)-R_{\geq t}(p))+(1-\mu_\acc) (R_{\geq p_1}(p^*)-R_{\geq p_1}(p))> 0.
    \end{equation*}
    If $p^*\geq p_1 > p$, then the revenue difference is:
        \begin{equation*}
        \mu_\acc (R_{\geq t}(p^*)-R_{\geq t}(p))+(1-\mu_\acc) R_{\geq p_1}(p^*)> 0.
    \end{equation*}
    Finally, if $p^*<p_1$, the revenue difference is
        \begin{equation*}
        \mu_\acc (R_{\geq t}(p^*)-R_{\geq t}(p))> 0.
    \end{equation*}
    The lemma therefore holds in each case.
\end{proof}

\begin{lemma}
\label{lem:orderingp1t}
For any continuation $(p_1, p_2^\mathcal{R}, p_2^\mathcal{A},t )$, $p_1 \leq t$ for every realization of $p_2^\mathcal{R}$ and $p_2^\mathcal{A}$.
\end{lemma}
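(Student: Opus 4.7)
The plan is to proceed by contradiction: assume $p_1>t$ and combine the threshold indifference equation from condition~\ref{item:threshold} of Definition~\ref{def:continuation} with the price orderings already established in Lemmas~\ref{lem:orderingp2rp*} and~\ref{lem:orderingp*p2a} to derive $t-p_1\geq 0$, the desired contradiction.

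The first step is to invoke Lemmas~\ref{lem:orderingp2rp*} and~\ref{lem:orderingp*p2a} to conclude that, for every realization of the (possibly randomized) second-round prices, $p_2^\rej \leq p^* \leq p_2^\acc$. Since the map $x\mapsto (t-x)^+$ is nonincreasing, this pointwise bound lifts to the pointwise inequality $(t-p_2^\acc)^+ \leq (t-p_2^\rej)^+$, and therefore in expectation to
\[
\mathbb E\bigl[(t-p_2^\acc)^+\bigr] \;\leq\; \mathbb E\bigl[(t-p_2^\rej)^+\bigr].
\]
The second step is to substitute this bound into the threshold equation $t-p_1+\mathbb E[(t-p_2^\acc)^+]=\mathbb E[(t-p_2^\rej)^+]$ and rearrange, yielding
\[
t-p_1 \;=\; \mathbb E\bigl[(t-p_2^\rej)^+\bigr] - \mathbb E\bigl[(t-p_2^\acc)^+\bigr] \;\geq\; 0,
\]
which contradicts the assumption $p_1 > t$.

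The only subtlety I anticipate is a brief circularity check. Both Lemma~\ref{lem:orderingp2rp*} and Lemma~\ref{lem:orderingp*p2a} open by assuming $p_1 \leq t$ and dispatch the complementary case via a "symmetric argument"; before citing them in the present proof I would explicitly write out that symmetric argument to confirm it does not implicitly assume the ordering we are trying to prove here. Modulo that bookkeeping, the conclusion is an immediate one-line consequence of the monotonicity of $(t-\cdot)^+$ together with the threshold equation, with no need to case-split on whether the continuation is naive-focused or sophisticated-focused.
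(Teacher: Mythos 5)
Your proof is correct and follows essentially the same route as the paper's: both rearrange the threshold indifference equation and reduce the claim to $\mathbb E[(t-p_2^\acc)^+]\leq\mathbb E[(t-p_2^\rej)^+]$, which follows from the ordering $p_2^\rej\leq p^*\leq p_2^\acc$ supplied by the two preceding lemmas. Your flag about verifying the ``symmetric argument'' in those lemmas to rule out circularity is a reasonable piece of diligence, but it does not change the substance of the argument.
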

\begin{proof}
The following threshold equation must be satisfied:
\begin{align*}
    & t - p_1 + \max \{0, t-p_2^\mathcal{A}\} = \max \{0, t - p_2^\mathcal{R}\}\\
    \iff\ & p_1 = t - \max \{0, t - p_2^\mathcal{R}\} + \max \{0, t-p_2^\mathcal{A}\}
\end{align*}
To show $t \geq p_1$, it is sufficient to show that $\max \{0, t-p_2^A\} \leq \max \{0, t - p_2^R\}$. We can show this using the fact that $p_2^\mathcal{R} \leq p_2^\mathcal{A}$ which is implied by Lemma~\ref{lem:orderingp2rp*} and Lemma~\ref{lem:orderingp*p2a}.
\end{proof}

\begin{lemma}
For any continuation $(p_1, p_2^\mathcal{R}, p_2^\mathcal{A},t )$, $p_2^\mathcal{R} \leq t$ for every realization of $p_2^\mathcal{R}$ and $p_2^\mathcal{A}$.
\end{lemma}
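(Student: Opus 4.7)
The plan is to show that the support of the reject-distribution $F^\rej$ is contained in $[0,t]$, and then to argue that the revenue-maximizing price cannot exceed $t$ because any such price yields zero revenue, while a strictly positive revenue is always achievable.

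First, I would recall from Lemma~\ref{lem:orderingp1t} that $p_1 \leq t$ in any continuation. Since $F^\rej(\cdot\,;p_1,t)$ is defined as the mixture $\mu_\rej F_{\leq t} + (1-\mu_\rej) F_{\leq p_1}$, both component CDFs place all mass in $[0,t]$ when $p_1 \leq t$. Consequently $F^\rej(t) = 1$, and so for any $p > t$ we have $1 - F^\rej(p) = 0$, giving $R^\rej(p) = p(1 - F^\rej(p)) = 0$.

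Next, I would exhibit a price yielding strictly positive revenue in $[0,t]$, which forces every optimizer of $R^\rej$ to sit at or below $t$. In the case $p_1 < t$, I would evaluate at $p = p_1 > 0$: here $F^\rej(p_1) = \mu_\rej F_{\leq t}(p_1) + (1-\mu_\rej)$, and since $F$ is atomless with full support, $F_{\leq t}(p_1) < 1$, giving $R^\rej(p_1) > 0$. In the degenerate case $p_1 = t$, the reject distribution reduces to $F_{\leq t}$, whose monopoly price lies strictly inside $[0,t)$ because $F$ is atomless, so again the optimal reject revenue is strictly positive.

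Combining the two observations, since the supremum of $R^\rej$ is strictly positive and every $p > t$ attains revenue zero, every price in $\arg\max p(1 - F^\rej(p\,;p_1,t))$ must satisfy $p \leq t$. By Item~\ref{item:ropt} of Definition~\ref{def:continuation}, every realization of $p_2^\rej$ lies in this arg-max, yielding $p_2^\rej \leq t$. There is no genuine obstacle here — the content of the argument is entirely the containment of the reject-support in $[0,t]$, which follows immediately from $p_1 \leq t$.
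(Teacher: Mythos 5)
Your proof is correct and follows essentially the same route as the paper's: both use $p_1\leq t$ (Lemma~\ref{lem:orderingp1t}) to conclude that $F^\rej$ is supported in $[0,t]$, so any price above $t$ earns zero reject revenue and hence cannot be an optimizer, since a strictly positive reject revenue is attainable. The paper states this in one (typo-marred) line; you have simply filled in the supporting details, including the positive-revenue witness.
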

\begin{proof}
    By Lemma~\ref{lem:orderingp1t}, we know that $p_2^\acc \leq t$; otherwise, $R^\acc(p_2^\acc) = 0$ which is a contradiction.
\end{proof}

\begin{lemma}
For any continuation $(p_1, p_2^\mathcal{R}, p_2^\mathcal{A},t )$, $p_2^\mathcal{A} \geq p_1$ for every realization of $p_2^\mathcal{R}$ which is impossible because we can offer a price that obtains positive revenue.
\end{lemma}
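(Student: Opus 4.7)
The plan is to argue by contradiction. Suppose some realization of $p_2^\acc$ satisfies $p_2^\acc < p_1$; I will exhibit a second-round price yielding strictly greater accept-conditioned revenue than $p_2^\acc$, contradicting item~\ref{item:aopt} of Definition~\ref{def:continuation}. The natural candidate is $p_1$ itself.

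The first step is to pin down the support of the accept posterior. The ordering $p_1 \leq t$ was just established in Lemma~\ref{lem:orderingp1t}. From~(\ref{eq:facc}), $F^\acc(\cdot\,;p_1,t)$ is a convex combination of $F_{\geq t}$ and $F_{\geq p_1}$, both of which are supported on $[p_1,1]$ when $p_1 \leq t$. Hence $F^\acc$ is supported on $[p_1,1]$. Since $F$ is atomless by Assumption~\ref{ass:regular}, this gives $F^\acc(p_1)=0$, and similarly $F^\acc(p_2^\acc)=0$ because $p_2^\acc<p_1$.

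The second step is a one-line revenue comparison: $R^\acc(p_1) = p_1\bigl(1 - F^\acc(p_1)\bigr) = p_1$, whereas $R^\acc(p_2^\acc) = p_2^\acc\bigl(1 - F^\acc(p_2^\acc)\bigr) = p_2^\acc < p_1$. This contradicts $p_2^\acc \in \arg\max_p R^\acc(p)$, completing the proof. There is no substantive obstacle: the argument is essentially dual to the already-proven bound $p_2^\rej \leq t$, which exploited that both reject components sit below $t$; here both accept components sit above $p_1$. The only subtlety is invoking atomlessness to ensure that the boundary point $p_1$ does not itself carry mass, which is what allows us to evaluate $R^\acc(p_1) = p_1$ exactly rather than only as a limit.
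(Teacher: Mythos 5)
Your proof is correct and follows essentially the same route the paper intends: use $p_1\leq t$ to place the support of $F^\acc$ in $[p_1,1]$ and conclude that any realization $p_2^\acc<p_1$ violates optimality. In fact your version is the more careful one — the paper's one-line justification appeals to such a price earning zero revenue, whereas (as you correctly observe) a price below $p_1$ is accepted with probability one and earns positive revenue $p_2^\acc$; the right conclusion is that it is strictly dominated by $R^\acc(p_1)=p_1$, which is exactly the comparison you make.
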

\begin{proof}
    By Lemma~\ref{lem:orderingp1t}, we know that $p_2^\rej \geq p_1$; otherwise, $R^\rej(p_2^\rej) = 0$ which is a contradiction.
\end{proof}

\section{Proof of Lemma~\ref{lem:plph}}
\label{app:mixture}

To prove the lemma, we will first characterize the two pieces of the reject revenue curve $R^\rej$.
\label{sec:mixed}
\begin{lemma}\label{lem:trunccon}
	Let $F$ satisfy Assumption~\ref{ass:regular}. Then for any $x> 0$, the truncated distribution $F_{\leq x}$ satisfies Assumption~\ref{ass:regular} as well.
\end{lemma}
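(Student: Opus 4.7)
The plan is to reduce the concavity of $R_{\leq x}$ directly to that of $R$ via a one-line algebraic identity on the relevant interval, namely the support $[0,x]$ of the truncated distribution. (Above $x$, $F_{\leq x}$ is identically $1$ and $R_{\leq x}$ identically $0$, so strict concavity in the sense of Assumption~\ref{ass:regular} should be understood on the support; this is how the lemma gets used later, since all second-round prices in a continuation lie in the truncated support by Lemma~\ref{lem:ordering}.)

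First I would write, for $p \in [0,x]$,
\begin{equation*}
    R_{\leq x}(p) \;=\; p\Bigl(1 - \tfrac{F(p)}{F(x)}\Bigr) \;=\; p\Bigl(1 - \tfrac{1}{F(x)}\Bigr) + \tfrac{1}{F(x)}\,R(p),
\end{equation*}
using $R(p) = p - pF(p)$. The first term is linear in $p$, so differentiating twice gives
\begin{equation*}
    R_{\leq x}''(p) \;=\; \tfrac{1}{F(x)}\,R''(p).
\end{equation*}
Since $x > 0$ and $F$ is fully supported on $[0,1]$, we have $F(x) > 0$, and Assumption~\ref{ass:regular} gives $R''(p) < 0$. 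Hence $R_{\leq x}''(p) < 0$ for all $p \in [0,x]$, which is the desired strict concavity of the truncated revenue curve on its support.

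The only thing to double-check is the boundary at $p = x$: $R_{\leq x}$ is continuous there (both one-sided limits equal $0$) and the left derivative equals the ordinary derivative of the formula above, so there is no hidden kink affecting concavity within the support. Other than this minor sanity check, there is no real obstacle --- the proof is a direct computation exploiting the fact that truncation rescales $F$ by the constant $1/F(x)$ on $[0,x]$, which preserves the sign of $R''$.
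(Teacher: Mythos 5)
Your decomposition $R_{\leq x}(p) = p\bigl(1-\tfrac{1}{F(x)}\bigr) + \tfrac{1}{F(x)}R(p)$ is exactly the identity the paper uses (the paper writes the linear piece as $p(F(x)-1)/F(x)$, which is the same thing), and both arguments conclude strict concavity from strict concavity of $R$ plus a linear term. The proposal is correct and takes essentially the same approach as the paper.
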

\begin{proof}
	Assumption~\ref{ass:regular} is that the revenue curve $R(p)=p(1-F(p))$ is strictly concave. For $p\in [0,x]$, we may write the conditional revenue curve as
	\begin{equation*}
		R_{\leq x}(p)=p(1-F_{\leq x}(p))
		=p\left(1-\tfrac{F(p)}{F(x)}\right)
		=\tfrac{p(1-F(p))}{F(x)}+\tfrac{p(F(x)-1)}{F(x)}
		=\tfrac{R(p)}{F(x)}+\tfrac{p(F(x)-1)}{F(x)}
	\end{equation*}
	The first term on the righthand side is strictly concave, since $R(p)$ is strictly concave. The second term is linear in $p$. Hence, the sum is strictly concave.
\end{proof}

\begin{lemma}\label{lem:pwconcave}
	Let $F$ satisfy Assumption~\ref{ass:regular}. Then given $p_1\leq t$, the second-round revenue curve conditioned on reject $R^\rej=p(1-F^\rej(p))$ is strictly concave on $[0,p_1]$ and on $[p_1,t]$.
\end{lemma}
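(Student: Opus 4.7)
The plan is to reduce the lemma to an immediate corollary of Lemma~\ref{lem:trunccon} by expanding $R^\rej$ via the mixture decomposition of $F^\rej$ given in equation~(\ref{eq:frej}), and then analyzing the two subintervals separately.

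First, I would use the definition of $F^\rej$ to write
\begin{equation*}
R^\rej(p) \;=\; p\bigl(1-F^\rej(p;p_1,t)\bigr) \;=\; \mu_\rej\,R_{\leq t}(p) \;+\; (1-\mu_\rej)\,R_{\leq p_1}(p),
\end{equation*}
so the problem is reduced to analyzing the two truncated revenue curves and their convex combination on each subinterval.

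On $[0,p_1]$, both $R_{\leq t}$ and $R_{\leq p_1}$ are well-defined (neither truncation has saturated at these arguments, since $p\leq p_1\leq t$), and each is strictly concave by Lemma~\ref{lem:trunccon}. A convex combination of strictly concave functions is strictly concave, so $R^\rej$ is strictly concave on $[0,p_1]$.

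On $[p_1,t]$, the truncation at $p_1$ forces $F_{\leq p_1}(p)=1$ for all $p\geq p_1$, so $R_{\leq p_1}(p)\equiv 0$ on this interval. Hence $R^\rej(p)=\mu_\rej R_{\leq t}(p)$ there, which is a positive scalar multiple of the strictly concave function $R_{\leq t}$ (again by Lemma~\ref{lem:trunccon}, now applied with truncation point $t$), and therefore strictly concave on $[p_1,t]$.

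Since the argument is essentially immediate from Lemma~\ref{lem:trunccon}, there is no real obstacle — the only points requiring care are degenerate cases where $p_1=0$, $p_1=t$, or $\mu_\rej\in\{0,1\}$, in which case one of the intervals collapses or one summand vanishes; these can be dispatched in a sentence by noting that the claimed strict concavity either is vacuous or reduces to strict concavity of a single truncated revenue curve provided by Lemma~\ref{lem:trunccon}.
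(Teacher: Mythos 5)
Your proposal is correct and follows the same route as the paper: both decompose $R^\rej$ as $\mu_\rej R_{\leq t}+(1-\mu_\rej)R_{\leq p_1}$ on $[0,p_1]$ and as $\mu_\rej R_{\leq t}$ on $[p_1,t]$ (where $R_{\leq p_1}$ vanishes), then invoke Lemma~\ref{lem:trunccon} for strict concavity of each truncated revenue curve. Your extra sentence on degenerate cases is a harmless refinement the paper omits.
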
	

Note that $F^R$ will generally not be strictly concave over the combined interval $[0,t]$, due to a kink at $p_1$.

\begin{proof}
	We may write $R^\rej$ as:
	\begin{equation*}
		R^\rej(p)=\begin{cases}
			\mu_\rej R_{\leq t}(p)+(1-\mu_\rej)R_{\leq p_1}(p)&p\in[0,p_1]\\
			\mu_\rej R_{\leq t}(p)&p\in[p_1,t],
		\end{cases}
	\end{equation*}
where $\mu_\rej=\mu t/(\mu t+(1-\mu)p_1)$ denotes the conditional probability of a sophisticated buyer given a round-$1$ reject. By Lemma~\ref{lem:trunccon}, both $R_{\leq p_1}$ and $R_{\leq t}$ are strictly concave. Hence, $R^{\rej}$ is strictly concave in both cases.
\end{proof}

\begin{proof}[Proof of Lemma~\ref{lem:plph}]

As a consequence of Lemma~\ref{lem:pwconcave}, $R^\rej$ has exactly one local maximum on each of $[0,p_1]$ and $[p_1,t]$. 

Define $p_L$ and $p_H$ to be these respective maxima.
Note that for all $p\in [p_1,t]$, $R_{\leq p_1}(p)=0$. Thus, we can write
\begin{align*}
p_H= \arg \max_{p\in [p_1,t]}\mu_\rej R_{\leq t}(p)=p^*_{\leq t}.
\end{align*}

If $\mu=1$, which implies that $\mu_\rej=1$, then $R^\rej$ is strictly concave and maximized at $p^\rej_2=p^*_{\leq t}$. Since the equilibrium is sophisticated-focused we have that $p_1=p^\rej_2=p^*_{\leq t}$. Thus, the global optimal price $p^*_{\leq t}$ lies in $[0,p_1]$, hence $p_L=p^*_{\leq t}$.  

For $\mu<1$, we  prove that   $p^\rej_2$ cannot be deterministic. 
If it were, it would have to globally maximize $R^\rej$.
Moreover, it must be that $p_1=p_2^\rej$: for sophisticated-focused continuations, the threshold equation is $t-p_1=t-\mathbb E[p_2^\rej]$.
Hence we must have that $p_1$ globally maximizes $R^\rej$.
Since $R_{\leq p_1}$ is strictly concave and $R_{\leq p_1}(p_1)=0$, $R_{\leq p_1}'(p_1)<0$.
Hence, the left derivative of $R^\rej$ at $p_1$ is strictly less than its right derivative, i.e.
\begin{equation*}
    (1-\mu_R)R'_{\leq p_1}(p_1)+ \mu_R R'_{\leq t}(p_1)<\mu_R R'_{\leq t}(p_1).
\end{equation*}
This means it is impossible for $p_1$ to be locally or globally optimal.
Thus, $p^\rej_2$ must be randomized. 
This is only possible if it is supported on values both above and below $p_1$, and $p_L$ and $p_H$ are the only candidates.
\end{proof}

\section{Proof of Lemma~\ref{lem:p2adet}}
\label{app:a2det}

We first prove that $p_2^\acc$ is deterministic.
The seller's revenue for a price $p$ conditioned on accept can be written as $R^\acc(p)=\mu_\acc R_{\geq t}(p)+(1-\mu_\acc)R_{\geq p_1}(p)$, where for any $x$,
$$
R_{\geq x}(p)=\begin{cases}
	p&p< x\\
	p\tfrac{1-F(p)}{1-F(x)}&p\geq x
	\end{cases}.
$$
By Assumption~\ref{ass:regular} and the fact that $F$ is fully supported, both $ R_{\geq t}$ and $R_{\geq p_1}$ are continuous, and hence so too is $R^\acc$. 
Moreover, Assumption~\ref{ass:regular} implies that $R_{\geq t}$ and $R_{\geq p_1}$ are strictly concave above $t$ and  $p_1$, respectively.
Using the fact that $p_1\leq t$, we therefore, have that $R^\acc$ is strictly concave above $p_1$, and linear below.
These facts together imply that $R^\acc$ must have exactly one global maximum.
Since any randomization of $p_2^\acc$  must be between maxima of $R^\acc$, this implies that $p_2^\acc$  is deterministic.

Now assume that the continuation is sophisticated-focused, i.e.\ $p_2^\acc\geq t$.
We will show that $p_2^\acc=\max(p^*,t)$.
Note that for $p\geq t$, we can write:
\begin{align}
    R^\acc(p) &= \mu_\acc R_{\geq t}(p)+(1-\mu_\acc)R_{\geq p_1}(p)\notag\\
            &= \mu_\acc p\tfrac{1-F(p)}{1-F(t)} + (1-\mu_\acc) p\tfrac{1-F(p)}{1-F(p_1)}\notag\\
            &= \Big(\tfrac{\mu_\acc}{1-F(t)}+\tfrac{1-\mu_\acc}{1-F(p_1)}\Big)p(1-F(p)).\label{eq:rarewrite}
\end{align}
Note that the first factor in (\ref{eq:rarewrite}) does not depend on $p$.
It follows that $R^\acc(p)$ has the same maxima as $R$ on $p\geq t$.
Therefore if $p^*\geq t$, then $p_2^\acc=p^*$. 
Otherwise, by the concavity of $R$ and thus $R^\acc$, it must be that $p_2^\acc=t$.
Hence $p_2^\acc=\max(p^*,t)$.


\section{Proof of Lemma~\ref{lem:ttop}}
\label{app:ttop}

We divide Lemma~\ref{lem:ttop} into several pieces and prove each one separately.
First, we prove that given $\mu$, sophisticated-focused continuations are exactly those where $(1-\mu) R'(t)+(1-F(t))\mu\geq 0$.
Moreover, the choice of $p_2^\acc$ is uniquely determined.
\begin{lemma}\label{lem:sophp2a}
    Given $\mu$, a continuation $(p_1,p_2^\rej,p_2^\acc,t)$ is sophisticated if and only if $(1-\mu) R'(t)+(1-F(t))\mu\geq 0$.
    For such a continuation, it must be that $p_2^\acc=\max(p^*,t)$.
\end{lemma}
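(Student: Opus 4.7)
My plan is to derive the derivative condition from a direct analysis of the accept-revenue curve $R^\acc$, using Lemma~\ref{lem:p2adet} to handle the identity $p_2^\acc = \max(p^*,t)$ and the uniqueness claim once the sophisticated-focused case is isolated. Recall from Lemma~\ref{lem:ordering} that $p_1 \leq t$ in any continuation, so I can decompose
\begin{equation*}
R^\acc(p) = \mu_\acc R_{\geq t}(p) + (1-\mu_\acc) R_{\geq p_1}(p)
\end{equation*}
into three pieces: on $[0,p_1]$ it equals $p$; on $[p_1,t]$ it equals $\mu_\acc p + (1-\mu_\acc)\,p(1-F(p))/(1-F(p_1))$; and on $[t,1]$ it equals $[\mu_\acc/(1-F(t)) + (1-\mu_\acc)/(1-F(p_1))]\,p(1-F(p))$.

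First I would observe that under Assumption~\ref{ass:regular}, $R^\acc$ is globally concave. It is linear on $[0,p_1]$, strictly concave on $[p_1,t]$ (its slope is $\mu_\acc + (1-\mu_\acc) R'(p)/(1-F(p_1))$, and $R'$ is strictly decreasing), and strictly concave on $[t,1]$ (by Assumption~\ref{ass:regular} together with the rewriting in \eqref{eq:rarewrite}-style from Appendix~\ref{app:a2det}). A short computation at the kinks shows that at $p_1$ the left slope $1$ strictly dominates the right slope, and at $t$ the left slope exceeds the right slope by $\mu\,t f(t)/[\mu(1-F(t))+(1-\mu)(1-F(p_1))]\geq 0$. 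Hence $R^\acc$ is continuous and globally concave with a unique maximizer, which by Lemma~\ref{lem:p2adet} is $p_2^\acc$.

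Next I would characterize sophisticated-focusing, i.e., $p_2^\acc\geq t$, via the left derivative of $R^\acc$ at $t$. Since $R^\acc$ is concave, its maximum lies in $[t,1]$ if and only if its left derivative at $t$ is nonnegative. Plugging in the formulas for $\mu_\acc$ and $1-\mu_\acc$ and simplifying, that left derivative equals
\begin{equation*}
R^{\acc\,\prime}_-(t) \;=\; \frac{\mu(1-F(t)) + (1-\mu)R'(t)}{\mu(1-F(t)) + (1-\mu)(1-F(p_1))}.
\end{equation*}
The denominator is strictly positive, so $R^{\acc\,\prime}_-(t)\geq 0$ iff $(1-\mu)R'(t)+\mu(1-F(t))\geq 0$, giving the desired equivalence. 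Finally, in the sophisticated-focused case, Lemma~\ref{lem:p2adet} yields both $p_2^\acc = \max(p^*,t)$ and determinism (hence uniqueness); this covers the second sentence of the lemma.

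I expect the main obstacle to be verifying global concavity cleanly across the two kinks at $p_1$ and $t$ and confirming that the sign of the piecewise derivative collapses exactly to the stated expression; once the algebra at $t$ is worked out, the rest is straightforward from strict concavity of $R$. A small side check is the boundary case $p_1=0$, where the first piece degenerates but all formulas remain valid since $R_{\geq 0}(p) = R(p)$.
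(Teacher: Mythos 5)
Your proposal is correct and follows essentially the same route as the paper's proof in Appendix~\ref{app:ttop}: both characterize the sophisticated-focused case by the sign of the left derivative of $R^\acc$ at $t$, compute that derivative to be exactly $\bigl[(1-\mu)R'(t)+\mu(1-F(t))\bigr]/\bigl[(1-\mu)(1-F(p_1))+\mu(1-F(t))\bigr]$, and obtain $p_2^\acc=\max(p^*,t)$ from the rescaling of $R^\acc$ on $[t,1]$ as in Lemma~\ref{lem:p2adet}. Your explicit verification of the slope drops at the kinks $p_1$ and $t$ is a slightly more careful justification of the concavity that the paper's argument takes from Lemma~\ref{lem:p2adet}, but it is not a different method.
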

\begin{proof}
From the proof of Lemma~\ref{lem:p2adet}, we know that $R^\acc$ is strictly concave above $p_1$, and linear below.
Hence, $R^\acc$ has a single global maximum.
We will show that $(1-\mu) R'(t)+(1-F(t))\mu\geq 0$ if and only if the left derivative of $R^\acc$ at $t$ is positive, or equivalently the global maximum lies to the right of $t$.
For $p\geq t$,  $R^\acc(p)=(\tfrac{\mu_\acc}{1-F(t)}+\tfrac{1-\mu_\acc}{1-F(p_1)})p(1-F(p))$, and hence $R^\acc(p)$ is uniquely maximized at whichever is larger of $p^*$ or $t$.

To compute the left derivative of $R^\acc$ at t, note that $R^\acc(p)$ is
$$
(1-\mu_\acc)R_{\geq p_1}(p)+ \mu_\acc R_{\geq t}(p)=
(1-\mu_\acc)R_{\geq p_1}(p)+ \mu_\acc p.
$$
The left derivative of $R^\acc$ with respect to $p$ at $t$ is therefore given by
\begin{equation}
    \frac{d^- R^\acc(p)}{d p} \Big |_t=
(1-\mu_\acc)\frac{R'(t)}{1-F(p_1)}+ \mu_\acc.
\label{eq:leftderiv}
\end{equation}
Recall the formulas 
\begin{align*}
\mu_\acc = \frac{(1-F(t))\mu}{(1-F(p_1))(1-\mu)+(1-F(t))\mu}&&(1-\mu_\acc)=\frac{(1-F(p_1))(1-\mu)}{(1-F(p_1))(1-\mu)+(1-F(t))\mu}.
\end{align*}
Substituting into (\ref{eq:leftderiv}), we obtain:
\begin{align*}
\frac{d^-R^\acc(t)}{dp} &=
\frac{(1-F(p_1))(1-\mu)}{(1-F(p_1))(1-\mu)+(1-F(t))\mu}\cdot\frac{R'(t)}{(1-F(p_1))}+\frac{(1-F(t))\mu}{(1-F(p_1))(1-\mu)+(1-F(t))\mu} \\
 &=
\frac{(1-\mu)R'(t)}{(1-F(p_1))(1-\mu)+(1-F(t))\mu}+\frac{(1-F(t))\mu}{(1-F(p_1))(1-\mu)+(1-F(t))\mu} \\
 &=
\frac{(1-\mu)R'(t)+(1-F(t))\mu}{(1-F(p_1))(1-\mu)+(1-F(t))\mu}.
\end{align*}
The denominator is always positive, so the left derivative of $R^\acc $ is non-negative if and only if $(1-\mu)R'(t)+(1-F(t))\mu\geq 0$. 
\end{proof}

We now show how to find a a sophisticated-focused implementation for any $t$ satisfying $(1-\mu) R'(t)+(1-F(t))\mu\geq 0$, and prove its uniqueness.

\begin{lemma}\label{lem:raderiv}
    Given $t$ and $\mu$ such that $(1-\mu) R'(t)+(1-F(t))\mu\geq 0$, there exists a unique $p_1$, $p_2^\acc$, and $p_2^\rej$ that implement $t$ in a sophisticated-focused continuation.
\end{lemma}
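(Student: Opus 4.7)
The plan is to exploit the structural results already in hand to reduce the problem to a one-dimensional fixed-point equation in $p_1$. By Lemma~\ref{lem:sophp2a}, any sophisticated-focused continuation must have $p_2^\acc = \max(p^*, t)$, which is entirely determined by $t$, so only $p_1$ and $p_2^\rej$ remain to be pinned down. Because $p_2^\acc \geq t$, the buyer's threshold equation collapses to $p_1 = \mathbb{E}[p_2^\rej]$. Lemma~\ref{lem:plph} then dictates the form of $p_2^\rej$: if $\mu = 1$, it is deterministic at $p^*_{\leq t}$, so the threshold equation directly yields $p_1 = p^*_{\leq t}$ and the claim is immediate. Otherwise ($\mu < 1$), $p_2^\rej$ mixes between $p_H = p^*_{\leq t}$ and some $p_L(p_1) \in [0, p_1]$ that maximizes $R^\rej(\cdot\,;p_1,t)$ on that subinterval.

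For $\mu < 1$, both prices in the support of $p_2^\rej$ must be optimal, so the two local peaks of $R^\rej$ must be tied:
\[
g(p_1) \;:=\; R^\rej(p_L(p_1); p_1) \,-\, \mu_\rej(p_1)\, R^*_{\leq t} \;=\; 0.
\]
Once such a $p_1$ is found, the mixing weight is uniquely determined by $\alpha p_L + (1-\alpha) p_H = p_1$, and $p_L \leq p_1 \leq p_H$ guarantees $\alpha \in [0, 1]$. Existence of a zero of $g$ in $(0, p^*_{\leq t}]$ will follow from the intermediate value theorem: as $p_1 \to 0^+$, $\mu_\rej \to 1$ while $R^\rej(p_L; p_1) \leq p_1 \to 0$, so $g < 0$; at $p_1 = p^*_{\leq t}$, we have $R'_{\leq t}(p_1) = 0$ but $R'_{\leq p_1}(p_1) < 0$, so the left derivative of $R^\rej$ at $p_1$ is strictly negative, placing $p_L(p_1)$ strictly below $p_1$ and giving $R^\rej(p_L; p_1) > R^\rej(p_1; p_1) = \mu_\rej R^*_{\leq t}$, so $g > 0$.

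Uniqueness will follow from showing $g'(p_1) > 0$ throughout. By the envelope theorem, $\tfrac{d}{dp_1} R^\rej(p_L(p_1); p_1) = \partial_{p_1} R^\rej(p; p_1)|_{p = p_L}$, and a direct computation --- differentiating both the weight $\mu_\rej(p_1)$ and the truncation $R_{\leq p_1}(p)$, then collecting over the common denominator $D = \mu F(t) + (1-\mu) F(p_1)$ --- yields
\[
\partial_{p_1} R^\rej(p; p_1) \;=\; \tfrac{(1-\mu)\, p\, F(p)\, f(p_1)}{D^2}\;>\;0.
\]
Combined with $-\mu_\rej'(p_1) R^*_{\leq t} > 0$, this gives $g'(p_1) > 0$, so the zero is unique. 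The main obstacle is this derivative computation: the two contributions to $\partial_{p_1} R^\rej$ --- one from $\mu_\rej'(p_1)$ times the gap $R_{\leq t}(p) - R_{\leq p_1}(p)$, and one from $\partial_{p_1} R_{\leq p_1}(p)$ --- have opposite signs, and it is not obvious a priori that the positive one dominates. The crucial simplification is that after putting everything over $D^2$, the $\mu F(t)$ contributions cancel exactly, leaving the strictly positive expression above; once this cancellation is identified, the rest of the uniqueness argument is immediate.
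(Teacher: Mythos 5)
Your proposal is correct and follows essentially the same route as the paper: fix $p_2^\acc=\max(p^*,t)$, reduce to a one-dimensional equation in $p_1$ that ties the two local peaks of $R^\rej$, and get existence from the intermediate value theorem and uniqueness from strict monotonicity; your key cancellation $\partial_{p_1}R^\rej(p;p_1)=(1-\mu)pF(p)f(p_1)/D^2$ is exactly the paper's algebraic simplification $R^\rej(p;p_1)=p\bigl(1-F(p)/D\bigr)$ being pointwise increasing in $p_1$. The only cosmetic caveat is that your envelope-theorem step should note the feasible set $[0,p_1]$ also grows with $p_1$ (so the argument goes through even if $p_L$ sits at the boundary), which is how the paper phrases the monotonicity of $R^\rej(p_L)$.
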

\begin{proof}
Given $t$ and $\mu$, we will find $p_1$ and a randomized $p_2^\rej$ such that $p_1=\mathbb E[p_2^\rej]$, and such that all prices in the support of $p_2^\rej$ maximize $R^\rej$.
We show how to do so below, and in the process, it will be clear that these choices are unique.
Given $p_1$, note that $p_2^\acc=\max(p^*,t)$ is the only choice for accept price.

We now show how to find $p_1$ and the distribution for $p_2^\rej$.
By Lemma~\ref{lem:pwconcave},  we know $R^\rej$ is strictly concave on both $[0,p_1]$ and $[p_1,t]$.
Hence the only candidate global optima for $R^\rej$ are the local optima for its two pieces, given by:
\begin{itemize}
		\item $p_H=p^*_{\leq t}$: the monopoly price for the truncated distribution $F_{\leq t}$.
		\item $p_L$: the maximum value of $\mu_\rej R_{\leq t}(p)+(1-\mu_\rej)R_{\leq p_1}(p)$ on $[0,p_1]$.
\end{itemize}
Hence if we find $p_1$ such that $R^\rej(p_L)=R^\rej(p_H)$, these points are guaranteed to be globally optimal, and we may choose probabilities of $p_L$ and $p_H$ such that $p_1=\mathbb E[p_2^\rej]$.
We show how to find such a $p_1$ below.
Note that given a $p_1$ satisfying $R^\rej(p_L)=R^\rej(p_H)$, the uniqueness of the distribution for $p_2^\rej$ follows immediately.

To identify a $p_1$ such that $R^\rej(p_L)=R^\rej(p_H)$.
First, that $R^\rej(p_H)=\mu_\rej R_{\leq t}(p^*_{\leq t})$ is continuous and strictly decreasing in $p_1$: the only factor that depends on $p_1$ is $\mu_\rej$, which is continuous and strictly decreasing as a function of $p_1$ in the interval $[0,p^*_{\leq t})$.
Meanwhile, $R^\rej(p_L)$ is continuous and strictly increasing in $p_1$.
This requires more justification, as $p_L$ is defined as the unique maximum of $\mu_\rej R_{\leq t}(p)+(1-\mu_\rej)R_{\leq p_1}(p)$ on $[0,p_1]$.
To prove the claim, it suffices to show that for any $p\in [0,p_1]$, the revenue $\mu_\rej R_{\leq t}(p)+(1-\mu_\rej)R_{\leq p_1}(p)$ is strictly increasing and continuous in $p_1$.
We can write this quantity as:
\begin{align*}
    &\frac{\mu F(t)}{\mu F(t)+(1-\mu)F(p_1)}p\left(1-\frac{F(p)}{F(t)}\right)+\frac{(1-\mu) F(p_1)}{\mu F(t)+(1-\mu)F(p_1)}p\left(1-\frac{F(p)}{F(p_1)}\right)\\
    &\quad\quad\quad\quad=p\left(1-\frac{F(p)}{\mu F(t)+(1-\mu)F(p_1)}\right).
\end{align*}
This quantity is strictly increasing in $p_1$.

To prove the existence and uniqueness of $p_1$, then, notice that when $p_1=0$, $R^\rej(p_L)=0$ while $R^\rej(p_H)=\mu_\rej R_{\leq t}(p^*_{\leq t})$ is positive.
Meanwhile at $p_1=p_H$, we have $R^\rej(p_L)=\max_{p\leq p_1}R^\rej(p)=max_{p\leq p_H}R^\rej(p)\geq R^\rej(p_H)$.
By continuity, there must therefore exist some $p_1$ inducing equality, and by the strict monotonicity of both functions this point must be unique.
\end{proof}


\section{Proof of Lemma~\ref{lem:p1monotone}}
\label{app:p1monotone}

Let $t$ and $\mu$ admit a sophisticated-focused continuation with first round price $p_1$.
We can write the difference of $p_L$ and $p_H$ as:
\begin{align*}
R^\rej(p_L)- R^\rej(p_H) 
 &= 
 \frac{\mu p_L(F(t)-F(p_L))+(1-\mu)p_L(F(p_1)-F(p_L))-\mu p^*_{\leq t}(F(t)-F(p^*_{\leq t}))}{\mu F(t)+(1-\mu)F(p_1)}.
\end{align*}

Since $p_1$ and $\mu$ implement a sophisticated-focused continuation at $t$ the difference is $R^\rej(p_L)- R^\rej(p_H)=0$. Thus, it must be that the numerator 
$$
\mu p_L(F(t)-F(p_L))+(1-\mu)p_L(F(p_1)-F(p_L))-\mu p^*_{\leq t}(F(t)-F(p^*_{\leq t})=0.$$
Solving for $p_1$ as function of $\mu$:
\begin{align*}
p_1&=F^{-1} \left( -\frac{\mu p_L (F(t)-F(p_L))-(1-\mu)p_LF(p_L)- \mu p^*_{\leq t}(F(t)-F(p^*_{\leq t})}{(1-\mu)p_L}               \right)\\
&=F^{-1} \left( -\frac{\mu p_L (F(t)-F(p_L))- \mu p^*_{\leq t}(F(t)-F(p^*_{\leq t})}{(1-\mu)p_L}
+F(p_L)
\right)\\
&=F^{-1} \left( -\frac{\mu R_{\leq t}(p_L)F(t)- \mu R_{\leq t}(p^*_{\leq t})}{(1-\mu)p_L}
+F(p_L)\right)\\
&=F^{-1} \left(\frac{\mu}{1-\mu} \frac{F(t) (R_{\leq t}(p^*_{\leq t})-R_{\leq t}(p_L))}{p_L}
+F(p_L)
\right).
\end{align*}
Thus, we can verify that $p_1$ is continuous and differentiable with respect to $\mu$. Moreover, $F^{-1}(x)$ is strictly increasing in $x$ by assumption, $\frac{\mu}{1-\mu}$ is strictly increasing in $\mu$ and $(R_{\leq t}(p^*_{\leq t})-R_{\leq t}(p_L))>0$ since $p^*_{\leq t}$ is the unique optimal of $R_{\leq t}$. Thus, $p_1(t,\mu)$ is strictly increasing in $\mu$. 

\section{Proof of Lemma~\ref{lem:p1nonaive}}
\label{app:p1nonaive}

For the sake of contradiction, assume there are two continuations for $p_1$: 
 \begin{itemize}
        \item Sophisticated-focused: A randomized $p_2^{\rej,s}$ and $t^s\geq p_2^\acc$.
        \item Naive-focused continuation; A possibly randomized $p_2^{\rej,n}$ and $p_2^{\acc,n}<t^n$
    \end{itemize}
First, we argue that if both continuations are possible, it must be that $t^n>t^s$:
Lemma~\ref{lem:ttop} implies that if $t^n\leq t^s$, then it can only be implemented in a sophisticated-focused continuation.

Next, we show that $p^{\rej,s}_2\leq p^{\rej,n}_2$ in all realizations.
Lemma~\ref{lem:plph} implies that $p^{\rej,s}_2$ is a random price supported on the two distinct global maxima, and that the higher of the two is  $p^*_{\leq t^s}$, the monopoly reserve of the truncated distribution $F_{\leq t^s}$. This implies that for every $p\leq p^*_{\leq t^s}$:
\begin{align}
\label{eq:ptsopt}
R^{\rej,s}(p)=(1-\mu_\rej(t^s))R_{\leq p_1}(p)+\mu_\rej(t^s)R_{\leq t^s}(p)
&\leq
\mu_\rej(t^s)R_{\leq t^s}(p^*_{\leq t^s}) =
 R^{\rej,s}(p^*_{\leq t^s})
\end{align}
We may therefore obtain the following sequence of relations, explained after their statement:
 \begin{align}
 R^{\rej,n}(p)&=(1-\mu_\rej(t^n))R_{\leq p_1}(p)+\mu_\rej(t^n)R_{\leq t^n}(p)\notag\\
 &\leq  (1-\mu_\rej(t^s))R_{\leq p_1}(p)+\mu_\rej(t^n)R_{\leq t^n}(p)\label{eq:1}\\
&\leq  (1-\mu_\rej(t^s))R_{\leq p_1}(p)+\mu_\rej(t^n)p\left(1-\frac{F(p)}{F(t^n)}\right)\label{eq:2}\\
&= (1-\mu_\rej(t^s))R_{\leq p_1}(p)+\mu_\rej(t^n)p\left(1-\frac{F(p)}{F(t^s)} + \frac{F(p)}{F(t^s)}-\frac{F(p)}{F(t^n)}\right) \notag\\
&= (1-\mu_\rej(t^s))R_{\leq p_1}(p)+\mu_\rej(t^n)R_{\leq t^s}(p) +\mu_\rej(t^n)p\left(\frac{F(p)}{F(t^s)}-\frac{F(p)}{F(t^n)}\right)\label{eq:3}\\
&=
R^{\rej,s}(p)+(\mu_\rej(t^n)-\mu_\rej(t^s))R_{\leq t^s}(p)+\mu_\rej(t^n)p\left(\frac{F(p)}{F(t^s)}-\frac{F(p)}{F(t^n)}\right)\label{eq:4}\\
&=
R^{\rej,s}(p)+(\mu_\rej(t^n)-\mu_\rej(t^s))R_{\leq t^s}(p)+\mu_\rej(t^n)pF(p)\left(\frac{1}{F(t^s)}-\frac{1}{F(t^n)}\right)\notag\\
&\leq
R^{\rej,s}(p^*_{\leq t^s})+(\mu_\rej(t^n)-\mu_\rej(t^s))R_{\leq t^s}(p^*_{\leq t^s})\notag\\
&\quad\quad\quad\quad+\mu_\rej(t^n)p^*_{\leq t^s}F(p^*_{\leq t^s})\left(\frac{1}{F(t^s)}-\frac{1}{F(t^n)}\right)\label{eq:5}\\
&= R^{\rej,s}(p^*_{\leq t^s})+(\mu_\rej(t^n)-\mu_\rej(t^s)p^*_{\leq t}
\left(1-\frac{F(p^*_{\leq t})}{F(t^s)}\right)\notag\\
&\quad\quad\quad\quad+\mu_\rej(t^n)p^*_{\leq t^s}F(p^*_{\leq t^s})\left(\frac{1}{F(t^s)}-\frac{1}{F(t^n)}\right)\label{eq:6}\\
&= R^{\rej,s}(p^*_{\leq t^s})-\mu_\rej(t^s)p^*_{\leq t}
\left(1-\frac{F(p^*_{\leq t})}{F(t^s)}\right)
+\mu_\rej(t^n)p^*_{\leq t^s}\left(1-\frac{F(p^*_{\leq t})}{F(t^n)}\right)\notag\\
&=
\mu_\rej(t^n)p^*_{\leq t^s}(1-\frac{F(p^*_{\leq t})}{F(t^n)})\label{eq:6.5}
\\
&=R^{\rej,n}(p^*_{\leq t^s})\label{eq:7}
\end{align}

Line (\ref{eq:1}) follows from the fact that $1-\mu_\rej(t^s)\geq 1-\mu_\rej(t^s)$ since $\mu_\rej()$ is strictly decreasing in the threshold. 
Line (\ref{eq:2}) comes from the definition of $R_{\leq t^s}(p)=p(1-\frac{F(p)}{F(t^s)})$.
Line (\ref{eq:3}) comes from the definition of $R^{\rej,s}(p)$. 
Line (\ref{eq:4}) follows from the definition of $R^{\rej,s}(p)$.
Line (\ref{eq:5}) comes from the following facts:
\begin{itemize}
    \item $R^{\rej,s}(p)\leq R^{\rej,s}(p^*_{\leq t^s})$ (Equation (\ref{eq:ptsopt})).
    \item  $R_{\leq t^s}(p)\leq R_{\leq t^s}(p^*_{\leq t^s})$, by the definition of $p^*_{\leq t^s}$ it is the monopoly reserve of $F_{\leq t^s}$.
    \item $pF(p)$ is increasing in $p$ and $p\leq p^*_{\leq t^s}$.
\end{itemize}
Line (\ref{eq:6}) uses the definition of $R_{\leq t^s}$.
Line (\ref{eq:6.5}) follows from the definition of $R^{\rej,s}(p^*_{\leq t^s})$.
Finally, line (\ref{eq:7}) uses the definition of $R^{\rej,s}(p^*_{\leq t^s})=\mu_\rej(t^s)R_{\leq t^s}(p^*_{\leq t^s})=\mu_\rej(t^s)p^*_{\leq t^s}(1-\frac{F(p^*_{\leq t^s})}{F(t^s)})$. As a result, $p_2^{\rej,n}\geq p^*_{\leq t^s}$ for every realization of $p_2^{\rej,n}$.

Note that Lemma~\ref{lem:plph} characterizes the sophisticated-focused continuation such that $p_1=E[p^{\rej,s}_2]$ and $p^{\rej,s}_2$ is not deterministic. Thus it must be that
$p_1=E[p^{\rej,s}_2]<p^*_{\leq t^s}$, which implies that
\begin{equation}
\label{eq:p2rinc}
    p_1<p^{\rej,s}_2\leq p^{\rej,n}_2
\end{equation}
We conclude our proof using the threshold equation for the naive-focused continuation: $t^s-p_1+t^s-p^{\acc,s}_2 = t^s-p^{\rej,s}_2$ and hence $
  t^s+p^{\rej,s}_2 = p_1+p^{\acc,s}_2$.
Since $p^{\acc,s}<t$, it must be that $p_1>p^{\rej,s}_2$, which contradicts Equation (\ref{eq:p2rinc}).

\section{Proof of Lemma~\ref{lem:trev}}
\label{app:trev}

The seller's revenue for implementing $t$ can be decomposed as follows:
\begin{align*}
\Rev_1&=R(p_1)(1-\mu) + p_1(1-F(t))\mu\\
\Rev_2^\acc&=\big(\mu(1-F(t))+(1-\mu)(1-F(p_1)\big)R^\acc(p_2^\acc)\\
\Rev_2^\rej&=\big(\mu(F(t))+(1-\mu)(F(p_1)\big)\mathbb E[R^\rej(p_2^\rej)].
\end{align*}

From the characterization of Lemma 4.9, we know that $p^\rej_2$ is a randomized price supported on two prices $p_L<p_1$ and $p_1\leq p_H=p^*_{\leq t}$ such that $\mathbb E[R^\rej(p_2^\rej)]=R^\rej(p_L)=R^\rej(p^*_{\leq t})$. This implies that the revenue from offering the randomized price $p_2^\rej$ is the same as if we offered price $p^*_{\leq t}$. Hence we may rewrite $\Rev_2^\rej$ to obtain the following reamortization:
$$
\Rev_2^\rej=(\mu F(t)+(1-\mu )F(p_1)) R^\rej(p^*_{\leq t})
$$
Note that since $p^*_{\leq t}\geq p_1$, $p^*_{\leq t}$ lies in the support of $F_{\leq t}$ but not $F_{\leq p_1}$.
Thus, $$R^\rej(p^*_{\leq t})= \mu_\rej \frac{F(t)-F(p^*_{\leq t})}{F(t)}p^*_{\leq t} =
\frac{\mu (F(t)-F(p^*_{\leq t}))}{(\mu F(t)+(1-\mu)F(p_1))}p^*_{\leq t} .
$$
Recall $\mu_\rej$ is the probability that the agent is sophisticated given a reject.  Thus, we obtain:
$$
\Rev_2^\rej=(\mu F(t)+(1-\mu )F(p_1)) R^\rej(p^*_{\leq t}) = \mu (F(t)-F(p^*_{\leq t}))p^*_{\leq t}.
$$
Furthermore, using the definition of $R^\acc$, we can simplify $\Rev_2^\acc$ to obtain $\Rev_2^\acc=(1-F(p_2^\acc))p_2^\acc$.

Thus the total revenue is given by:
\begin{align*}
\Rev&=\Rev_1+\Rev_2^\rej+\Rev_2^\acc\\
&=  (1-\mu)R(p_1)+\mu \big(p_1(1-F(t))+p^*_{\leq t}(F(t)-F(p^*_{\leq t}))\big) +(1-F(p_2^\acc))p_2^\acc\\
&=  R(p_1)+\mu \big(p_1(1-F(t))+p^*_{\leq t}(F(t)-F(p^*_{\leq t}))-(1-F(p_1))p_1\big) +(1-F(p_2^\acc))p_2^\acc
\\
&=  R(p_1)+\mu \big(p^*_{\leq t}(F(t)-F(p^*_{\leq t}))-(F(t)-F(p_1))p_1\big) +(1-F(p_2^\acc))p_2^\acc.
\end{align*}
Note that $p_2^\acc=\max(p^*,t)$ does not depend on $\mu$.
Moreover, for a fixed $\mu$ and $t$ admitting sophisticated-focused implementation, the choice of $p_1$ is unique, and moreover from Lemma~\ref{lem:p1monotone}, $p_1$ is continuous in $\mu$, thus, $\Rev$ is also continuous in $\mu$.   
Moreover, $p_1$ is differentiable, and strictly increasing in $\mu$. Letting $p_1'>0$ denote the derivative of this unique $p_1$ with respect to $\mu$,  we may write the derivative of the revenue with respect to $\mu$ as:
\begin{align*}
    (1-F(p_1))p'_1-p_1f(p_1)p'_1 +(F(t)-F(p^*_{\leq t}))p^*_{\leq t}-(F(t)-F(p_1))p_1  \\-\mu((F(t)-F(p_1))p'_1-p_1f(p_1)p_1').
    \end{align*}

Since $\rev_{\leq t}$ is strictly concave, it must be that $p_{\leq t}^*$ is the unique maximizer of $R_{\leq t}$. Hence:
$$(F(t)-F(p^*_{\leq t}))p^*_{\leq t}=R_{\leq t}(p^*_{\leq t})F(t) > R_{\leq t}(p_1)F(t)= (F(t)-F(p))p_1. $$
Thus, we can lower bound the derivative of the revenue as:
\begin{align*}
    &(1-F(p_1))p'_1- p_1f(p_1)p'_1-\mu(F(t)-F(p_1))p'_1+\mu p_1f(p_1)p'_1\\
    &\geq (1-\mu) (1-F(p_1))p'_1- p_1f(p_1)p'_1+\mu p_1f(p_1)p'_1\\
    &= (1-\mu) (1-F(p_1))p'_1 -(1-\mu)p_1f(p)p'_1\\
    &= (1-\mu)( (1-F(p_1))p'_1- p_1f(p)p'_1)\\
    &= (1-\mu)R'(p_1).
    \end{align*}
where the inequality follows from the fact that 
$\mu (1-F(p_1))p'_1 \geq \mu (F(t)-F(p_1))p'_1 $. Finally, we argue that $R'(p_1)>0$ or equivalently $R(p_1)$ increases with $\mu$: from Lemma~\ref{lem:p1monotone} we know that $p_1$ increases with $\mu$. 
Lemma~\ref{lem:ordering} implies that $p_1=\mathbb E[p_2^\rej]<p^*$, and hence $R'(p_1)>0$.

\section{Proof of Theorem~\ref{thm:robust}}
\label{app:robust}

Theorem~\ref{thm:robust} will follow from three observations. 
First, as $\mu$ increases, thresholds with sophisticated-focused implementations stay sophisticated-focused, and their revenue increases continuously.
Second, the full-sophisticated equilibrium is sophisticated-focused, and the revenue is high for the on-path threshold, even at lower levels of $\mu$.
Third, as $\mu$ increases, naive-focused continuations all eventually yield less revenue than the full-sophistication equilibrium.
We begin with the first observation.

\begin{lemma}\label{lem:bluebound}
Consider threshold $t<1$ and let $\Rev(t,\mu)$  be the revenue of the unique sophisticated-focused implementation of $t$ with sophistication level $\mu$, if one exists.
Then for any $x>0$, there exists a $\mu(x)$ such that for all $\mu>\mu(x)$, threshold $t$ has a sophisticated-focused implementation and 
$$
\Rev(t,1)-\Rev(t,\mu)<x.
$$
\end{lemma}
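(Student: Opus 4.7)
\textbf{Proof plan for Lemma~\ref{lem:bluebound}.} The plan is to deduce the statement almost immediately from the continuity and monotonicity in $\mu$ already established for fixed $t$, after first verifying that the interval of $\mu$ admitting a sophisticated-focused implementation of $t$ extends all the way up to $\mu=1$.

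First, I would verify that $t$ has a sophisticated-focused implementation at $\mu=1$. By Lemma~\ref{lem:ttop}, this amounts to checking $(1-\mu)R'(t)+(1-F(t))\mu \geq 0$ at $\mu=1$, i.e., $1-F(t)\geq 0$. Since $t<1$ and $F$ is fully supported on $[0,1]$, we have $1-F(t)>0$, so a sophisticated-focused implementation exists at $\mu=1$ and $\Rev(t,1)$ is well defined. Next, I would invoke the monotonicity observation recorded after Lemma~\ref{lem:ttop}: because $\tfrac{d}{d\mu}\bigl[(1-\mu)R'(t)+(1-F(t))\mu\bigr] = -R'(t)+(1-F(t))\geq 0$, once $t$ is sophisticated-focused at some $\mu$, it remains so for all larger $\mu$. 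Letting $\mu_0(t)$ be the infimum of the $\mu$ at which $t$ admits a sophisticated-focused implementation, the set of valid $\mu$ is then the interval $[\mu_0(t),1]$, on which the implementation is unique by Lemma~\ref{lem:ttop} and $\Rev(t,\mu)$ is unambiguously defined.

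Second, I would appeal directly to Lemma~\ref{lem:trev}, which states that along $[\mu_0(t),1]$ the function $\mu \mapsto \Rev(t,\mu)$ is continuous and strictly increasing. In particular, it is continuous at $\mu=1$ from the left, so for any $x>0$ there exists $\mu(x)\in[\mu_0(t),1)$ such that
\begin{equation*}
    0 \;\leq\; \Rev(t,1)-\Rev(t,\mu) \;<\; x \qquad \text{for all } \mu\in(\mu(x),1].
\end{equation*}
For every such $\mu$, the threshold $t$ also has a sophisticated-focused implementation, because $\mu \geq \mu(x) \geq \mu_0(t)$. This delivers exactly the claim of the lemma.

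I do not anticipate a genuine obstacle: the only subtle point is ensuring that $t$ actually has a sophisticated-focused implementation at $\mu=1$ (so that $\Rev(t,1)$ is not vacuous), which is immediate from $t<1$ together with the full support of $F$. All the real work has already been done in Lemma~\ref{lem:ttop} (interval structure of valid $\mu$ and uniqueness of the implementation) and Lemma~\ref{lem:trev} (continuity of the revenue in $\mu$ along that interval).
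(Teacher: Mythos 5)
Your proposal is correct and follows essentially the same route as the paper's proof: both establish the interval of valid $\mu$ via Lemma~\ref{lem:ttop} and its monotonicity consequence, then invoke the continuity and strict monotonicity of $\Rev(t,\mu)$ from Lemma~\ref{lem:trev} to conclude. Your explicit check that the sophisticated-focused condition holds at $\mu=1$ (since $t<1$ gives $1-F(t)>0$) is a small point of added care that the paper leaves implicit, but it does not change the argument.
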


\begin{proof}
By Lemma~\ref{lem:ttop}, there exists a $\mu(t)$ such that $t$ admits a sophisticated-focused continuation.
Furthermore, any $\mu>\mu(t)$ implements a sophisticated-focused continuation. 
From Lemma~\ref{lem:trev} the revenue is continuous and strictly increasing in $[\mu(t),1]$. Thus, there must be some $\mu(x)\in [\mu(t),1)$ such that $Rev(t,\mu(x))+x\geq Rev(t,1)$. Since $\Rev$ is strictly increasing $\mu$, the result follows.
\end{proof}

Next, we show that the revenue at $\mu=1$ is bounded away from the single round Myerson revenue. Together with the previous lemma, this will lower bound the revenue of the best sophisticated-focused continuation at high enough $\mu$.

\begin{lemma}\label{lem:devrev}
	There exists a distribution-specific constant $\epsilon_F>0$ such that $\Rev(1)=R^*+\epsilon_F$.
\end{lemma}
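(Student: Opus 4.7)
The plan is to exhibit a specific sophisticated-focused continuation at $\mu=1$ whose revenue strictly exceeds $R^*$. Since the equilibrium revenue $\Rev(1)$ is the maximum over implementable continuations, this suffices: we can then set $\epsilon_F = \Rev(1) - R^*$.

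First I would observe that at $\mu=1$, the criterion of Lemma~\ref{lem:ttop} collapses to $1-F(t)\geq 0$, so \emph{every} threshold admits a (unique) sophisticated-focused implementation. I would then pick the specific witness $t = p^*$. Applying the earlier structural lemmas to this choice at $\mu=1$ gives:
\begin{itemize}
    \item By Lemma~\ref{lem:p2adet}, $p_2^\acc = \max(t,p^*) = p^*$.
    \item By Lemma~\ref{lem:plph} with $\mu=1$, $p_2^\rej$ is deterministic and equal to $p^*_{\leq t} = p^*_{\leq p^*}$.
    \item Since $p_2^\acc = t$, the threshold equation in Lemma~\ref{lem:ordering} reduces to $p_1 = p_2^\rej = p^*_{\leq p^*}$.
\end{itemize}

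Next, I would compute the total revenue of this continuation directly. All buyers are sophisticated, so decomposing by round:
\begin{align*}
\Rev &= p_1(1-F(t)) + p_2^\acc(1-F(p_2^\acc)) + p_2^\rej(F(t) - F(p_2^\rej)) \\
 &= p^*_{\leq p^*}(1-F(p^*)) + R^* + p^*_{\leq p^*}(F(p^*) - F(p^*_{\leq p^*})) \\
 &= R^* + p^*_{\leq p^*}(1 - F(p^*_{\leq p^*})) \\
 &= R^* + R(p^*_{\leq p^*}).
\end{align*}

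Finally, I would argue that $R(p^*_{\leq p^*}) > 0$. By Lemma~\ref{lem:trunccon}, $R_{\leq p^*}$ is strictly concave on $[0,p^*]$, with $R_{\leq p^*}(0)=R_{\leq p^*}(p^*)=0$. Hence $p^*_{\leq p^*}$ lies in the open interval $(0,p^*)$, giving $F(p^*_{\leq p^*}) < F(p^*) < 1$ and therefore $R(p^*_{\leq p^*}) = p^*_{\leq p^*}(1-F(p^*_{\leq p^*})) > 0$. Setting $\epsilon_F = \Rev(1) - R^*$ yields $\epsilon_F \geq R(p^*_{\leq p^*}) > 0$.

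There isn't really a hard step here—the main content is choosing the right witness. The natural-looking choice $t = p^*$ works because it makes $p_2^\acc$ coincide with the monopoly price (so the first-round accept plus period-two accept sale together recover exactly $R^*$), while the reject branch extracts strictly positive additional revenue $R(p^*_{\leq p^*})$ from low-value buyers that a static monopolist would have left behind. The only place the assumption of strict concavity of $R$ is invoked is in ensuring $p^*_{\leq p^*}$ is interior; the argument would fail for distributions where $R_{\leq p^*}$ is only weakly concave with maximum at an endpoint.
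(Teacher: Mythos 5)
Your proof is correct and follows essentially the same route as the paper: both use the witness $t=p^*$ and arrive at the identical bound $\Rev(1)\geq R(p^*)+R(p^*_{\leq p^*})=R^*+R(p^*_{\leq p^*})>0$. The only difference is that the paper simply cites the characterization $\Rev(1)=\max_t R(t)+R(p^*_{\leq t})$ from \citet{dps19}, whereas you re-derive the revenue of that continuation from the paper's own structural lemmas, which is a perfectly valid (and more self-contained) way to get the same lower bound.
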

\begin{proof}
	From the analysis of \citet{dps19} the perfect Bayesian equilibrium revenue with $\mu=1$ maximizes $R(t)+R(p^*_{\leq t})$ over all $t$. Since $F$ is fully supported, the conditional distribution $F_{\leq t}$ is fully supported on $[0,t]$, and hence $R(p^*_{\leq t})>0$ for all $t$. The result then follows from choosing $t$ to be the monopoly reserve for $F$, i.e.\ $t=p^*$.
\end{proof}

 The next step is to upper bound the revenue of naive-focused continuations when $t$ is sufficiently high. We will see that no other $t$ will support naive-focused continuations.

\begin{lemma}\label{lem:naiverev}
	Let $(p_1,p_2^\rej, p_2^\acc)$ be a naive-focused implementation of a threshold $t$ such that $F(t)\in (1-\epsilon_F/6,1]$. Then
		$$\Rev(p_1,p_2^\rej, p_2^\acc,t)\leq R^*+\epsilon_F/3+2(1-\mu).$$
\end{lemma}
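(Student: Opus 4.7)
The plan is to decompose the total revenue as $\Rev = \Rev_1 + \Rev_2^\acc + \Rev_2^\rej$, where these denote the round-$1$ revenue and the expected round-$2$ revenues along the accept and reject branches, then bound each piece by separately accounting for sophisticated and naive contributions. The main observation driving all three bounds is that $1-F(t)\leq \epsilon_F/6$ forces the probability that a sophisticated buyer accepts in round $1$ to be at most $\mu\epsilon_F/6$, so every branch except the reject branch picks up only a small sophisticated contribution; the remaining pieces reduce to quantities of the form $p(1-F(p))\leq R^*$.

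I would first handle $\Rev_1$ and $\Rev_2^\acc$ together. Splitting the round-$1$ acceptance probability by type and using $p_1\leq 1$ gives $\Rev_1 = \mu p_1(1-F(t)) + (1-\mu)R(p_1) \leq \mu\epsilon_F/6 + (1-\mu)R^*$. For $\Rev_2^\acc$: since the continuation is naive-focused, $p_2^\acc < t$, so every sophisticated buyer who accepted in round $1$ also accepts in round $2$; by Lemma~\ref{lem:ordering} we also have $p_2^\acc\geq p_1$, so every naive buyer who accepts in both rounds has $v\geq p_2^\acc$. Hence
\begin{equation*}
\Rev_2^\acc = p_2^\acc\bigl[\mu(1-F(t)) + (1-\mu)(1-F(p_2^\acc))\bigr] \leq \mu\epsilon_F/6 + (1-\mu)R^*.
\end{equation*}

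For $\Rev_2^\rej$, Lemma~\ref{lem:ordering} gives $p_2^\rej\leq t$ and $p_1\leq t$, so for any realization $p$ of $p_2^\rej$ the sophisticated contribution is $\mu p(F(t)-F(p)) \leq \mu R(p) \leq \mu R^*$, and the naive contribution is $(1-\mu)p\max(F(p_1)-F(p),0)\leq (1-\mu)R^*$; summing and taking expectations over the (possibly mixed) $p_2^\rej$ yields $\Rev_2^\rej\leq R^*$. Adding the three bounds and using $\mu\leq 1$ together with $R^*\leq 1$ (since $F$ is supported on $[0,1]$) gives
\begin{equation*}
\Rev \leq R^* + \mu\epsilon_F/3 + 2(1-\mu)R^* \leq R^* + \epsilon_F/3 + 2(1-\mu),
\end{equation*}
as desired. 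The only subtlety is handling the possible randomization of $p_2^\rej$ uniformly through the trivial bound $R(p_2^\rej)\leq R^*$, which sidesteps a case split between realizations above and below $p_1$; the rest is bookkeeping supported by Lemma~\ref{lem:ordering}.
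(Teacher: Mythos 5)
Your proof is correct and follows essentially the same approach as the paper's: both arguments exploit $1-F(t)<\epsilon_F/6$ to make the sophisticated contributions from round $1$ and the accept branch negligible ($\epsilon_F/6$ each), bound the reject-branch revenue by $R^*$, and absorb the naive buyers' revenue into the crude $2(1-\mu)$ slack. The only difference is bookkeeping --- you decompose by round and branch where the paper decomposes by buyer type ($\mu\Rev_S+(1-\mu)\Rev_N$ with $\Rev_N\leq 2$), and your direct bound $p(F(t)-F(p))\leq R(p)\leq R^*$ on the reject term is a slightly cleaner route than the paper's detour through $R^*_{\leq t}$.
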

\begin{proof}
	For any fixed $\mu$, we have 
	\begin{equation}\label{eq:greenrev}
		\Rev(p_1,p_2^\rej, p_2^\acc,t)=\mu\Rev_S(p_1,p_2^\rej, p_2^\acc,t)+(1-\mu)\Rev_N(p_1,p_2^\rej, p_2^\acc,t),
	\end{equation}
	where $\Rev_S$ and $\Rev_N$ are the expected revenues from the sophisticated and naive buyers, respectively. Note that both revenues are at most $2$, as no sequentially rational price will ever exceed $1$. In particular, the second term in (\ref{eq:greenrev}) is at most $2(1-\mu)$.
	
	Examining the revenue from sophisticated buyers in more detail, we have:
	\begin{equation}\label{eq:sophgreenrev}
		\Rev_S(p_1,p_2^\rej, p_2^\acc,t)=p_1(1-F(t))+p_2^\rej(F(t)-F(p_2^\rej))+p_2^\acc(1-F(t)).
	\end{equation}
	The first term in (\ref{eq:sophgreenrev}) is the revenue from the first round, where sophisticated buyers above $t$ buy. The second term is the revenue from the reject case in round $2$. The final term comes from the accept case. Note that conditioned on an acceptance in round $1$, all sophisticated buyers accept in round $2$ as well, because we have assumed our implementation of $t$ is naive-focused, i.e.\ $p_2^\acc<t$.
 
 We can now bound each term in (\ref{eq:sophgreenrev}) individually. The first and third terms are at most $\epsilon_F/6$, since $p_1,p_2^\acc\leq 1$ and $F(t)> 1-\epsilon_F/6$. The second term is at most $R^*_{\leq t}$. To see this, note that $p_2^\rej$ maximizes the seller's revenue from a mixture of sophisticated and naive buyers. Hence, the seller will obtain less revenue from the sophisticated buyers than if they offered the optimal price for that population, which has distribution $F_{\leq t}$. Hence, $p_2^\rej(F(t)-F(p_2^\rej))/F(t)\leq R^*_{\leq t}$. Moreover, $R^*_{\leq t}\leq R^*$, and we have $F(t)\leq 1$, upperbounding the second term as $R^*$ overall. We can therefore bound the sophisticated revenue as $\Rev_S(p_1,p_2^\rej, p_2^\acc,t)\leq R^*+\epsilon_F/3$. Noting that $\mu\leq 1$, we can combine our naive and sophisticated revenue bounds to obtain the bound stated in the lemma.
\end{proof}

Finally, we are ready to proceed with the proof of Theorem~\ref{thm:robust}:

\begin{proof}[Proof of Theorem~\ref{thm:robust}]
Let $t^*$ be the equilibrium threshold for $\mu=1$. Lemma~\ref{lem:devrev} implies that
$$
Rev(t^*,1) \geq R^*+\epsilon_F.
$$

Let $t$ be such that $F(t)=1-\epsilon_F/6$. Lemma~\ref{lem:ttop} guarantees that there exists a sophistication level $\mu_1$ such that  for all $\mu >\mu_1$ any implementation of $t$ must be sophisticated-focused. From Lemma~\ref{lem:ttop} this also holds for smaller values of $t$. Thus, if $\mu >\mu_1$ any naive-focused continuation must use $t$ such that $F(t)>1-\epsilon_F/6$. Thus, from Lemma~\ref{lem:naiverev} the revenue of any naive-focused continuation satisfies
\begin{align*}
Rev(t,\mu)\leq R^*+\epsilon_F/6+2(1-\mu) && \forall \mu\in  (\mu_1,1] .   
\end{align*}

Finally, Lemma~\ref{lem:bluebound} implies that there is a $\mu_2\geq \mu(t^*)$ such that 
\begin{align*}
Rev(t^*,\mu) =R^*+\epsilon_F/2     &&\mu \in (\mu_2,1].
\end{align*}
Let $\mu_3 = \max \{\mu_1,\mu_2\}$. We conclude that for all $\mu>\hat \mu$ and all $t$ with naive-focused implementations:
\begin{align*}
 Rev(t^*,\mu)-Rev(t,\mu) >\epsilon_F/3 - 2(1-\mu) && \forall \mu\in (\mu_3,1).
\end{align*}
We conclude the proof by using 
$\hat \mu = \max(\mu_3, 1-\epsilon_F/6 )$.
\end{proof}

\section{Comparative Analyses}
\label{app:comparative}

In this appendix, we provide greater intuitive clarity on the causes of the seller's revenue gains from sophisticated buyers. 
We provide theoretical evidence that the reason the seller's revenue increases because removing naive buyers improves the seller's ability to bargain.
In other words, the presence of naive buyers gives sophisticated buyers leverage by making their demand reduction more credible.
This distorts the seller's prices; reducing the number of naive buyers reduces the distortion.

We provide three pieces of evidence for this interpretation.
In Sections~\ref{sec:firstorder} and \ref{sec:welfare}, we rule out two alternative explanations for the seller's increased revenue.
We show, respectively, that it is not the case that sophisticated buyers are intrinsically more lucrative, or that the equilibrium welfare increases as sophisticated buyers are added.
Then in Section~\ref{sec:stackelberg}, we give evidence that the seller's lack of commitment is crucial.
We analyze the same model, but give the seller the power to commit to a deterministic price schedule.
In these circumstances, adding sophisticated buyers only hurts the seller's revenue.

\subsection{Population-Driven Revenue Effects}
\label{sec:firstorder}

As the population sophistication $\mu$ increases, the seller's revenue change can be broken into two types.
{Population-driven effects} account for the revenue gain from new sophisticated buyers, and loss from losing naive buyers:
if one population gives more revenue than the other, on average, then changing how many of each show up will change the seller's revenue.
{Price-driven effects} account for the change in revenue from the existing populations of naive and sophisticated buyers.
As the populations change, so does the structure of the price schedule.

We show that population-driven effects can't possibly explain the effect in Theorem~\ref{thm:2rev}. 
In particular, we show that for general distributions that the seller makes weakly more revenue from a naive buyer, compared to a sophisticated buyer.
Hence, the phenomenon of Theorem~\ref{thm:2rev} is not caused by population-driven effects. 

\begin{theorem}
\label{thm:firstorder}
    For any sophistication level $\mu$ and corresponding PBE, the seller earns more revenue conditioned on the buyer being naive than conditioned on the buyer being sophisticated.
\end{theorem}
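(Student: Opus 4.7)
The plan is to fix any PBE with on-path first-round price $p_1$, threshold $t$, deterministic accept price $p_2^\acc$ (by Lemma~\ref{lem:p2adet}), and possibly randomized reject price $p_2^\rej$, and then argue the desired revenue comparison \emph{pointwise} in the buyer's value $v$. Because this section assumes $F^N = F^S = F$, any pointwise dominance in $v$ integrates immediately to the conditional revenue comparison in the theorem statement.

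I would partition the support of $v$ using $p_1 \le t$ (Lemma~\ref{lem:ordering}). For $v < p_1$ both types reject in round~1 and are then price-takers against the same random $p_2^\rej$, so realized revenues coincide exactly; for $v \ge t$ both types accept $p_1$ and are then price-takers against the same $p_2^\acc$, so revenues again coincide. The only interesting region is $v \in [p_1, t)$, where behavior diverges: the naive buyer accepts in round~1 and contributes $p_1 + p_2^\acc \mathbb{I}(v \ge p_2^\acc) \ge p_1$, while the sophisticated buyer rejects and contributes only the random amount $p_2^\rej \mathbb{I}(v \ge p_2^\rej)$ collected in round~2.

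The core estimate is to bound the sophisticated contribution on this middle region by $p_1$. The threshold indifference equation in Definition~\ref{def:continuation}, combined with $p_2^\rej \le t$ from Lemma~\ref{lem:ordering} and the fact that $p_2^\acc$ is deterministic, simplifies to $\ex{p_2^\rej} = p_1 - (t - p_2^\acc)^+ \le p_1$. Hence the expected sophisticated contribution on $[p_1, t)$ is at most $\ex{p_2^\rej \mathbb{I}(v \ge p_2^\rej)} \le \ex{p_2^\rej} \le p_1$, while the naive contribution is at least $p_1$. Combining the three regions gives pointwise weak dominance of naive over sophisticated revenue; integrating against $F$ yields the theorem.

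The main (mild) obstacle is essentially bookkeeping rather than a genuine technical difficulty: one has to recognize that the threshold equation is \emph{itself} essentially the revenue comparison being asked about—namely between paying $p_1$ now (the naive outcome) and rejecting to face the $p_2^\rej$ lottery later (the sophisticated outcome)—and then take care to pull the expectation over the seller's randomization outside the indicator before integrating over $v \sim F$.
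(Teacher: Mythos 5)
Your proof is correct and is essentially the paper's own argument: both hinge on deriving $p_1\geq \mathbb E[p_2^\rej]$ from the indifference equation together with the price ordering of Lemma~\ref{lem:ordering}, and both reduce the comparison to the region $v\in[p_1,t)$ where the two types' behavior diverges. The only difference is bookkeeping --- you condition on $v$ and compare region by region, while the paper writes out the aggregate revenues $R_n$ and $R_s$ and cancels terms --- so the content is the same.
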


\begin{proof}
Let $(p_1,p_2^\rej, p_2^\acc,t)$ be the on-path continuation.
From the buyer indifference equation and Lemma~\ref{lem:ordering}, it follows that $p_1-\mathbb E[p_2^\rej]=\mathbb E[(t-p_2^\acc)^+]\geq 0$, and hence that $p_1\geq \mathbb E[p_2^\rej]$. Now conditioned on the buyer being naive, the revenue is
$$
R_{n}=p_1(1-F(p_1))+\mathbb E[p_2^\acc(1-F(p_2^\acc))]+\mathbb E[p_2^\rej(F(p_1)-F(p_2^\rej))\mathbb I(p_2^\rej\leq p_1)]
$$
and conditioned on a sophisticated buyer, we have revenue
$$
R_{s}=p_1(1-F(t))+\mathbb E[p_2^\acc(1-F(p_2^\acc)\mathbb I(p_2^\acc\leq t)]+\mathbb E[p_2^\rej(F(t)-F(p_2^\rej))].
$$
Noting that $p_1\geq \mathbb E[p_2^\rej]$ and $\mathbb E[p_2^\acc(1-F(p_2^\acc))]\geq \mathbb E[p_2^\acc(1-F(p_2^\acc))\mathbb I(p_2^\acc\leq t)]$, we may lower bound the revenue difference $R_n-R_s$ as:
\begin{align*}
 &\mathbb E[p_2^\rej(F(t)-F(p_1))]+\mathbb E[p_2^\rej(F(p_1)-F(p_2^\rej))\mathbb I(p_2^\rej\leq p_1)]-\mathbb E[p_2^\rej(F(t)-F(p_2^\rej))]\\
 &=\mathbb E[p_2^\rej(F(p_2^\rej)-F(p_1))\mathbb I(p_2^\rej>p_1)]\geq 0.
\end{align*}
\end{proof}

\subsection{Welfare and Bargaining}
\label{sec:welfare}

Another alternative explanation for the seller's revenue gains near full sophistication is that the welfare might increase with the population sophistication $\mu$.
With greater surplus to split between the seller and buyers, it is possible for the seller to profit without bargaining better.
This turns out not to be the case for the linear-demand equilibrium of Section~\ref{sec:uniform}. 
For the region where revenue increases with $\mu$ ($\mu\geq 0.630...$), the welfare decreases with $\mu$.

In more detail, the seller's equilibrium revenue in the region $\mu\in [0.630...,1]$ is given by the formula:
\begin{equation}\label{eq:welfare}
\text{Welf}=(1-\mu)\left(\tfrac{1-p_1}{2}+\tfrac{1-p_2^\acc}{2}+\mathbb E[\tfrac{(p_1-p_2^\rej)^+}{2}]\right)+\mu\left(1-t+\tfrac{t-p_1}{2}\right).
\end{equation}
The first term comes from naive buyers, and the second from sophisticated buyers.
The first of the naive terms comes from the round $1$ welfare. The second and third come from the round $2$ welfare. 
The first of the sophisticated buyers terms comes from the buyers who purchase two units (who have value in the range $[t,1]$), and the second from buyers who purchase only one unit (using the fact that $\mathbb E[p_2^\rej]=p_1$.
Plugging the formulas from Appendix~\ref{app:uniform} into (\ref{eq:welfare}), we obtain:
\begin{align*}
    \text{Welf}(\mu)=\tfrac{6+9\sqrt\mu+6\mu+\mu^{3/2}}{8+16\sqrt\mu+14\mu+4\mu^{3/2}-2\mu^2}&\quad\quad\quad\mu\geq 0.630...,
\end{align*}
which is a decreasing function.

One interpretation of this result is that naive buyers make the Coasian equilibrium ``more Coasian.'' For $\mu\geq 0.630...$, the price structure resembles the $\mu=1$ case derived by \citet{dps19}: the seller offers low prices, and ends up giving the item for less than they would prefer, more often than they prefer.
Adding naive buyers to this equilibrium causes prices to go even lower, and causes the seller to sell the item even more often.

\subsection{Heterogeneous Buyers with Commitment}
\label{sec:stackelberg}

In this section, we prove that the revenue gain observed in Theorem~\ref{thm:2rev} is unique to the no-commitment setting.
We study a model where the seller has commitment power, but which is otherwise identical to the two-period model in the rest of the paper.
With commitment power, the seller's strategy can be summarized by a price schedule of $p_1$, $p_2^\rej$, and $p_2^\acc$.
Importantly, $p_2^\rej$ and $p_2^\acc$ no longer need to satisfy sequential rationality, and hence can be suboptimal given the seller's updated round-$2$ beliefs.
Given a choice of $p_1$, $p_2^\rej$, and $p_2^\acc$, naive buyers still behave as price-takers.
A sophisticated buyer's strategy depends on the relative values of the prices.
A sophisticated buyer may buy $0$, $1$, or $2$ units.
To buy $2$ units, they pay $p_1+p_2^\acc$.
To buy $1$ unit, they pay $\min(p_1,p_2^\rej)$.
Their strategy, then can be computed by comparing their utility from the three options, given by $0$, $v-\min(p_1,p_2^\rej)$, and $2v-p_1+p_2^\acc$, respectively. We will show that the seller earns less revenue as sophistication grows. To do so, we show that there is a price schedule with the natural ordering $p_2^\rej\leq p_1\leq p_2^\acc$. For prices ordered this way, the result then follows fairly straightforwardly. We first order $p_1$ and $p_2^\acc$, before pinning down the full ordering.

\begin{lemma}
\label{lem:p1&p2a}
There exists a revenue-optimal price schedule with $p_1 \leq p_2^\acc$.
\end{lemma}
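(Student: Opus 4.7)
The plan is to prove the lemma by a direct swap argument: given any price schedule $(p_1, p_2^\rej, p_2^\acc)$ with $p_1 > p_2^\acc$, I will show that the swapped schedule $S' = (p_2^\acc, p_2^\rej, p_1)$ yields at least as much revenue, thereby producing a revenue-optimal schedule satisfying $p_1' \leq p_2'^\acc$.

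The key structural observation underlying the argument is that sophisticated buyers' cost of acquiring both units, $p_1 + p_2^\acc$, is invariant under the swap, while their cost of a single unit changes from $\min(p_1, p_2^\rej)$ in $S$ to $\min(p_2^\acc, p_2^\rej)$ in $S'$. Moreover, since $p_1 > p_2^\acc$, the ``accept-then-reject'' strategy is strictly dominated in $S$ (it is beaten by AA when $v > p_2^\acc$ and by RR when $v \leq p_2^\acc$), so sophisticated buyers effectively choose among AA, RA, and RR, with corresponding revenues $p_1 + p_2^\acc$, $p_2^\rej$, and $0$.

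I would first handle the case $p_2^\rej \leq p_2^\acc$, where the argument is clean. In this case the sophisticated single-unit cost is $p_2^\rej$ under both $S$ and $S'$, so sophisticated behavior and revenue are identical. For naive buyers, those with $v \geq p_1$ pay the same total $p_1 + p_2^\acc$ in both schedules, those with $v < p_2^\acc$ behave identically, and those with $v \in [p_2^\acc, p_1)$ switch from paying $p_2^\rej$ via reject-then-accept in $S$ to paying $p_2^\acc \geq p_2^\rej$ via accept-then-reject in $S'$. Integrating against $F^N$ gives a nonnegative revenue gain of $(p_2^\acc - p_2^\rej)(F^N(p_1)-F^N(p_2^\acc))$.

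The more delicate case, and the main obstacle, is $p_2^\rej > p_2^\acc$: here the swap strictly lowers the sophisticated single-unit cost to $p_2^\acc$, which can reduce revenue from sophisticated buyers who continue to purchase one unit, while some previously unserved naive buyers with $v \in [p_2^\acc, \min(p_2^\rej, p_1))$ now accept period 1 at price $p_2^\acc$. My plan for this case is to first establish, by an analogous swap between $p_2^\rej$ and $p_2^\acc$, that there is a revenue-optimal schedule with $p_2^\rej \leq p_2^\acc$, reducing everything to the previous case. Carrying out the bookkeeping in this reduction — weighing the gain on naive buyers in $[p_2^\acc, p_2^\rej)$ against the potential loss from sophisticated single-unit buyers — is the technical core of the proof and is where the strict inequality $p_1 > p_2^\acc$ and the dominance of AR must be exploited most carefully.
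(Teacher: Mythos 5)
Your first case ($p_2^\rej \leq p_2^\acc < p_1$) is correct and is essentially the paper's Case 1: the swap of $p_1$ and $p_2^\acc$ preserves both sophisticated costs and weakly helps on naive buyers in $[p_2^\acc,p_1)$. The gap is in the case $p_2^\rej > p_2^\acc$, which you do not actually prove: you propose to reduce it to the first case via ``an analogous swap between $p_2^\rej$ and $p_2^\acc$,'' but that swap is not analogous and is not revenue-improving. What makes the $p_1\leftrightarrow p_2^\acc$ swap work is that it preserves the sophisticated two-unit cost $p_1+p_2^\acc$; swapping $p_2^\rej$ with $p_2^\acc$ changes the two-unit cost to $p_1+p_2^\rej$ and the one-unit cost to $\min(p_1,p_2^\acc)$, and revenue can fall sharply. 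Concretely, with $\mu=1$, $v\sim U[0,10]$ and $(p_1,p_2^\rej,p_2^\acc)=(10,5,1)$, the original schedule has costs $(c_1,c_2)=(5,11)$ and earns $11\cdot\tfrac{4}{10}+5\cdot\tfrac{1}{10}=4.9$, while the swapped schedule $(10,1,5)$ has costs $(1,15)$ and earns only $0.9$. So the reduction you lean on fails outright, and the ``bookkeeping'' you defer as the technical core is precisely the step that needs a different idea; the target statement ($p_2^\rej\leq p_2^\acc$ WLOG) happens to be true, but not by this route.

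The paper's fix for $p_2^\rej>p_2^\acc$ is a \emph{sum-preserving partial shift} rather than a reject/accept swap: when $p_2^\acc<p_2^\rej<p_1$, decrease $p_1$ and increase $p_2^\acc$ by equal amounts, stopping at $\max\bigl(p_2^\rej,(p_1+p_2^\acc)/2\bigr)$. Because $p_1$ never drops below $p_2^\rej$, the sophisticated one-unit cost stays at $p_2^\rej$ and the two-unit cost $p_1+p_2^\acc$ is unchanged, so sophisticated revenue is untouched, while lowering the round-$1$ price weakly raises naive revenue; if the shift terminates at $p_1=p_2^\rej$ one lands in the remaining subcase $p_2^\acc<p_1\leq p_2^\rej$, which the paper handles with the direct swap. (Even there the claim that sophisticated revenue is unchanged deserves care, since the one-unit cost falls from $p_1$ to $p_2^\acc$ under the swap; but that is the paper's burden, not yours.) The missing ingredient in your argument is this shift that moves $p_1$ and $p_2^\acc$ toward each other while freezing both sophisticated costs.
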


\begin{proof}
Consider a price schedule where $p_1>p_2^\acc$. 
We will modify the prices to produce a new schedule with weakly greater revenue.
We consider three cases, based on the location of $p_2^\rej$.

\textbf{Case 1:} $p_2^\rej \leq p_2^\acc < p_1$. 
Consider swapping $p_2^\acc$ and $p_1$.
We receive the same revenue from sophisticated buyers: the cost for $1$ and $2$ units remains the same, so their behavior remains unchanged. For na\"ive buyers, we receive weakly more revenue. If a na\"ive buyer rejected $p_1$, they may accept $p_2^A$ in the first round now. If they accepted $p_1$, they will definitely accept $p_2^A$ in the first round. Thus, in this case, we have shown a new equilibrium where $p_2^R \leq \hat{p}_1 < \hat{p}_2^A$ (note $\hat{p}_1 = p_2^A$ and $\hat{p}_2^A = p_1$ are the updated prices).\\
\textbf{Case 2:} In the case where $p_2^A < p_1 \leq p_2^R$, consider swapping $p_2^A$ and $p_1$ again. Observe that the seller still receives the same amount of revenue from the sophisticated buyers. For na\"ive buyers, we receive weakly more revenue. If a na\"ive buyer rejected $p_1$, they definitely rejected $p_2^R$, so we may obtain some revenue if they accept $p_2^A$ in the fisrt round. If they accepted $p_1$, they will definitely accept $p_2^A$ in the first round. Thus, in this case, we have shown a new equilibrium where $\hat{p}_1 < \hat{p}_2^A \leq p_2^R$ (note $\hat{p}_1 = p_2^A$ and $\hat{p}_2^A = p_1$ are the updated prices).\\
\textbf{Case 3:} In the case where $p_2^R \in (p_2^A, p_1)$ with $p_2^A < p_1$, consider the following transformation. Decrease $p_1$ until it reaches $p_2^R$ or $p_{avg} = \frac{p_1 + p_2^A}{2}$ (i.e., $p_1 = \max\{p_2^A, p_{avg}\}$). Increase $p_2^A$ by the same amount. The seller still obtains the same revenue from sophisticated buyers because $p_1$ never goes below $p_2^R$ and $p_{avg}$ remains the same. The seller obtains the same revenue if a na\"ive buyer rejects $p_1$. If the na\"ive buyer accepts $p_1$, we obtain weakly more revenue. To see this, let $\hat{p_1}$ and $\hat{p_2^A}$ be the updated prices. Let $R$ be the revenue curve. The seller previously obtained revenue $(p_1 + p_2^A)(1-F(p_1))$ from the na\"ive buyers. The seller now obtains revenue $(\hat{p}_1+\hat{p}_2^A)(1-F(\hat{p}_1))$ from the na\"ive buyer. Observe that $p_1 + p_2^A = \hat{p}_1+\hat{p}_2^A$ because $p_{avg}$ does not change. However, observe that $1-F(\hat{p}_1) \geq 1-F(p_1)$, so weakly more revenue. Thus, in this case we have shown a new equilibrium where $\hat{p}_1 \leq \hat{p}_2^A$.
\end{proof}
\begin{theorem}
\label{thm:priceorder}
There exists a revenue-optimal price schedule such that $p_2^R \leq p_1 \leq p_2^A$.
\end{theorem}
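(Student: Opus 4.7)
The plan is to use Lemma~\ref{lem:p1&p2a} to reduce to the case $p_1 \leq p_2^A$, and then show that whenever $p_2^R > p_1$ we can lower $p_2^R$ down to $p_1$ without changing the seller's revenue, yielding a schedule with the desired ordering. Start from an optimal schedule with $p_1 \leq p_2^A$ given by Lemma~\ref{lem:p1&p2a}; if $p_2^R \leq p_1$ already, we are done. Otherwise, we argue that in the regime $p_2^R > p_1 \leq p_2^A$, the value of $p_2^R$ is essentially payoff-irrelevant.

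The key structural observation is that $p_2^R$ extracts no revenue from either buyer type when $p_2^R > p_1$. For naive buyers, any buyer who rejected in round 1 has $v < p_1 < p_2^R$ and therefore rejects $p_2^R$ as well. For sophisticated buyers, each one chooses to purchase $0$, $1$, or $2$ units, at respective costs $0$, $\min(p_1, p_2^R) = p_1$, and $p_1 + p_2^A$. Because $p_1 < p_2^R$, any buyer who wants exactly one unit strictly prefers to buy it in round 1, so no sophisticated buyer accepts $p_2^R$ either.

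Now replace $p_2^R$ with $\hat p_2^R = p_1$. The naive round-2 rejection behavior is unchanged since the buyers with $v < p_1 = \hat p_2^R$ still reject. For sophisticated buyers, the cost of $1$ unit is still $p_1$ and the cost of $2$ units is still $p_1 + p_2^A$, so the chosen number of units is unchanged. The only new phenomenon is that a buyer who chooses to buy exactly one unit may now be indifferent between accepting round 1 and accepting round 2; however, in either case the seller collects exactly $p_1$, so the seller's revenue is unaffected. This yields an optimal schedule satisfying $\hat p_2^R \leq p_1 \leq p_2^A$, as desired.

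The main obstacle, minor but worth handling carefully, is the tie-breaking under the new schedule: when $\hat p_2^R = p_1$, sophisticated buyers with $v \in [p_1, p_2^A)$ are indifferent between buying in round 1 or round 2, so we must check that no mass of revenue evaporates no matter how indifferent buyers break their tie. Since the price collected in either case is $p_1$, the total payment from such a buyer is unchanged, which closes the argument. The whole proof is essentially a one-shot deformation argument, and contains no calculation beyond comparing payments across the three buyer-side options.
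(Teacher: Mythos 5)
Your proof is correct and follows essentially the same route as the paper's: start from the Lemma~\ref{lem:p1&p2a} schedule with $p_1 \leq p_2^A$, observe that when $p_2^R > p_1$ the reject price generates no revenue from either buyer type, and lower $p_2^R$ to $p_1$ without changing any buyer's payments. Your explicit handling of the tie-breaking when $\hat p_2^R = p_1$ is a small point the paper glosses over, but the argument is the same deformation.
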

\begin{proof}
We know from Lemma~\ref{lem:p1&p2a} that there exists an equilibrium such that $p_1 \leq p_2^A$. We want to show that it is also possible that $p_2^R \leq p_1$ in this equilibrium. We proceed by contradiction. Assume that this is not the case, i.e., in all equilibria such that $p_1 \leq p_2^A$, it must be the case that $p_2^R > p_1$. Consider a transformation where $p_2^R$ is decreased to $p_1$. The seller still receives the same revenue from na\"ive agents because if a na\"ive agent rejected $p_1$, they would have definitely reject $p_2^R$ previously, and if the na\"ive agent accepted $p_1$, they would never be offered price $p_2^R$. Therefore, decreasing $p_2^R$, cannot decrease the revenue. The seller also receives the same revenue from the sophisticated agents. A sophisticated agent may be interested in 0, 1, or 2 items. If the sophisticated agent previously wanted 0 items, they will still reject $p_2^R$ after it has been decreased to $p_1$. If the sophisticated agent previously wanted 1 item, the revenue generated for the seller will still be $p_1$. If the sophisticated agent previously wanted 2 items, they will still accept $p_1$ after $p_2^R$ has been decreased to $p_1$. Thus, for both agent types, the seller obtains the same revenue from decreasing $p_2^R$ to $p_1$. Thus we have shown the existence of an equilibrium such that $p_2^R \leq p_1 \leq p_2^A$.
\end{proof}


We will now attempt to determine the maximum revenue that can be generated with respect to $\mu$ for the optimal price schedule. Observe that the revenue generated is the following:
\begin{align*}
    \textsc{Rev}(\mu) = & \mu(p_2^R(F(t)-F(p_2^R))+(p_1+p_2^A)(1-F(t)))\\
    & + (1-\mu)(p_2^R(F(p_1)-F(p_2^R))+p_1(1-F(p_1))+p_2^A(1-F(p_2^A)))
\end{align*}
where $t=p_1+p_2^A-p_2^R$ (this follows from the fact that $t-p_2^R=2t-p_1+p_2^A$ which is when the utility of one item equals the utility of two items).

\begin{theorem}
The revenue generated in the seller's optimal price schedule decreases as $\mu$ increases.
\end{theorem}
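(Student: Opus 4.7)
The plan is to leverage Theorem~\ref{thm:priceorder} and show that, under the natural price ordering $p_2^R \le p_1 \le p_2^A$, the seller's per-capita revenue from a naive buyer weakly exceeds that from a sophisticated buyer for \emph{any} such price schedule (not just the optimum). Since $\text{Rev}(\mu)$ is a maximum over price schedules of the convex combination $\mu R_S + (1-\mu) R_N$, this pointwise inequality will let us run a one-line deviation argument: the schedule optimal for a larger $\mu$ can be reused for a smaller $\mu$, and it yields strictly more revenue there because the naive buyers (who generate more money) receive greater weight.

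First I would reduce to the ordered regime using Theorem~\ref{thm:priceorder}, which guarantees the existence of a revenue-optimal schedule with $p_2^R\le p_1\le p_2^A$, and note that the sophisticated threshold $t = p_1+p_2^A-p_2^R$ then satisfies $t\ge p_1$ and $t\ge p_2^A$ (the latter follows from $p_1\ge p_2^R$). Next I would expand
\begin{align*}
R_N - R_S &= \bigl[p_2^R(F(p_1)-F(p_2^R))+p_1(1-F(p_1))+p_2^A(1-F(p_2^A))\bigr] \\
&\quad - \bigl[p_2^R(F(t)-F(p_2^R))+(p_1+p_2^A)(1-F(t))\bigr]
\end{align*}
and collect terms to obtain
\begin{equation*}
R_N - R_S = (p_1 - p_2^R)\bigl[F(t)-F(p_1)\bigr] + p_2^A\bigl[F(t)-F(p_2^A)\bigr].
\end{equation*}
Each bracketed factor is nonnegative by monotonicity of $F$ combined with $t\ge p_1$ and $t\ge p_2^A$, while the prefactors $p_1-p_2^R$ and $p_2^A$ are nonnegative by the ordering. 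Hence $R_N\ge R_S$ for every feasible ordered schedule.

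Finally I would close the argument by a direct deviation. Fix $\mu_1 < \mu_2$ and let $(p_1^*, p_2^{R*}, p_2^{A*})$ be an optimal ordered schedule for $\mu_2$, with associated revenues $R_N^*$ and $R_S^*$. Then
\begin{equation*}
\text{Rev}(\mu_1) \ge \mu_1 R_S^* + (1-\mu_1)R_N^* = \text{Rev}(\mu_2) + (\mu_2-\mu_1)(R_N^* - R_S^*) \ge \text{Rev}(\mu_2),
\end{equation*}
since $R_N^*\ge R_S^*$ by the previous step. This gives monotonicity of $\text{Rev}(\mu)$ in $\mu$.

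The main obstacle is the identity for $R_N - R_S$ in the middle step: one has to group the $F(t)$ terms correctly so that the telescoping cancellations expose the two manifestly nonnegative pieces. Once that algebraic form is in hand, both the sign analysis and the envelope-style deviation argument are routine. A minor subtlety worth mentioning is that Theorem~\ref{thm:priceorder} is exactly what rules out the anomalous orderings (e.g., $p_1 < p_2^R$) that would complicate the form of $R_S$ and could in principle break the pointwise inequality $R_N\ge R_S$.
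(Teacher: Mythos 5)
Your proof is correct and follows the same overall strategy as the paper: reduce to an ordered schedule $p_2^R\leq p_1\leq p_2^A$ via Theorem~\ref{thm:priceorder}, show that the per-capita revenue from a naive buyer weakly dominates that from a sophisticated buyer, and conclude by reusing the schedule optimal for the larger $\mu$ at the smaller one. The only difference is in the middle step: the paper establishes $R_N\geq R_S$ by partitioning buyer values into the ranges below $p_2^R$, in $[p_2^R,p_1)$, in $[p_1,t]$, and above $t$ and comparing case by case, whereas you derive the single identity $R_N-R_S=(p_1-p_2^R)\bigl(F(t)-F(p_1)\bigr)+p_2^A\bigl(F(t)-F(p_2^A)\bigr)$ (which checks out, using $t=p_1+p_2^A-p_2^R\geq\max(p_1,p_2^A)$), a cleaner route that also makes explicit the final envelope-style deviation step the paper leaves implicit.
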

\begin{proof}
We will break the generated revenue into different cases, and show in all cases that the revenue generated by na\"ive agents is equal or greater than that of the sophisticated agents. Thus, this will imply that decreasing $\mu$ can only increase the revenue. We will assume we are in an equilibrium where $p_2^R \leq p_1 \leq p_2^A$ which is implied by Theorem~\ref{thm:priceorder}. If the value of either type of buyer is below $p_2^R$, the revenue generated will be 0 revenue. If the value of either type of buyer is below $p_1$, the revenue generated will be $p_2^R \cdot \Pr[p_2^R \leq v < p_1]$. If the value of either type of buyer is above $t$, the revenue generated will be $(p_1+p_2^A) \cdot \Pr[v > t]$. If the value of either type of buyer is between $p_1$ and $t$, the revenue generated by the different types of buyers will be different. In the case of the na\"ive buyer, we find that the revenue is $p_1 \cdot \Pr[p_1 \leq v \leq t]$, and in the case of the sophisticated buyer, we find that the revenue is $p_2^R \cdot \Pr[p_1 \leq v \leq t]$. Observe that the revenue of the na\"ive buyer is greater; therefore, the revenue increases as $\mu$ is decreased because we are placing more weight on the revenue of the na\"ive buyer.
 \end{proof}

\section{No-Learning Equilibrium}
\label{app:nolearning}

This appendix contains the strategies for the no-learning equilibrium of \citet{dps19} for the infinite-horizon setting.
The belief updates are such that if a buyer ever takes an off-path action (accepting $p>0$ or rejecting $p=0$), the beliefs update to a pointmass on $1$.
Otherwise, beliefs remain unchanged from round to round.
The precise description of the strategies we use is borrowed from \citet{ilpt17}.


\begin{algorithm}[H]
\label{alg:0RESell}
\SetKwInOut{Input}{Input}\SetKwInOut{Output}{Output}
\Input{Purchasing history $h^k$, belief support $[a,b]$.}
\Output{Price $p_{k+1}$}
\BlankLine
\uIf{Buyer has ever accepted a positive price}{$p_{k+1}=1$\;}
\uElseIf{
Buyer has ever rejected a price of $0$
}{$p_{k+1}=1$\;}
\Else{$p_{k+1}=0$\;}
\caption{Zero-Revenue Equilibrium - Seller's Strategy}
\end{algorithm}

\begin{algorithm}[H]
\label{alg:0REBuy}
\SetKwInOut{Input}{Input}\SetKwInOut{Output}{Output}
\Input{Purchasing history $h^k$, belief support $[a,b]$, value $v$, price $p_{k+1}$}
\Output{Purchasing decision for round $k+1$}
\BlankLine
\uIf{$p_{k+1}=0$}{
Accept\;}
\ElseIf{$p_{k+1}>0$}{
\uIf{Buyer has ever accepted a positive price}{Accept\;}
\uElseIf{Buyer has ever rejected a price of $0$}{Accept\;}
\Else{Reject\;}}
\caption{Zero-Revenue Equilibrium - Buyer's Strategy}
\end{algorithm}

\section{Full description of infinite horizon equilibrium with 3-valued support}
\label{app:infinite3sup}

This appendix gives a complete presentation of the buyer and seller strategies for the infinite-horizon equilibrium summarized in Section~\ref{sec:infeq}.
We consider a discrete distribution: we assume that both naive and sophisticated values are uniformly distributed over $V=\{1,10,20\}$. 
The mass $(1-\sophprob)$ of naive buyers is some small $\epsilon$, and we take the discount factor $\delta=2/3$. 

Equilibrium requires us to specify the buyer and seller strategies for every possible history of play, including those off the equilibrium path.
Play on the equilibrium path is already described in Section~\ref{sec:infeq}, and so we focus on giving the complete and well-organized description at all histories of play here.
The equilibrium we present has the feature that it is {\em Markovian}, meaning that the strategies only depend on the current beliefs, and the full history is not required to compute them.
Consequently, we may instead summarize histories by the beliefs they induce (given the strategies), then define the strategies as a function of the current beliefs.
This allows us to reduce the casework somewhat.
To this end, $N_t=\{\ell^N_t,\dots,\h^N_t\}$ be the updated support of naive buyer types according to the period-$t$ beliefs.
If $N_t=\emptyset$ we say that $h_t$ is \emph{inconsistent} with naive types.
Similarly, we use the sophisticated buyer's equilibrium strategy to define the updated support of sophisticated buyer types  $S_t=\{\ell^S_t,\dots,\h^S\}$. Since our buyer's strategy is deterministic, the tuple $(N_t,S_t)$ fully determines the seller's beliefs after history $h^t$; belief updates do not change the relative probabilities of the points remaining in the support.
Thus, we will describe the strategies of the seller and the sophisticated buyer as a function of $(N_t,S_t)$.

\begin{algorithm}
\DontPrintSemicolon
\KwIn{History: $h_t$}
\KwOut{Price $p$.}\;
Let $N_t=\{\ell^N_t,\dots,\h^N_t\}$ be the naive support according to $h_t$\;
Let $S_t=\{\ell^S_t,\dots,\h^S_t\}$ be the sophisticated support $h_t$\;
\;

\If{$N_t=\emptyset$ \tcp*{History inconsistent with naive. Only possible if $S_t=\{\ell^S_t\}$ 
}}{Set $p=\ell^S_t$ \tcp*{Offer price equal to only remaining sophisticated type}
}   
\ElseIf{$N_t=\{x\}$ }{Set $p=x$ \tcp*{Offer price equal to only remaining naive type}} 
\ElseIf{$N_t=\{10,20\}$ \tcp*{Either $S_t=\emptyset$ or $S_t=\{20\}$} 
}
{
\If{$S_t=\emptyset$   } {Set $p=20$ \tcp*{Optimal price for naive types}}
\Else{Set $p=10$ \tcp*{Sophisticated types reduce optimal price}}
}
\ElseIf{$N_t=\{1,10\}$ or $\{1,10,20\}$ \tcp*{Only possible if $S_t=\{1,10,20\}$} 
}{Set $p=2$ \tcp*{Optimal round $1$ price}}
\caption{{\sc Seller Strategy}}
\label{alg:seller3}
\end{algorithm}

\begin{algorithm}
\DontPrintSemicolon
\KwIn{History: $h_t$, value $v$}
\KwOut{Decision}\;
Let $N_t=\{\ell^N_t,\dots,\h^N_t\}$ be the naive support according to $h_t$\;
Let $S_t=\{\ell^S_t,\dots,\h^S_t\}$ be the sophisticated support $h_t$\;
\;

\If{$N_t=\emptyset$, $p\leq \ell^N_t$, or $N_t=\{20\}$  }{Accept if $v\geq p$
\tcp*{Buyer acts as price-taker}
}  
 
\ElseIf{$N_t=\{10\}$ or $\{10,20\}$ and $p>10$\tcp*{Either $S_t=\emptyset$ or $S_t=\{20\}$} 
}{Accept if $v\geq p+20$  
\tcp*{No type in $S_t$ accepts}
}
\ElseIf{$N_t=\{1,10,20\}$ and $p>1$ \tcp*{Only possible if $S_t=\{1,10,20\}$}}
{
{
\If{$p>10$  }{Accept if $v\geq p+\tfrac{26}{3}$ \tcp*{No type in $S_t$ accepts
}}
\If{$p\in (2,10]$}{Accept if $v\geq p+38$ \tcp*{No type in $S_t$ accepts 
}}

\ElseIf{$p\in (1,2]$}{Accept if $v\geq p+18$ \tcp*{Only type $v=20$ accepts
}}
}
}
\ElseIf{$N_t=\{1,10\}$ and $p>1$ \tcp*{Only possible if $S_t=\{1,10,20\}$}}
{
{
\If{$p>10$
}{Accept if $v\geq p+\tfrac{26}{3}$ \tcp*{No type in $S_t$ accepts}}
\ElseIf{$p\in (1,10]$}{Accept if $v\geq p+18$ \tcp*{Only $v=20$ accepts $p\in (1,2]$}}
}
}

\ElseIf{$N_t=\{1\}$ and $p>1$ \tcp*{Either $S_t=\{1,10,20\}$ or $S_t=\{1,10\}$} }
{
\If{$S_t=\{1,10,20\}$}{Accept if $v\geq p+38$\tcp*{No type in $S_t$ accepts}}
\ElseIf{$S_t=\{1,10\}$}{Accept if $v\geq p+18$\tcp*{No type in $S_t$ accepts}}
}

\caption{{\sc Buyer Strategy}}
\label{alg:buyer3}
\end{algorithm}

The seller's strategy can be summarized as follows: if there are zero naive types, the seller posts the bottom of the sophisticated support.
If there is one naive type remaining, the seller offers this type for the remainder of the game. (Strategies are such that this naive type will be below the sophisticated support, always.)
If there are multiple naive types and no sophisticated types, the seller engages in optimal price search over these buyers.
Otherwise, the seller offers a price of $2$, which separates the sophisticated types as described in Section~\ref{sec:infeq}. Full details can be found in Algorithm~\ref{alg:seller3}.

The buyer's strategy can be summarized as follows: The buyer with value $1$ is always a price taker. A buyer with value $10$ accepts only $p\leq 1$ unless the history has already eliminated $v=1$ from the naive support $N$, in which case the agent becomes a price taker. Buyer with value $v=20$ considers four cases: If $\ell^N_s=10,20$, then accepts if $p\leq l^N_s$. If both values $1$ and $10$ are in the naive support then the buyer accepts if $p\leq 2$. Otherwise, if the only value in the naive support is $N_t=\{1\}$, the buyer accepts if $p\leq 1$. Full details can be found in Algorithm~\ref{alg:buyer3}

\subsection*{Belief Updates}
Given history $h_t$ or equivalently $(N_t,S_t)$, price offered $p$ and a buyer decision, the seller needs to update their beliefs: $N_{t+1},S_{t+1}$ 
The beliefs will be updated using the Bayes rule whenever applicable; we omit an explicit description of these updates for brevity.
Below, we explicitly detail the updates for off-path actions, where all types currently supported in the beliefs are supposed to take the opposite action of what was observed.
In this case, we update beliefs to full sophistication ($N_{t+1}=\emptyset$), and update the sophisticated distribution to a pointmass as follows. Define below that $\h_t^S$ denotes the sophisticated upper bound and $\ell_t^S$ the upper bound at time $t$. Then:
\begin{itemize}
    \item If $S_t\neq\emptyset$:
   \begin{itemize}
       \item     If the buyer accepted $p$ but they shouldn't have set $S_{t+1}=\{\h^S_t\}$.
    \item If the buyer rejected  $p$, but they shouldn't accept set $S_{t+1}=\{\ell^S_t\}$.
   \end{itemize} 
   \item $S_t=\emptyset$: Set $S_{t+1}=\{20\}$.
\end{itemize}
Other belief updates are possible here; these were selected largely for simplicity.

\subsection*{Seller's Incentives}
The casework below shows that the seller is best responding at every reachable belief by playing the strategy prescribed in Algorithm~\ref{alg:seller3}.\\

\nind $\mathbf{N_t=\emptyset}$ \\ 
These beliefs only occur if the buyer took an off-equilibrium path action.
The seller has updated their beliefs to some $S_{t}=\{\ell^S_t\}=\{\h^S_t\}$ at some point. Since every buyer acts as price-taker in this event and setting $p=\ell^S_t$ is optimal.\\
    
    \nind $\mathbf{N_t=\{x\}}$ \\ Consider a price $p>x$, we argue that either $S_t=\emptyset$ or no type in $S_t$ accepts $p$.
    In either case, pricing $p=x$ is optimal.
    \begin{itemize}
        \item $N_t=\{20\}$.  $S_t=\emptyset$ since some $p'\in (10,20]$ was accepted previously. 
        \item $N_t=\{10\}$. A sophisticated type accepts if $v\geq p+20>20$.
        \item $N_t=\{1\}$ and $S_t=\{1,10,20\}$. A sophisticated agent accepts if $v>38+p>39$ .
        \item $N_t=\{1\}$ and $S_t=\{1,10\}$. A sophisticated agent accepts if $v\geq 18+p>19$.
        \end{itemize}
    Thus, either $S_t=\emptyset$ or all types in $S_t$ reject the price $p>x$. On the other hand every price $p\leq x$ will be accepted: the buyer acts as price-taker, and since $\ell^N_t =x \leq l^S_t$, every type in $S_t$ (if not-empty) will accept. \\
         
    \nind $\mathbf {N_t=\{10,20\}}$ and $\mathbf{S_t=\emptyset}$ \\  The price $p=20$ provides the optimal expected revenue, as $1/2\cdot 20\tfrac{1}{1-\delta}+ 1/2\cdot 10 \tfrac{\delta}{1-\delta}=40$. \\

    \nind  $\mathbf{N_t=\{10,20\}}$, and $\mathbf{S_t\neq\emptyset}$ \\
    Note that this is only possible if $S_t=\{20\}$: since $N_t=\{10,20\}$ then some price $p'\in (1,10]$ has been offered. In any naive-consistent history, sophisticated types $v=1,10$ must reject this price, and hence, $1,10\notin S_t$. 
    
    Now consider any price $p>10$. Since $20<p+20=t$, the sophisticated agent rejects. We just need to consider the revenue for $p=20$. The price is accepted with probability $\frac{1-\mu}{2}$. If the buyer accepts the seller offers $20$ forever, and if the buyer rejects the sellers offers $10$ forever. 
    The total revenue is $$ 20\frac{1-\mu}{2}\frac{1}{1-\delta}+ 10\frac{1+\mu}{2}\frac{\delta}{1-\delta}  =40-20\mu$$
    Price $p=10$ will always be accepted by all agents for total revenue $10\frac{1}{1-\delta}=30$. Thus if $\mu\geq 1/2$ the seller's optimal price is $p=10$.\\

    \nind $\mathbf{N_t=\{1,10\}}$ or $\mathbf{N_t=\{1,10,20\}}$ \\
    Either no previous price revealed information about the naive buyers, or a price was rejected in $(10,20]$. Since this is consistent with all sophisticated types we have that $S_t=\{1,10,20\}$. 
    
    Offering price $p=2$ is accepted by naive agents with values $10$ and $20$ (if available)  and sophisticated agents with value $v\geq 20$. The seller will keep offering $10$ after an accept, or $1$ after a reject. Since naive agents provide more revenue than sophisticated agents per capita, we can calculate a lower bound on discounted revenue by assuming $\mu=1$ (which holds for both cases, $N_t=\{1,10\}$ and $N_t=\{1,10,20\}$).
    \begin{equation*}
	   \tfrac{1}{3}\cdot(2+\tfrac{\delta}{1-\delta}\cdot 10)+\tfrac{2}         {3}\cdot\tfrac{\delta}{1-\delta}\cdot 1=\tfrac{26}{3}
    \end{equation*}
    We can compare the equilibrium revenue with other possible prices.
    \begin{itemize}
    \item Prices $p\in (1,2)$ is dominated by $p=2$, since the same types accept both prices.
    \item Prices $p>2$ are rejected by all sophisticated agents; all future prices will be equal to $1$ since we updated $N_t=\{1\}$. Total sophisticated revenue is $\frac{\delta}{1-\delta}=2$. We can upper bound the revenue contribution of naive agents by $20\frac{1}{1-\delta}=60$. Thus, the revenue of any price is at most $2\mu+(1-\mu)60$.
    \item If price  $p=1$ was optimal it would imply that posting $p_1$ forever is optimal for total revenue of $\frac{1}{1-\delta}=3$.
    
    \end{itemize}
    Clearly, the sophisticated revenue is uniquely maximized for $p=2$. Thus, for small $\epsilon =(1-\mu)$, setting $p=2$ is optimal.

\subsection*{Buyer Incentives}

Before proceeding with the case analysis, we observe some useful facts about the equilibrium:
\begin{enumerate}
    \item The equilibrium prices are always in $[\ell^N_t,\h^N_t]$ if $N_t\neq\emptyset$.
    \item If $N_t\neq \emptyset$ we have that $\ell^N_t\leq \ell^S_t$.
    \item The buyer utility for $N_t\neq\emptyset$ is at least   $\tfrac{\delta}{(1-\delta)}(v-\ell^N_t)=2(v-\ell^N_t)$:
  \begin{itemize}
      \item If $|N_t|=1$, the seller offers $p=\ell^N_k$ always and if the buyer accepts it they get $\frac{1}{1-\delta}=3(v-\ell^N_k)$.
      \item If $|N_t|= 2$ and the seller offers $p\in (\ell^N_t,\h^N_t]$, the buyer can reject $p$ and accept all future prices.  Since the seller sets $N_{t+1}=\{\ell^N_t\}$, all future pries are $p=\ell^N_t$. The total discounted utility is $\frac{\delta}{1-\delta}=2(v-\ell^N_k)$.
      \item If $N_t=\{1,10,20\}$; the seller offers $p=2$, rejecting the price guarantees utility $2(v-\ell^N_k)=2(v-1)$.
  \end{itemize}   
  \item If accepting a price of $p>\ell^N_t$ would raise all future prices to $p'\geq v$, then the sophisticated type $v$ prefers to reject $p$: the accept utility is $(v-p)$, whereas the buyer can reject and obtain $\delta 2(v-\ell^N_t)=\tfrac{4}{3}(v-\ell^N_t)>(v-p)$.
    \end{enumerate}
    We complete the analysis of the example equilibrium with case analysis showing that the buyer strategies best respond to the seller's strategy in equilibrium. \\




\nind $\mathbf{p\leq \ell^N_t}$

\nind Then we have the following sub-cases.
    \begin{itemize}
        \item If $\ell^N_t\geq 10$, then either  $S_t=\emptyset$ or $S_t=\{20\}$. If the price  $p<l^N_t$ is rejected the seller will only offer $p=20$ in the future. Thus, all types act as price-takers. 
        \item If $\ell^N_t=1$, it must be that $\ell^S_t=1$. Thus, if a price $p<1$ is rejected, the seller will offer $p'=1$ all future rounds. The total utility in this case is $(v-1)\frac{\delta}{1-\delta}=2v-2$. Note that if $\ell^N_t=1$, then if ever a price is rejected all future prices will be equal to $p'=1$. Thus, the buyer can always achieve a better payoff by accepting $p\leq 1$, which doesn't update the beliefs, then reject the next price. The total utility is $(v-p)+(v-1)\frac{\delta^2}{1-\delta}=\frac{7}{3}v-1-p\geq 2v-2$. Thus, if $v\geq p$ the sophisticated agent prefers to accept. 
  \\
    \end{itemize}

\nind$\mathbf{  N_t=\emptyset}$\\    For this to occur it must be that $|S_t|=1$, so the future prices are independent of the buyer's actions. Therefore price taking is optimal.\\ 
    
\nind $\mathbf{N_t=\{20\}}$ \\ Note that $S_t=\emptyset$; the seller will only offer $20$ in the future regardless of the buyer's actions. Thus any type acts as price-taker.\\

\nind $\mathbf{N_t=\{10\}}$ or $\mathbf{N_t=\{10,20\}}$ \\ Note that $S_t=\{20\}$ or $S_t=\emptyset$.
Accepting $p>10$ is either an off-path action or sets $N_{t+1}=\{20\}$. All future prices will be $20$. According to our observation, any type $v\in (1,20)$ prefers to reject. For any type $v>20$, the accept utility is $(v-p)+2(v-20)$, and the reject utility is $2(v-10)$. Thus, any agent accepts if and only if they have value $v\geq p+10>20$. \\

\nind $\mathbf{N_t=\{1,10,20\}}$ \\
Note that  $S_t=\{1,10,20\}$. This is the most complicated case, so we break up the analysis based on the range of prices.
\begin{itemize}
\item $\mathbf{p>10}$ \\
Accepting a price $p>10$ either sets $N_{t+1}=\{20\}$ and $S_{t+1}=\emptyset$ or sets $N_{t+1}=\emptyset$ and $S_{t+1}=\{20\}$ (if accepting is off-path). In both cases, all future prices will be $20$. Rejecting a price $p$  will set $N_{t+1}=\{1,10\}$; Thus, any type in $v\leq 20$ prefers to reject. For $v>20$, the accept utility is $(v-p)+2(v-20)=3v-p-40$. After rejecting the seller will offer prices similar to the round $1$ equilibrium; any $v>20$ obtains utility 
$\delta ((v-2)+2(v-10)=2v-\tfrac{44}{3}$.
Thus, the buyer wants to accept if and only if  $v\geq p+\tfrac{26}{3}>20$.

\item $\mathbf{p\in (2,10]}$\\
Accepting a price $p\in (2,10]$ sets $N_{t+1}=\{10,20\}$ and $S_{t+1}=\emptyset$. In the next round the seller will offer $20$ as long as it is accepted. If the seller observes a reject they will offer price $10$ forever. 
Clearly, all types $v\leq 10$ prefer to reject. For types $v>10$, we have the following options:
\begin{enumerate}
    \item  Accept the price $p$; reject the price  $20$, and accept price $10$ for all future rounds. The discounted utility is $v-p + \tfrac{4}{3}(v-10)$.
    \item Accept price $p$; forever. The discounted utility  is  $v-p + 2(v-20)$.
    \item Reject price $p$; accept price $1$ forever. The discounted utility  is  $2(v-1)$.
\end{enumerate}
Type $v=p+38$ is indifferent between $(3)$ and  $(2)$. Type $v=3p+28$ is indifferent between $(3)$ and $(1)$. Thus, the agent accepts if and only if $v\geq \max \{3p+34,p+38\}=p+38>20$ for $p>2$.

\item $\mathbf{p\in (1,2]}$\\
Accepting a price $p\in (1,2]$  sets $N_{t+1}=\{10,20\}$ and $S_{t+1}=\{20\}$, therefore all future prices will be $10$. Thus, any type $v\leq 10$ prefers to reject. For value $v>10$, the discounted accept utility is $(v-p)+2(v-10)$;  the discounted reject utility is $2(v-1)$. Thus, the buyer accepts if and only if $v\geq p+18>10$.\\
\end{itemize}

\nind $\mathbf{N_t=\{1,10\}}$ \\
Note that  $S_t=\{1,10,20\}$. 
\begin{itemize}
\item $\mathbf{p\in (1,10]}$\\
Accepting a price $p\in (1,10]$ sets $N_{t+1}=\{10\}$; all future prices will be $10$. Rejecting the price sets $N_{t+1}=\{1\}$; all future prices will be $1$. Similarly to before, the buyer accepts if and only if $v\geq p+18>10$.
\item $\mathbf{p>10}$\\
Accepting a price $p>10$ is off-path and sets $S_{t+1}=\{20\}$ and $N_{t+1}=\emptyset$; all future prices will be $20$. Rejecting a price $p$  will set $N_{t+1}=\{1\}$;  all future prices will be $1$. As we argued before, the buyer accepts if and only if $v\geq p+38>20$. \\
\end{itemize}

\nind$\mathbf{N_t=\{1\}}$ 
\begin{itemize}
    \item  $S_t=\{1,10,20\}$\\ Accepting $p>1$ is an off-path action that results in $N_{t+1}=\emptyset$ and $S_{t+1}=\{20\}$. All future prices will $20$.
   After rejecting the seller will offer prices similar to the round 1 equilibrium.  This is similar to earlier case: the buyer accepts if and only if $v\geq p+\tfrac{26}{3}>20$.
    
    \item  $S_t=\{1,10\}$ \\ Accepting $p>1$ is an off-path action that results in $N_{t+1}=\emptyset$ and $S_{t+1}=\{10\}$. All future prices will $10$. 
    If the buyer rejects, all future prices will be $1$. As we argued before, the buyer wants to accept if and only if $v\geq p+18>10$. 
\end{itemize}

\section{Proof of Lemma~\ref{lem:discreterevenue}}
\label{app:discreteinf}

	Let $p^*_S$ denote the monopoly price for $F^S$, maximizing $p(1-F^S_+(p))$.
	Consider the support point just below $(1-\delta)p^*_S$, given by $\sup\{p\in V~|~p\leq (1-\delta)p^*_S\}$.
	Considering such a $p$, we have:
	\begin{align}
		\max_{p\in V}p(1-F( p/(1-\delta)))&\geq p(1-F(p/(1-\delta)))\notag\\
		&\geq (1-\delta)p^*_S(1-F(p/(1-\delta)))-\Delta\label{eq:delta1}\\
		&\geq (1-\delta)p^*_S\prob[v>p/(1-\delta)]-\Delta\label{eq:delta2}\\
		&\geq (1-\delta)p^*_S\prob[v\geq p^*_S]-\Delta\notag\\
		&\geq (1-\delta)p^*_S\prob[\mathcal E_S]\prob[v\geq p^*_S~|~\mathcal E_S]-\Delta\notag\\
		&=(1-\delta)\sophprob p^*_S(1-F^S_+(p^*))-\Delta.\notag
	\end{align}
Lines (\ref{eq:delta1}) and (\ref{eq:delta2}) follow from the fact that $p$ is the support point just below $p^*$.
The rest follow from definitions or basic probability.

\section{Discussion of Equilibrium with Zero Naive Buyers}
\label{app:zeronaive}

This appendix briefly discusses the sense in which the strategies we describe in Section~\ref{sec:infeq} and Appendix~\ref{app:infinite3sup} are still an equilibrium even with the absence of naive buyers.
The standard method of describing a PBE in this game is to specify actions in terms of histories of play; that is, in terms of past prices and accept decisions.
To facilitate the extensive casework, the exposition of Appendix~\ref{app:infinite3sup} deviates from this convention and largely describes strategies in terms of the beliefs.
Nonetheless, it is possible to take this description and obtain the more conventional mapping from histories of play to actions.
The claim is that this mapping from histories to actions continues to be a PBE in the absence of naive buyers.

This claim can be verified in broad strokes in the following way: in the incentive analysis for both the seller and the sophisticated buyers, the distribution or fraction of naive buyers never came up except to note that as long as this latter fraction was sufficiently small, agents were best responding. 
Consequently, all these arguments hold even when the naive fraction is zero.
The only exception to this observation is the case where the path of play would eliminate sophisticated buyers entirely.
This occurs, for example, if $p_1 = 3$ is accepted.
With all naive buyers eliminated, this is now off-path, and requires us to set beliefs for the sophisticated buyers.
Moreover, when the naive buyers were present, they were uniformly distributed on $\{10,20\}$ at this history, so it was optimal for the seller to engage in price exploration beyond this point.
Rather than try to implement this price exploration with only sophisticated buyers, we observe that we may set the sophisticated beliefs to be a point mass on $20$ and have the seller offer a price of $20$ for the remainder of the game.
The buyer's best responses with this change remain identical.
Furthermore, there are alternative naive buyer distributions where this fixed price of $20$ was optimal even when naive buyers were present.
Hence, we can make the stronger statement that there exists a naive buyer distribution such that the equilibrium with naive buyers is still a PBE {\em exactly}, even with only sophisticated buyers.

\end{document}